\newif\ifarxiv
\newcommand{\citep}[1]{\cite{#1}}
\newcommand{\citet}[1]{\cite{#1}}
\newcommand{\repo}{https://github.com/ionathanch/TTBFL}
\newcommand{\lang}{TTBFL\@\xspace}
\newcommand{\titlebreak}{\texorpdfstring{\\}{}}
\newcommand{\ie}{\textit{i.e.}\@\xspace}
\newcommand{\eg}{\textit{e.g.}\@\xspace}
\newcommand{\ala}{\textit{\`a la}\@\xspace}
\newcommand{\apriori}{\textit{a priori}\@\xspace}
\newcommand{\thmref}[2]{%
  $\langle$\textnormal{\texttt{\href{\repo/tree/main/src/#1}{#1}:#2}}$\rangle$%
}
\title{Bounded First-Class Universe Levels \titlebreak in Dependent Type Theory}
\titlerunning{Bounded First-Class Universe Levels}
\authorrunning{J. Chan, S. Weirich}
\keywords{type theory, universes, universe polymorphism}
\author{Jonathan Chan}
  {University of Pennsylvania, Philadelphia, USA}
  {jcxz@seas.upenn.edu}
  {0000-0003-0830-3180}
  {}
\author{Stephanie Weirich}
  {University of Pennsylvania, Philadelphia, USA}
  {sweirich@seas.upenn.edu}
  {0000-0002-6756-9168}
  {}
\newcommand{\ottnt}[1]{\mathit{#1}}
\newcommand{\ottmv}[1]{\mathit{#1}}
\newcommand{\ottdrulename}[1]{\textsc{#1}}
    \newcommand{\gap}{\:}
    \newcommand{\kw}[1]{ \mathsf{#1} }
\begin{document}

\setlength{\abovedisplayskip}{0.25\baselineskip}
\setlength{\belowdisplayskip}{0.25\baselineskip}

\maketitle

\begin{abstract}
  In dependent type theory,
  being able to refer to a type universe as a term itself increases its expressive power,
  but requires mechanisms in place to prevent Girard's paradox
  from introducing logical inconsistency in the presence of type-in-type.
  The simplest mechanism is a hierarchy of universes indexed by a sequence of levels,
  typically the naturals.
  To improve reusability of definitions, they can be made level polymorphic,
  abstracting over level variables and adding a notion of level expressions.
  For even more expressive power,
  level expressions can be made first-class as terms themselves,
  and level polymorphism is subsumed by dependent functions quantifying over levels.
  Furthermore, bounded level polymorphism provides more expressivity
  by being able to explicitly state constraints on level variables.
  While semantics for first-class levels with constraints are known,
  syntax and typing rules have not been explicitly written down.
  Yet pinning down a well-behaved syntax is not trivial;
  there exist prior type theories with bounded level polymorphism
  that fail to satisfy subject reduction.
  In this work, we design an explicit syntax for
  a type theory with bounded first-class levels,
  parametrized over arbitrary well-founded sets of levels.
  We prove the metatheoretic properties of subject reduction,
  type safety, consistency, and canonicity,
  entirely mechanized from syntax to semantics in Lean.
\end{abstract}

\section{Introduction}

Dependent type theories are common foundations for proof assistants,
where theorems are manipulated as types and their proofs as terms.
Types are often treated as terms themselves,
providing a uniform mechanism for working with both;
for example, quantifying over predicates is no different from quantifying over functions,
as predicates are functions that return types.
To merge types and terms, we need a type of types, or a \emph{universe},
which itself must be a term with a type.

Girard~\citep{systemf} showed that a type-in-type axiom makes dependent type theory logically inconsistent:
if the type of a universe is itself, then all types are inhabited,
rendering the type theory useless as a tool for proving.
Therefore, Martin-L\"of stratified the universe in his type theory (MLTT)~\citep{mltt}
into a countably infinite hierarchy of universes
$ \kw{U}_{  0  }  :  \kw{U}_{  1  }  :  \kw{U}_{  2  }  : \dots$
indexed by \emph{universe levels} spanning the naturals.
Many contemporary proof assistants based on dependent types feature such a hierarchy,
such as Rocq~\citep{rocq}, Agda~\citep{agda}, Lean~\citep{lean}, and F$^\star$~\citep{fstar}.

Having only a concrete universe hierarchy, however,
limits the reusability of definitions that are not inherently tied to particular universe levels.
For example, the identity function $\kw{id} :   \Pi  A  \mathbin{:}   \kw{U}_{  \mathit{ i }  }   \mathpunct{.}   \mathit{ A }    \to   \mathit{ A }  $
would need to be redefined for each universe level $\ottmv{i}$ at which it is needed.
Universe level polymorphism addresses this issue by abstracting over level variables,
used to index universes alongside concrete levels.
Its simplest form is prenex level polymorphism,
introduced by Harper and Pollack~\citep{anon-univ},
which restricts the abstraction to top-level definitions.
Courant~\citep{explicit} extends their implicit system to an explicit system
with (in)equality constraints, level operators, and level expressions.
This extension is implemented in Rocq~\citep{univ-poly-coq}.

If we disallow recursive definitions that vary in the level,
uses of prenex-polymorphic definitions
can be specialized to level-monomorphic terms.
Favonia, Angiuli, and Mullanix note that it
``is as consistent as standard (monomorphic) type theory [\dots]
because any given proof can only mention finitely many universes'',
and show consistency using this idea~\citep{displacement}.

If level quantification is added as a type former directly to the type theory,
we obtain higher-rank level polymorphism,
where level-polymorphic terms can be passed as arguments to functions.
For instance, Bezem, Coquand, Dybjer, and Escard\'o
introduce such a type theory (referred to here as BCDE)
with level constraints~\citep{univ-poly}.
Going further, rather than keeping universe levels distinct from terms,
we can make them first class by defining level expressions as a subset of terms,
and add a type of levels;
such levels are found in Agda.
Level quantification is subsumed by dependent functions whose domain is this level type.
The codomain can also be the level type,
which describes functions that compute levels.

First-class universe levels are known to be logically consistent.
In particular, Kov\'{a}cs~\citep{gen-univ} gives a semantic model for a type theory TTFL,
which features first-class levels and an ordering relation $<$ on them.
The model is given as categories with families (cwfs)~\citep{cwf},
mostly mechanized in Agda using induction--recursion,
and supports features such as level constraints,
maxima of levels, and induction on levels.

The syntax of TTFL is considered to be the initial model in the category of cwfs,
but an explicit syntax and typing rules are not given,
and proving initiality even for MLTT is a colossal task~\citep{initiality}.
Furthermore, while a syntax may satisfy semantic properties such as logical consistency,
it may not necessarily satisfy desirable syntactic properties.
In particular, BCDE's semantics can conceivably be viewed as
that of TTFL without making levels first class,
yet its syntax fails to satisfy subject reduction.

In this work, we give an end-to-end account of first-class levels in type theory,
beginning with an explicit syntax and typing rules,
and proving that they satisfy desirable metatheoretic properties.
Our contributions are as follows:

\begin{itemize}[topsep=0pt]
  \item We present \textbf{\lang},
    a dependent type theory with bounded, first-class universe levels.
    Our bounds differ from level constraints in that
    they are inherent to the type of a level,
    rather than a separate predicate on them,
    which prevents failure of subject reduction.
    Examples in the next section build up from monomorphic levels to level polymorphism
    before we proceed to the formal definition of the type theory in \cref{sec:ttbfl}.
  \item We prove subject reduction (\ie preservation) in \cref{sec:safety},
    an improvement upon the metatheoretic properties of BCDE.
    We also prove progress and thus type safety,
    which is important if we also want to use the language for writing programs that evaluate.
    An example is implementing proof assistants in themselves,
    as is (partially) done in Lean and undergoing work for Rocq~\citep{coq-in-coq}.
  \item Using a syntactic logical relation,
    we prove logical consistency and canonicity
    via the fundamental soundness theorem in \cref{sec:lr}.
    Consistency ensures that the type theory is suitable
    as a basis for logical reasoning in a proof assistant,
    while canonicity ensures that closed terms evaluate to the values we expect.
    Normalization of open terms remains an open problem (\cref{sec:normalization}).
\end{itemize}
All results are mechanized in Lean.
The development consists of under 1600 lines of code,
which can be found in the supplementary materials at \url{\repo}.
The definitions and theorems in this paper
are hyperlinked to the corresponding Lean files.

As our system is intentionally very minimal,
we discuss some further extensions in \cref{sec:extensions},
including level operators and subtyping.
We conclude with future work in \cref{sec:conclusion}.

\section{Motivation}

To motivate the range of features in \lang,
we look at examples starting from monomorphic universe levels
and build up to first-class levels and bounding in this section.
Although not found in our minimal language,
these examples use dependent pairs, propositional equality, the naturals, and lists
for more illuminating examples.

Let us start by revisiting the identity function and its type,
supposing $ \kw{U}_{  0  }  :  \kw{U}_{  1  }  : \dots  \kw{U}_{  \omega  } $,
with a limit universe $ \omega $ at the top,
which will come in handy later.
\begin{align*}
   \kw{ Id }  &:  \kw{U}_{  1  } 
  & \kw{ id }  &:  \kw{ Id }  \\
   \kw{ Id }  &\coloneqq   \Pi  A  \mathbin{:}   \kw{U}_{  0  }   \mathpunct{.}   \mathit{ A }    \to   \mathit{ A }  
  &  \kw{ id }  &\coloneqq  \lambda  A  \mathbin{:}   \kw{U}_{  0  }   \mathpunct{.}   \lambda  x  \mathbin{:}   \mathit{ A }   \mathpunct{.}   \mathit{ x }   
\end{align*}

This identity function is polymorphic over types in $ \kw{U}_{  0  } $,
but not over universes, so the self application
$   \kw{ id }   \gap   \kw{ Id }    \gap   \kw{ id }  $ is ill typed.
More generally, if we want to reuse a definition at different universe levels,
it would need to be redefined for every level needed.
If we introduce prenex polymorphism of universe levels,
where top-level definitions are permitted to be polymorphic,
we can write a universe polymorphic identity function
that can be instantiated at different levels and self-applied.
\begin{align*}
   \kw{ Id }  &:  \forall  i  \mathpunct{.}   \kw{U}_{    \mathit{ i }   + 1   }  
  & \kw{ id }  &:   \forall  i  \mathpunct{.}   \kw{ Id }    \gap [   \mathit{ i }   ]  \\
   \kw{ Id }  &\coloneqq   \Lambda  i  \mathpunct{.}   \Pi  A  \mathbin{:}   \kw{U}_{  \mathit{ i }  }   \mathpunct{.}   \mathit{ A }     \to   \mathit{ A }  
  & \kw{ id }  &\coloneqq  \Lambda  i  \mathpunct{.}   \lambda  A  \mathbin{:}   \kw{U}_{  \mathit{ i }  }   \mathpunct{.}   \lambda  x  \mathbin{:}   \mathit{ A }   \mathpunct{.}   \mathit{ x }    
\end{align*}

Now, the expression $    \kw{ id }   \gap [   1   ]   \gap   (   \kw{ Id }   \gap [   0   ]  )    \gap   (   \kw{ id }   \gap [   0   ]  )  $ is well typed.
A definition can also be polymorphic over multiple levels,
such as the constant function that takes two arguments but always returns the first.
For this, we need a binary least upper bound operator $  \relax   \sqcup   \relax  $ on levels.
\begin{align*}
   \kw{ Const }  &:  \forall  i  \mathpunct{.}   \forall  j  \mathpunct{.}   \kw{U}_{    (   \mathit{ i }   \sqcup   \mathit{ j }   )   + 1   }   
  & \kw{ const }  &:    \forall  i  \mathpunct{.}   \forall  j  \mathpunct{.}   \kw{ Const }     \gap [   \mathit{ i }   ]   \gap [   \mathit{ j }   ]  \\
   \kw{ Const }  &\coloneqq    \Lambda  i  \mathpunct{.}   \Lambda  j  \mathpunct{.}   \Pi  A  \mathbin{:}   \kw{U}_{  \mathit{ i }  }   \mathpunct{.}   \Pi  B  \mathbin{:}   \kw{U}_{  \mathit{ j }  }   \mathpunct{.}   \mathit{ A }       \to   \mathit{ B }    \to   \mathit{ A }  
  & \kw{ const }  &\coloneqq  \Lambda  i  \mathpunct{.}   \Lambda  j  \mathpunct{.}   \lambda  A  \mathpunct{.}   \lambda  B  \mathpunct{.}   \lambda  x  \mathpunct{.}   \lambda  y  \mathpunct{.}   \mathit{ x }       
\end{align*}

The universe in which $   \kw{ Const }   \gap [   \ottmv{i}   ]   \gap [   \ottmv{j}   ] $ lives is $  (   \ottmv{i}   \sqcup   \ottmv{j}   )   + 1 $,
because its universe must contain the universes
$ \kw{U}_{  \ottmv{i}  } $ and $ \kw{U}_{  \ottmv{j}  } $ over which it quantifies.
As more level variables get involved,
the algebraic expressions on levels becomes increasingly complex.
But the precise universe in which $ \kw{ Const } $ lives is not as important
as knowing that it lives in \emph{some} greater universe,
which is all that is needed to prevent type-in-type inconsistencies.
This can be expressed by bounded level quantification,
simplifying level expressions at the cost of an additional level variable.
We use the limit level $ \omega $ to allow $\ottnt{k}$ to range over all other levels.
\begin{align*}
   \kw{ Const }  &:  \forall  k  <   \omega   \mathpunct{.}   \forall  i  <   \mathit{ k }   \mathpunct{.}   \forall  j  <   \mathit{ k }   \mathpunct{.}   \kw{U}_{  \mathit{ k }  }     \\
   \kw{ Const }  &\coloneqq    \Lambda  k  \mathpunct{.}   \Lambda  i  \mathpunct{.}   \Lambda  j  \mathpunct{.}   \Pi  A  \mathbin{:}   \kw{U}_{  \mathit{ i }  }   \mathpunct{.}   \Pi  B  \mathbin{:}   \kw{U}_{  \mathit{ j }  }   \mathpunct{.}   \mathit{ A }        \to   \mathit{ B }    \to   \mathit{ A }  
\end{align*}

While nonrecursive prenex level polymorphism can be monomorphized away,
this is not the case once we introduce recursive definitions
whose recursive calls may vary in the level.
This lets us define universes with levels incremented by fixed amount,
\ie $ \kw{U}_{    \mathit{ k }   +   \mathit{ n }    } $.
\begin{align*}
  & \kw{ incr }  :   \forall  k  <   \omega   \mathpunct{.}   \kw{ Nat }    \to   \kw{U}_{  \omega  }   \\
  &   \kw{ incr }   \gap  \ottnt{k}   \gap   \kw{ zero }   \coloneqq  \kw{U}_{  \mathit{ k }  }  \\
  &   \kw{ incr }   \gap  \ottnt{k}   \gap   (   \kw{ succ }   \gap  \ottnt{n}  )   \coloneqq    \kw{ incr }   \gap  \ottnt{n}   \gap [   \ottnt{k}  + 1   ] 
\end{align*}

Generalizing from prenex level polymorphism to higher-rank level polymorphism affords even more reusability.
One application is when axioms are explicitly assumed
as local hypotheses instead of globally axiomatized
to restrict their usage to only where they are really needed.
An example is function extensionality,
whose type is level polymorphic.
\begin{align*}
   \kw{ FunExt }  &:  \forall  k  <   \omega   \mathpunct{.}   \forall  i  <   \mathit{ k }   \mathpunct{.}   \forall  j  <   \mathit{ k }   \mathpunct{.}   \kw{U}_{  \mathit{ k }  }     \\
   \kw{ FunExt }  &\coloneqq  \Lambda  k  \mathpunct{.}   \Lambda  i  \mathpunct{.}   \Lambda  j  \mathpunct{.}   \Pi  A  \mathbin{:}   \kw{U}_{  \mathit{ i }  }   \mathpunct{.}   \Pi  B  \mathbin{:}   (   \mathit{ A }   \to   \kw{U}_{  \mathit{ j }  }   )   \mathpunct{.}   \relax       \\
  &\phantom{{} \coloneqq {}}   \Pi  f  \mathbin{:}   (   \Pi  x  \mathbin{:}   \mathit{ A }   \mathpunct{.}   \mathit{ B }    \gap   \mathit{ x }   )   \mathpunct{.}   \Pi  g  \mathbin{:}   (   \Pi  x  \mathbin{:}   \mathit{ A }   \mathpunct{.}   \mathit{ B }    \gap   \mathit{ x }   )   \mathpunct{.}   (     \Pi  x  \mathbin{:}   \mathit{ A }   \mathpunct{.}   \mathit{ f }    \gap   \mathit{ x }    =   \mathit{ g }    \gap   \mathit{ x }   )     \to     \mathit{ f }   =   \mathit{ \ottmv{g} }    
\end{align*}

Suppose we wished to prove that function extensionality for
functions with two arguments at different universe levels follows from assuming $ \kw{ FunExt } $.
Using only prenex polymorphism, we would need two separate instantiations,
once for its application to the functions of type $ \Pi  \ottmv{x}  \mathbin{:}  \ottnt{A}  \mathpunct{.}     \ottnt{B}  \gap   \mathit{ \ottmv{x} }    \to  \ottnt{C}   \gap   \mathit{ \ottmv{x} }   $,
and once for its application to the functions of type $   \ottnt{B}  \gap   \mathit{ \ottmv{x} }    \to  \ottnt{C}   \gap   \mathit{ \ottmv{x} }  $.
\begin{align*}
   \kw{ lemma }  &:    \forall  l  <   \omega   \mathpunct{.}   \forall  i  <   \mathit{ l }   \mathpunct{.}   \forall  j  <   \mathit{ l }   \mathpunct{.}   \forall  k  <   \mathit{ l }   \mathpunct{.}   (     \kw{ FunExt }   \gap [   \mathit{ l }   ]   \gap [   \mathit{ i }   ]   \gap [    \mathit{ j }   \sqcup   \mathit{ k }    ]  )       \to   (     \kw{ Funext }   \gap [   \mathit{ l }   ]   \gap [   \mathit{ j }   ]   \gap [   \mathit{ k }   ]  )    \to   \relax   \\
  &\phantom{{} : {}}  \Pi  A  \mathbin{:}   \kw{U}_{  \mathit{ i }  }   \mathpunct{.}   \Pi  B  \mathbin{:}   (   \mathit{ A }   \to   \kw{U}_{  \mathit{ j }  }   )   \mathpunct{.}   \Pi  C  \mathbin{:}   (   \mathit{ A }   \to   \kw{U}_{  \mathit{ k }  }   )   \mathpunct{.}   \relax     \\
  &\phantom{{} : {}}  \Pi  f  \mathbin{:}   (     \Pi  x  \mathbin{:}   \mathit{ A }   \mathpunct{.}   \mathit{ B }    \gap   \mathit{ x }    \to   \mathit{ C }    \gap   \mathit{ x }   )   \mathpunct{.}   \Pi  g  \mathbin{:}   (     \Pi  x  \mathbin{:}   \mathit{ A }   \mathpunct{.}   \mathit{ B }    \gap   \mathit{ x }    \to   \mathit{ C }    \gap   \mathit{ x }   )   \mathpunct{.}   \relax    \\
  &\phantom{{} : {}}   (       \Pi  x  \mathbin{:}   \mathit{ A }   \mathpunct{.}   \Pi  y  \mathbin{:}    \mathit{ B }   \gap   \mathit{ x }    \mathpunct{.}   \mathit{ f }     \gap   \mathit{ x }    \gap   \mathit{ y }    =   \mathit{ g }    \gap   \mathit{ x }    \gap   \mathit{ y }   )   \to     \mathit{ f }   =   \mathit{ \ottmv{g} }     \\
   \kw{ lemma }  &\coloneqq  \Lambda  l  \mathpunct{.}   \Lambda  i  \mathpunct{.}   \Lambda  j  \mathpunct{.}   \Lambda  k  \mathpunct{.}   \lambda  fe1  \mathpunct{.}   \lambda  fe2  \mathpunct{.}   \relax        \dots
\end{align*}

Once more universe levels get involved,
instantiating up front every possible use becomes unwieldy.
With higher-rank polymorphism,
we can quantify over a polymorphic function extensionality principle once and for all,
and instantiate its levels within the proof as needed.
\begin{align*}
   \kw{ lemma }  &:   (     \forall  k  <   \omega   \mathpunct{.}   \forall  i  <   \mathit{ k }   \mathpunct{.}   \forall  j  <   \mathit{ k }   \mathpunct{.}   \kw{ FunExt }      \gap [   \mathit{ i }   ]   \gap [   \mathit{ j }   ]   \gap [   \mathit{ k }   ]  )   \to   \relax   \\
  &\phantom{{} : {}}  \forall  l  <   \omega   \mathpunct{.}   \forall  i  <   \mathit{ l }   \mathpunct{.}   \forall  j  <   \mathit{ l }   \mathpunct{.}   \forall  k  <   \mathit{ l }   \mathpunct{.}   \Pi  A  \mathbin{:}   \kw{U}_{  \mathit{ i }  }   \mathpunct{.}   \Pi  B  \mathbin{:}   (   \mathit{ A }   \to   \kw{U}_{  \mathit{ j }  }   )   \mathpunct{.}   \Pi  C  \mathbin{:}   (   \mathit{ A }   \to   \kw{U}_{  \mathit{ k }  }   )   \mathpunct{.}   \relax         \\
  &\phantom{{} : {}}  \Pi  f  \mathbin{:}   (     \Pi  x  \mathbin{:}   \mathit{ A }   \mathpunct{.}   \mathit{ B }    \gap   \mathit{ x }    \to   \mathit{ C }    \gap   \mathit{ x }   )   \mathpunct{.}   \Pi  g  \mathbin{:}   (     \Pi  x  \mathbin{:}   \mathit{ A }   \mathpunct{.}   \mathit{ B }    \gap   \mathit{ x }    \to   \mathit{ C }    \gap   \mathit{ x }   )   \mathpunct{.}   \relax    \\
  &\phantom{{} : {}}   (       \Pi  x  \mathbin{:}   \mathit{ A }   \mathpunct{.}   \Pi  y  \mathbin{:}    \mathit{ B }   \gap   \mathit{ x }    \mathpunct{.}   \mathit{ f }     \gap   \mathit{ x }    \gap   \mathit{ y }    =   \mathit{ g }    \gap   \mathit{ x }    \gap   \mathit{ y }   )   \to     \mathit{ f }   =   \mathit{ \ottmv{g} }     \\
   \kw{ lemma }  &\coloneqq  \lambda  fe  \mathpunct{.}   \Lambda  l  \mathpunct{.}   \Lambda  i  \mathpunct{.}   \Lambda  j  \mathpunct{.}   \Lambda  k  \mathpunct{.}   \relax       \dots
\end{align*}

With higher-rank level polymorphism,
a level-polymorphic type itself must live in some universe,
which is often that of the bounding level.
Coming back to the identity function,
we can impose a bound on its level by bounded quantification,
and use the bound for the universe.
Self-applications such as $     \kw{ id }   \gap [   2   ]   \gap [   1   ]   \gap   (   \kw{ Id }   \gap [   1   ]  )    \gap   (   \kw{ id }   \gap [   1   ]  )  $ still hold.
\begin{align*}
   \kw{ Id }  &:  \forall  j  <   \omega   \mathpunct{.}   \kw{U}_{  \mathit{ j }  }   &
   \kw{ id }  &:   \forall  j  <   \omega   \mathpunct{.}   \kw{ Id }    \gap [   \mathit{ j }   ]  \\
   \kw{ Id }  &\coloneqq   \Lambda  j  \mathpunct{.}   \forall  i  <   \mathit{ j }   \mathpunct{.}   \Pi  A  \mathbin{:}   \kw{U}_{  \mathit{ i }  }   \mathpunct{.}   \mathit{ A }      \to   \mathit{ A }   &
   \kw{ id }  &\coloneqq  \Lambda  j  \mathpunct{.}   \Lambda  i  \mathpunct{.}   \lambda  A  \mathbin{:}   \kw{U}_{  \mathit{ i }  }   \mathpunct{.}   \lambda  x  \mathbin{:}   \mathit{ A }   \mathpunct{.}   \mathit{ x }     
\end{align*}

So far, our notions of level polymorphism treat levels as syntactically separate from terms,
with special level operators $\cdot   \relax   + 1 $ and $\cdot   \relax   \sqcup   \relax   \cdot$.
Consequently, if we want more general ways to compute level expressions,
we must add them as primitives to the language.
If we instead make levels first class,
we are then able to manipulate and store them as terms.
Bounded level quantifications are subsumed by ordinary dependent types
whose domain is the type of all levels bounded by some strictly greater level $ \kw{Level}\texttt{<} \gap  \ottnt{k} $.
An example application is computing the least upper bound level
from a list of levels and types of that level.
\begin{align*}
   \kw{ lub }  &:    \kw{ List }   \gap   (  \Sigma  i  \mathbin{:}   \kw{Level}\texttt{<} \gap   \omega    \mathpunct{.}   \kw{U} \gap   \mathit{ i }    )    \to   \kw{Level}\texttt{<} \gap   \omega    \\
   \kw{ lub }  &\gap  \kw{ nil }  \coloneqq 0 \\
   \kw{ lub }  &\gap  (    \kw{ cons }   \gap   (   \ottmv{i}   ,  \ottnt{A}  )    \gap   \mathit{ As }   )  \coloneqq   \ottmv{i}   \sqcup   (   \kw{ lub }   \gap   \mathit{ As }   )  
\end{align*}

This level computation can be used to turn a list of types and their levels
into an n-ary tuple with a precise level.
This is a technique used, for instance, by Escot and Cockx in generic programming
to represent level-polymorphic inductive types~\citep{generic}.
\begin{align*}
   \kw{ Interp }  &:  \Pi  As  \mathbin{:}    \kw{ List }   \gap   (  \Sigma  i  \mathbin{:}   \kw{Level}\texttt{<} \gap   \omega    \mathpunct{.}   \kw{U} \gap   \mathit{ i }    )    \mathpunct{.}   \kw{U} \gap   (   \kw{ lub }   \gap   \mathit{ As }   )    \\
   \kw{ Interp }  &\gap  \kw{ nil }  \coloneqq  \top  \\
   \kw{ Interp }  &\gap  (    \kw{ cons }   \gap   (   \ottmv{i}   ,  \ottnt{A}  )    \gap   \mathit{ As }   )  \coloneqq  \ottnt{A}  \times   (   \kw{ Interp }   \gap   \mathit{ As }   )  
\end{align*}

Various proof assistants with universe level polymorphism implement different subsets of these features.
Lean and F$^\star$ have prenex polymorphism with successor and least upper bound operators.
Rocq has prenex polymorphism along with level (in)equality declarations,
but no other operators.
Agda has first-class levels and the two level operator, but no level constraints.
In \lang, we include bounded first-class levels,
but omit the two level operators for simplicity,
opting to treat them as straightforward potential extensions.

\section{A minimal type theory with bounded first-class universe levels} \label{sec:ttbfl}

\lang is a Church-style type theory \ala Russell,
where terms may have type annotations,
and there is no separate typing judgement for well-formedness of types.
To keep the type theory minimal, it contains only dependent functions,
an empty type, predicative universes, and bounded universe levels.
By convention, we use $\ottnt{a}, \ottnt{b}, \ottnt{c}$ for terms,
$\ottnt{A}, \ottnt{B}, \ottnt{C}$ for types,
and $\ottnt{k}, \ell$ for level terms.
The syntax is presented in \cref{fig:syntax};
we additionally use $ \ottnt{A}  \to  \ottnt{B} $ as sugar for nondependent functions
$ \Pi  \ottmv{x}  \mathbin{:}  \ottnt{A}  \mathpunct{.}  \ottnt{B} $ where $\ottmv{x}$ does not occur in $\ottnt{B}$.
While the mechanization uses de Bruijn indexing and simultaneous substitutions,
this paper presents the syntax in nominal form for clarity,
and we omit the details of manipulating substitutions for concision.
We write single substitutions of a variable $\ottmv{x}$
in a term $\ottnt{b}$ by another term $\ottnt{a}$ as $ \ottnt{b} [  \ottmv{x}  \mapsto  \ottnt{a}  ] $.

\begin{figure}
\vspace{-\baselineskip}
\begin{align*}
  i, j & \Coloneqq \texttt{<concrete universe levels>} \\
  x, y, z & \Coloneqq \texttt{<term variables>} \\
  a, b, c, A, B, C, k, \ell & \Coloneqq \ottmv{x} \mid \ottmv{i}
    \mid  \Pi  \ottmv{x}  \mathbin{:}  \ottnt{A}  \mathpunct{.}  \ottnt{B}  \mid  \lambda  \ottmv{x}  \mathbin{:}  \ottnt{A}  \mathpunct{.}  \ottnt{b}  \mid  \ottnt{b}  \gap  \ottnt{a} 
    \mid  \bot  \mid  \kw{absurd}_{ \ottnt{A} } \gap  \ottnt{b} 
    \mid  \kw{U} \gap  \ottnt{k}  \mid  \kw{Level}\texttt{<} \gap  \ell  \\
  \Gamma, \Delta & \Coloneqq  \cdot  \mid  \Gamma ,  \ottmv{x}  \mathbin{:}  \ottnt{A} 
\end{align*}
\caption{Syntax \thmref{syntactics.lean}{Term,Ctxt}}
\label{fig:syntax}
\end{figure}

The type theory is parametrized over a cofinal woset of levels,
\ie a set of levels that are well founded, totally ordered,
and each have some strictly larger level;
these properties are required when modelling the type theory.
Instances of such sets include the naturals $0, 1, 2, \dots$,
as well as the naturals extended by one limit ordinal $\omega$
and its successors $\omega + 1, \omega + 2, \dots$.
We continue to use these concrete levels for our examples.
These metalevel levels are internalized directly in system as terms $\ottmv{i}$.

\begin{figure}
\begin{mathpar}
  \inferrule[\ottdrulename{Nil}]{~}{ \mathop{\vdash}   \cdot  }
  \and
  \inferrule[\ottdrulename{Cons}]
    { \mathop{\vdash}  \Gamma  \and
      \Gamma  \vdash  \ottnt{A}  \mathrel{:}   \kw{U} \gap  \ottnt{k}  }
    { \mathop{\vdash}   \Gamma ,  \ottmv{x}  \mathbin{:}  \ottnt{A}  }
  \and
  \inferrule[\ottdrulename{Var}]
    { \mathop{\vdash}  \Gamma  \and
      \ottmv{x}  \mathrel{:}  \ottnt{A}  \in  \Gamma }
    { \Gamma  \vdash   \mathit{ \ottmv{x} }   \mathrel{:}  \ottnt{A} }
  \and
  \inferrule[\ottdrulename{Pi}]
    { \Gamma  \vdash  \ottnt{A}  \mathrel{:}   \kw{U} \gap  \ottnt{k}   \and
       \Gamma ,  \ottmv{x}  \mathbin{:}  \ottnt{A}   \vdash  \ottnt{B}  \mathrel{:}   \kw{U} \gap  \ottnt{k}  }
    { \Gamma  \vdash   \Pi  \ottmv{x}  \mathbin{:}  \ottnt{A}  \mathpunct{.}  \ottnt{B}   \mathrel{:}   \kw{U} \gap  \ottnt{k}  }
  \and
  \inferrule[\ottdrulename{Lam}]
    { \Gamma  \vdash  \ottnt{A}  \mathrel{:}   \kw{U} \gap  \ottnt{k}   \and
      \Gamma  \vdash   \Pi  \ottmv{x}  \mathbin{:}  \ottnt{A}  \mathpunct{.}  \ottnt{B}   \mathrel{:}   \kw{U} \gap  \ottnt{k}   \and
       \Gamma ,  \ottmv{x}  \mathbin{:}  \ottnt{A}   \vdash  \ottnt{b}  \mathrel{:}  \ottnt{B} }
    { \Gamma  \vdash   \lambda  \ottmv{x}  \mathbin{:}  \ottnt{A}  \mathpunct{.}  \ottnt{b}   \mathrel{:}   \Pi  \ottmv{x}  \mathbin{:}  \ottnt{A}  \mathpunct{.}  \ottnt{B}  }
  \and
  \inferrule[\ottdrulename{App}]
    { \Gamma  \vdash  \ottnt{b}  \mathrel{:}   \Pi  \ottmv{x}  \mathbin{:}  \ottnt{A}  \mathpunct{.}  \ottnt{B}   \and
      \Gamma  \vdash  \ottnt{a}  \mathrel{:}  \ottnt{A} }
    { \Gamma  \vdash   \ottnt{b}  \gap  \ottnt{a}   \mathrel{:}   \ottnt{B} [  \ottmv{x}  \mapsto  \ottnt{a}  ]  }
  \and
  \inferrule[\ottdrulename{Mty}]
    { \Gamma  \vdash   \kw{U} \gap  \ottnt{k}   \mathrel{:}   \kw{U} \gap  \ell  }
    { \Gamma  \vdash   \bot   \mathrel{:}   \kw{U} \gap  \ottnt{k}  }
  \and
  \inferrule[\ottdrulename{Abs}]
    { \Gamma  \vdash  \ottnt{A}  \mathrel{:}   \kw{U} \gap  \ottnt{k}   \and
      \Gamma  \vdash  \ottnt{b}  \mathrel{:}   \bot  }
    { \Gamma  \vdash   \kw{absurd}_{ \ottnt{A} } \gap  \ottnt{b}   \mathrel{:}  \ottnt{A} }
  \and
  \inferrule[\ottdrulename{Conv}]
    { \Gamma  \vdash  \ottnt{a}  \mathrel{:}  \ottnt{A}  \and
      \Gamma  \vdash  \ottnt{B}  \mathrel{:}   \kw{U} \gap  \ottnt{k}   \and
      \ottnt{A}  \equiv  \ottnt{B} }
    { \Gamma  \vdash  \ottnt{a}  \mathrel{:}  \ottnt{B} }
\end{mathpar}
\begin{mathpar}
  \inferrule[\ottdrulename{E-Beta}]{~}{   (  \lambda  \ottmv{x}  \mathbin{:}  \ottnt{A}  \mathpunct{.}  \ottnt{b}  )   \gap  \ottnt{a}   \equiv   \ottnt{b} [  \ottmv{x}  \mapsto  \ottnt{a}  ]  } \and
  \inferrule[\ottdrulename{E-Refl}]{~}{ \ottnt{a}  \equiv  \ottnt{a} } \and
  \inferrule[\ottdrulename{E-Sym}]{ \ottnt{a}  \equiv  \ottnt{b} }{ \ottnt{b}  \equiv  \ottnt{a} } \and
  \inferrule[\ottdrulename{E-Trans}]{ \ottnt{a}  \equiv  \ottnt{b}  \and  \ottnt{b}  \equiv  \ottnt{c} }{ \ottnt{a}  \equiv  \ottnt{c} } \and
  \cdots
\end{mathpar}
\caption{Typing and selected equality rules (no universes or levels) \thmref{typing.lean}{Wtf,Eqv}}
\label{fig:typing:basic}
\end{figure}

We begin first with the basic rules that don't concern universes or levels in \cref{fig:typing:basic},
consisting of a context well-formedness judgement \fbox{$ \mathop{\vdash}  \Gamma $},
a typing judgement \fbox{$ \Gamma  \vdash  \ottnt{a}  \mathrel{:}  \ottnt{A} $},
and an untyped definitional equality \fbox{$ \ottnt{a}  \equiv  \ottnt{b} $}.
We use $\beta$-conversion as our equality,
and omit the usual congruence rules.
Unusually, \rref{Lam} includes well-typedness premises 
of both the function's type and the domain type alone.
The former is necessary to strengthen the induction hypotheses
when proving the fundamental soundness theorem,
and the latter to strengthen them when proving subject reduction.
We later prove admissible a rule \nameref{Lam'} that omits the first premise.
The other typing rules are otherwise typical.

\begin{figure}
\begin{mathpar}
  \inferrule*[right=\ottdrulename{Univ}]
    { \Gamma  \vdash  \ottnt{k}  \mathrel{:}   \kw{Level}\texttt{<} \gap  \ell  }
    { \Gamma  \vdash   \kw{U} \gap  \ottnt{k}   \mathrel{:}   \kw{U} \gap  \ell  }
  \quad
  \inferrule*[right=\ottdrulename{Level<}]
    { \Gamma  \vdash   \kw{U} \gap  \ottnt{k_{{\mathrm{1}}}}   \mathrel{:}   \kw{U} \gap  \ell_{{\mathrm{1}}}   \and
      \Gamma  \vdash  \ottnt{k_{{\mathrm{0}}}}  \mathrel{:}   \kw{Level}\texttt{<} \gap  \ell_{{\mathrm{0}}}  }
    { \Gamma  \vdash   \kw{Level}\texttt{<} \gap  \ottnt{k_{{\mathrm{0}}}}   \mathrel{:}   \kw{U} \gap  \ottnt{k_{{\mathrm{1}}}}  }
  \quad
  \inferrule*[right=\ottdrulename{Lvl}]
    { \mathop{\vdash}  \Gamma  \and
      \ottmv{i}  <  \ottmv{j} }
    { \Gamma  \vdash   \ottmv{i}   \mathrel{:}   \kw{Level}\texttt{<} \gap   \ottmv{j}   }
  \and
  \inferrule*[right=\ottdrulename{Trans}]
    { \Gamma  \vdash  \ottnt{k_{{\mathrm{1}}}}  \mathrel{:}   \kw{Level}\texttt{<} \gap  \ottnt{k_{{\mathrm{2}}}}   \and
      \Gamma  \vdash  \ottnt{k_{{\mathrm{2}}}}  \mathrel{:}   \kw{Level}\texttt{<} \gap  \ottnt{k_{{\mathrm{3}}}}  }
    { \Gamma  \vdash  \ottnt{k_{{\mathrm{1}}}}  \mathrel{:}   \kw{Level}\texttt{<} \gap  \ottnt{k_{{\mathrm{3}}}}  }
  \and
  \inferrule*[right=\ottdrulename{Cumul}]
    { \Gamma  \vdash  \ottnt{A}  \mathrel{:}   \kw{U} \gap  \ottnt{k}   \and
      \Gamma  \vdash  \ottnt{k}  \mathrel{:}   \kw{Level}\texttt{<} \gap  \ell  }
    { \Gamma  \vdash  \ottnt{A}  \mathrel{:}   \kw{U} \gap  \ell  }
\end{mathpar}
\caption{Typing rules (universes and levels)}
\label{fig:typing:univ}
\end{figure}

The rules relating to universes and levels are given in \cref{fig:typing:univ}.
By \rref{Lvl}, we can view the type constructor $ \kw{Level}\texttt{<} \gap   \relax  $
as a restricted internalization of the order on levels.
Quantifications and abstractions over a level variable
must be bounded by some level expression,
which cannot be the variable itself since it is not in the scope of its own type.
In contrast, if we had more general level constraint types,
it would be possible to declare a looping constraint $\ottmv{x} < \ottmv{x}$.
The level type itself can be typed at any universe by \rref{Level<}
regardless of its bounding level.
For example, we can construct a derivation for $  \cdot   \vdash   \kw{Level}\texttt{<} \gap   2    \mathrel{:}   \kw{U} \gap   0   $
solely knowing that $  \cdot   \vdash   2   \mathrel{:}   \kw{Level}\texttt{<} \gap   3   $, $  \cdot   \vdash   \kw{U} \gap   0    \mathrel{:}   \kw{U} \gap   1   $,
which follow from $ 0  <  1 $ and $ 2  <  3 $.

\Rref{Trans} internalizes transitivity of the order on levels,
which is now required since levels are terms in general and not only concrete levels.
For example, we can construct a derivation for $   \ottmv{x}  \mathbin{:}   \kw{Level}\texttt{<} \gap   \omega    ,  \ottmv{y}  \mathbin{:}   \kw{Level}\texttt{<} \gap   \mathit{ \ottmv{x} }     \vdash   \mathit{ \ottmv{x} }   \mathrel{:}   \kw{Level}\texttt{<} \gap   \omega   $,
where the levels $\ottmv{x}, \ottmv{y}$ are variables.
\Rref{Cumul} is a cumulativity rule that permits lifting a type
from one universe to a higher universe.
This rule is weaker than a full subtyping rule that accounts for
contravariance in the domain and covariance in the codomain of function types.
Therefore, for instance, $  f  \mathbin{:}    \kw{U} \gap   2    \to   \kw{U} \gap   0      \vdash   \mathit{ f }   \mathrel{:}    \kw{U} \gap   1    \to   \kw{U} \gap   1    $ does \emph{not} hold.
Nonetheless, cumulativity allows us to instead type the $\eta$-expansion
$  f  \mathbin{:}    \kw{U} \gap   2    \to   \kw{U} \gap   0      \vdash    \lambda  \ottmv{x}  \mathbin{:}   \kw{U} \gap   1    \mathpunct{.}   \mathit{ f }    \gap   \mathit{ \ottmv{x} }    \mathrel{:}    \kw{U} \gap   1    \to   \kw{U} \gap   1    $.

Finally, \rref{Univ} asserts that a universe at level $\ottnt{k}$
lives in the universe at level $\ell$ when $\ottnt{k}$ is strictly bounded by $\ell$.
Allowing universes with general level terms and not just concrete levels
to be well typed is what permits typing level-polymorphic types.
For example, the level-polymorphic identity function type
$ \Pi  \ottmv{x}  \mathbin{:}   \kw{Level}\texttt{<} \gap   \omega    \mathpunct{.}    \Pi  \ottmv{y}  \mathbin{:}   \kw{U} \gap   \mathit{ \ottmv{x} }    \mathpunct{.}   \mathit{ \ottmv{y} }    \to   \mathit{ \ottmv{y} }   $ is typeable.
$ \kw{Level}\texttt{<} \gap   \omega  $ can be assigned an arbitrary type by \rref{Level},
$ \kw{U} \gap   \mathit{ \ottmv{x} }  $ has type $ \kw{U} \gap   \omega  $ by \rref{Univ} and \rref{Var},
and $\ottmv{y}$ can be assigned type $ \kw{U} \gap   \omega  $ transitively via \rref{Trans,Var}.
Then the entire term has type $ \kw{U} \gap   \omega  $ by repeated application of \rref{Pi}.

\section{Type safety} \label{sec:safety}

Type safety is proven using standard syntactic methods
to show progress and preservation (\ie subject reduction).
In essence, closed, well-typed terms evaluate (if they terminate) to values,
which are type formers and constructors,
defined below.
The proof is standard, so we omit most details,
listing only some of the key lemmas required.
\begin{equation*}
  \ottnt{v} \Coloneqq \ottmv{i} \mid  \Pi  \ottmv{x}  \mathbin{:}  \ottnt{A}  \mathpunct{.}  \ottnt{B}  \mid  \lambda  \ottmv{x}  \mathbin{:}  \ottnt{A}  \mathpunct{.}  \ottnt{b}  \mid  \bot  \mid  \kw{U} \gap  \ottnt{k}  \mid  \kw{Level}\texttt{<} \gap  \ell  \quad \text{\thmref{safety.lean}{Value}}
\end{equation*}

\subsection{Reduction and conversion}

\begin{figure}[h]
\begin{mathpar}
  \inferrule[\ottdrulename{P-Beta}]
    { \ottnt{b}  \Rightarrow  \ottnt{b'}  \and
      \ottnt{a}  \Rightarrow  \ottnt{a'} }
    {   (  \lambda  \ottmv{x}  \mathbin{:}  \ottnt{A}  \mathpunct{.}  \ottnt{b}  )   \gap  \ottnt{a}   \Rightarrow   \ottnt{b'} [  \ottmv{x}  \mapsto  \ottnt{a'}  ]  }
  \quad
  \inferrule[\ottdrulename{P-Pi}]
    { \ottnt{A}  \Rightarrow  \ottnt{A'}  \and
      \ottnt{B}  \Rightarrow  \ottnt{B'} }
    {  \Pi  \ottmv{x}  \mathbin{:}  \ottnt{A}  \mathpunct{.}  \ottnt{B}   \Rightarrow   \Pi  \ottmv{x}  \mathbin{:}  \ottnt{A'}  \mathpunct{.}  \ottnt{B'}  }
  \quad
  \inferrule[\ottdrulename{P-Lam}]
    { \ottnt{A}  \Rightarrow  \ottnt{A'}  \and
      \ottnt{b}  \Rightarrow  \ottnt{b'} }
    {  \lambda  \ottmv{x}  \mathbin{:}  \ottnt{A}  \mathpunct{.}  \ottnt{b}   \Rightarrow   \lambda  \ottmv{x}  \mathbin{:}  \ottnt{A'}  \mathpunct{.}  \ottnt{b'}  }
  \quad
  \inferrule[\ottdrulename{P-Univ}]
    { \ottnt{k}  \Rightarrow  \ottnt{k'} }
    {  \kw{U} \gap  \ottnt{k}   \Rightarrow   \kw{U} \gap  \ottnt{k'}  }
  \and
  \inferrule[\ottdrulename{P-App}]
    { \ottnt{b}  \Rightarrow  \ottnt{b'}  \quad
      \ottnt{a}  \Rightarrow  \ottnt{a'} }
    {  \ottnt{b}  \gap  \ottnt{a}   \Rightarrow   \ottnt{b'}  \gap  \ottnt{a'}  }
  \quad
  \inferrule[\ottdrulename{P-Abs}]
    { \ottnt{A}  \Rightarrow  \ottnt{A'}  \and
      \ottnt{b}  \Rightarrow  \ottnt{b'} }
    {  \kw{absurd}_{ \ottnt{A} } \gap  \ottnt{b}   \Rightarrow   \kw{absurd}_{ \ottnt{A'} } \gap  \ottnt{b'}  }
  \quad
  \inferrule[\ottdrulename{P-Level<}]
    { \ell  \Rightarrow  \ell' }
    {  \kw{Level}\texttt{<} \gap  \ell   \Rightarrow   \kw{Level}\texttt{<} \gap  \ell'  }
  \quad
  \inferrule[\ottdrulename{P-Var}]{~}{  \mathit{ \ottmv{x} }   \Rightarrow   \mathit{ \ottmv{x} }  }
  \quad
  \inferrule[\ottdrulename{P-Lvl}]{~}{  \ottmv{i}   \Rightarrow   \ottmv{i}  }
  \quad
  \inferrule[\ottdrulename{P-Mty}]{~}{  \bot   \Rightarrow   \bot  }
\end{mathpar}
\caption{Parallel reduction rules \thmref{reduction.lean}{Par,Pars}}
\label{fig:par}
\end{figure}

Rather than working directly with $\beta$-reduction,
we use parallel reduction \fbox{$ \ottnt{a}  \Rightarrow  \ottnt{b} $},
defined in \cref{fig:par},
and its reflexive, transitive closure \fbox{$ \ottnt{a}  \Rightarrow^\ast  \ottnt{b} $},
into which call-by-name evaluation embeds.
Similarly, instead of definitional equality,
we use conversion \fbox{$ \ottnt{a}  \Leftrightarrow  \ottnt{b} $},
which is defined in terms of parallel reduction.
We begin with simple lemmas about parallel reduction.

\begin{definition}[Conversion] \thmref{reduction.lean}{Conv} \\
  $ \ottnt{a}  \Leftrightarrow  \ottnt{b} $ iff there exists a $\ottnt{c}$ such that
  $ \ottnt{a}  \Rightarrow^\ast  \ottnt{c} $ and $ \ottnt{b}  \Rightarrow^\ast  \ottnt{c} $
\end{definition}

\begin{lemma}[Substitution (p.r.)] \thmref{reduction.lean}{parsSubst} \label{lem:pars:subst} \\
  If $ \ottnt{a}  \Rightarrow^\ast  \ottnt{a'} $ and $ \ottnt{b}  \Rightarrow^\ast  \ottnt{b'} $,
  then $  \ottnt{b} [  \ottmv{x}  \mapsto  \ottnt{a}  ]   \Rightarrow^\ast   \ottnt{b'} [  \ottmv{x}  \mapsto  \ottnt{a'}  ]  $.
\end{lemma}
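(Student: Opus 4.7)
The plan is to prove this in two stages: first establish a single-step version compatible with substitution, and then lift it to the reflexive--transitive closure.

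The single-step lemma states: if $ \ottnt{b}  \Rightarrow  \ottnt{b'} $ and $ \ottnt{a}  \Rightarrow  \ottnt{a'} $, then $  \ottnt{b} [  \ottmv{x}  \mapsto  \ottnt{a}  ]   \Rightarrow   \ottnt{b'} [  \ottmv{x}  \mapsto  \ottnt{a'}  ]  $. I would prove this by induction on the derivation of $ \ottnt{b}  \Rightarrow  \ottnt{b'} $. In the \textsc{P-Var} case, the variable is either $\ottmv{x}$ itself, in which case both sides substitute through to $\ottnt{a}$ and $\ottnt{a'}$ and the hypothesis $ \ottnt{a}  \Rightarrow  \ottnt{a'} $ closes the goal, or it is some other variable and the conclusion is immediate from \textsc{P-Var}. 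The cases \textsc{P-Lvl} and \textsc{P-Mty} are trivial. The congruence cases (\textsc{P-Pi}, \textsc{P-Lam}, \textsc{P-App}, \textsc{P-Univ}, \textsc{P-Abs}, \textsc{P-Level<}) all follow by reassembling the induction hypotheses, after pushing the substitution under the relevant binder or constructor.

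The substantive case, and the main obstacle, is \textsc{P-Beta}, where we must show that
\[
   (  \lambda  \ottmv{y}  \mathbin{:}  \ottnt{A}  \mathpunct{.}  \ottnt{b_{{\mathrm{1}}}}  )   \gap  \ottnt{a_{{\mathrm{1}}}}  [  \ottmv{x}  \mapsto  \ottnt{a}  ]  \Rightarrow  \ottnt{b_{{\mathrm{1}}'}} [  \ottmv{y}  \mapsto  \ottnt{a_{{\mathrm{1}}'}}  ]  [  \ottmv{x}  \mapsto  \ottnt{a'}  ] .
\]
After reducing the left-hand side by \textsc{P-Beta}, we land at $ \ottnt{b_{{\mathrm{1}}'}} [  \ottmv{x}  \mapsto  \ottnt{a'}  ]  [  \ottmv{y}  \mapsto   \ottnt{a_{{\mathrm{1}}'}} [  \ottmv{x}  \mapsto  \ottnt{a'}  ]   ] $, and closing the goal requires the single-substitution commutation identity, valid because $\ottmv{y}$ can be chosen fresh from $\ottmv{x}$ and from the free variables of $\ottnt{a'}$. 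In the Lean development using de Bruijn indices this is discharged by the usual composition lemmas for the substitution calculus; in the nominal presentation it is handled by the Barendregt convention.

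To lift to $ \Rightarrow^\ast $, I would first establish reflexivity $ \ottnt{t}  \Rightarrow  \ottnt{t} $ for every term $\ottnt{t}$ by an easy structural induction using the congruence rules and \textsc{P-Var}, \textsc{P-Lvl}, \textsc{P-Mty}. Then the desired statement follows by decomposing the reduction: first iterate the single-step lemma along $ \ottnt{b}  \Rightarrow^\ast  \ottnt{b'} $ with $ \ottnt{a}  \Rightarrow  \ottnt{a} $ at each step to obtain $  \ottnt{b} [  \ottmv{x}  \mapsto  \ottnt{a}  ]   \Rightarrow^\ast   \ottnt{b'} [  \ottmv{x}  \mapsto  \ottnt{a}  ]  $, and then iterate it along $ \ottnt{a}  \Rightarrow^\ast  \ottnt{a'} $ with $ \ottnt{b'}  \Rightarrow  \ottnt{b'} $ at each step to obtain $  \ottnt{b'} [  \ottmv{x}  \mapsto  \ottnt{a}  ]   \Rightarrow^\ast   \ottnt{b'} [  \ottmv{x}  \mapsto  \ottnt{a'}  ]  $. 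Concatenating the two chains yields the conclusion.
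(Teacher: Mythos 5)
Your proof is correct and follows essentially the same route as the paper's (mechanized) argument: establish that a single parallel-reduction step is substitutive by induction on the derivation, with the \textsc{P-Beta} case discharged by the substitution commutation/composition identity, and then lift to $\Rightarrow^\ast$ using reflexivity of parallel reduction and transitivity. The only cosmetic difference is that the Lean development phrases the key step via simultaneous substitutions over de Bruijn indices rather than a single nominal substitution, which you already acknowledge.
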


\begin{lemma}[Construction (p.r.)] \thmref{reduction.lean}{pars\{$\beta$,Pi,Abs,$\mathcal{U}$,App,Exf,Lvl\}} \label{lem:pars:cons} \\
  Analogous constructors of parallel reduction hold
  for its reflexive, transitive closure,
  \eg if $ \ottnt{b}  \Rightarrow^\ast  \ottnt{b'} $ and $ \ottnt{a}  \Rightarrow^\ast  \ottnt{a'} $,
  then $   (  \lambda  \ottmv{x}  \mathbin{:}  \ottnt{A}  \mathpunct{.}  \ottnt{b}  )   \gap  \ottnt{a}   \Rightarrow^\ast   \ottnt{b'} [  \ottmv{x}  \mapsto  \ottnt{a'}  ]  $.
\end{lemma}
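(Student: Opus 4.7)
The plan is to prove each $\Rightarrow^\ast$-analogue by a uniform two-stage strategy that mirrors the shape of the corresponding single-step rule in \cref{fig:par}. The key auxiliary fact I would prove first is reflexivity of parallel reduction itself: $a \Rightarrow a$ for every term $a$. This is a straightforward structural induction on $a$, where the base cases correspond directly to \ottdrulename{P-Var}, \ottdrulename{P-Lvl}, and \ottdrulename{P-Mty}, and the inductive cases apply the structural rules (\ottdrulename{P-Pi}, \ottdrulename{P-Lam}, \ottdrulename{P-App}, \ottdrulename{P-Abs}, \ottdrulename{P-Univ}, \ottdrulename{P-Level<}) with the induction hypotheses as premises. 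A fortiori this gives $a \Rightarrow^\ast a$, which I use throughout to keep one subterm fixed while the other reduces.

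For each purely congruence-shaped constructor, I would then lift it from $\Rightarrow$ to $\Rightarrow^\ast$ by handling one argument at a time. Taking \ottdrulename{P-Pi} as a representative case: from $A \Rightarrow^\ast A'$ and $B \Rightarrow^\ast B'$, induct on the length of $A \Rightarrow^\ast A'$, using \ottdrulename{P-Pi} at each link with the reflexive premise $B \Rightarrow B$, to obtain $\Pi x : A. B \Rightarrow^\ast \Pi x : A'. B$; then symmetrically for $B$; and chain the two derivations via transitivity of $\Rightarrow^\ast$. The cases for \ottdrulename{P-Lam}, \ottdrulename{P-App}, \ottdrulename{P-Abs}, \ottdrulename{P-Univ}, and \ottdrulename{P-Level<} follow the same recipe verbatim.

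The $\beta$-case is the only one that deserves separate thought because the reduct involves a substitution, and I would dispatch it by combining a single $\beta$-step with the substitution lemma for $\Rightarrow^\ast$ (\cref{lem:pars:subst}). Given $b \Rightarrow^\ast b'$ and $a \Rightarrow^\ast a'$, take the one-step reduction $(\lambda x : A. b)\,a \Rightarrow b[x \mapsto a]$ by \ottdrulename{P-Beta} with the reflexive premises $b \Rightarrow b$ and $a \Rightarrow a$, and then close the remaining gap $b[x \mapsto a] \Rightarrow^\ast b'[x \mapsto a']$ by \cref{lem:pars:subst}; concatenating the two yields the desired $\Rightarrow^\ast$ chain.

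The hardest conceptual step is the $\beta$-case, but its real content — compatibility of substitution with $\Rightarrow^\ast$ — is already packaged into \cref{lem:pars:subst}, and the remaining obstacle is simply ensuring the reflexivity lemma is available before the congruences are proved, so that each inductive step can freeze the untouched argument. With those two ingredients in place, the rest is bookkeeping on the length of the reflexive-transitive closure.
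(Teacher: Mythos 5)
Your proposal is correct and is essentially the standard argument the paper relies on (the paper omits the proof, deferring to its Lean mechanization, where the substitution lemma \cref{lem:pars:subst} and reflexivity of $\Rightarrow$ are exactly the ingredients used). Both your congruence lifting (freeze one argument via reflexivity, induct on the closure, chain by transitivity) and your $\beta$-case (one \textsc{P-Beta} step with reflexive premises followed by \cref{lem:pars:subst}) are sound; the alternative of reducing the redex by congruence first and taking the $\beta$-step last would work equally well.
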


\begin{lemma}[Inversion (p.r.)] \thmref{reduction.lean}{pars\{Pi,Abs,$\mathcal{U}$,App,Exf,Lvl,Lof,Mty\}Inv} \label{lem:pars:inv} \\
  If $ \ottnt{v}  \Rightarrow^\ast  \ottnt{c} $, then $\ottnt{c}$ is also a value of the same syntactic shape
  such that the reduction is congruent,
  \eg if $  \lambda  \ottmv{x}  \mathbin{:}  \ottnt{A}  \mathpunct{.}  \ottnt{b}   \Rightarrow^\ast  \ottnt{c} $, then $\ottnt{c}$ is syntactically equal to $ \lambda  \ottmv{x}  \mathbin{:}  \ottnt{A'}  \mathpunct{.}  \ottnt{b'} $
  for some $\ottnt{A'}, \ottnt{b'}$ such that $ \ottnt{A}  \Rightarrow^\ast  \ottnt{A'} ,  \ottnt{b}  \Rightarrow^\ast  \ottnt{b'} $.
\end{lemma}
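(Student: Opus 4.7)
The plan is to first establish a one-step inversion principle: for any value $\ottnt{v}$ and term $\ottnt{c}$ with $ \ottnt{v}  \Rightarrow  \ottnt{c} $, the term $\ottnt{c}$ is a value of the same head shape as $\ottnt{v}$, and each immediate subterm of $\ottnt{c}$ is obtained by a single parallel step from the corresponding subterm of $\ottnt{v}$. This follows by inversion on the derivation of $ \ottnt{v}  \Rightarrow  \ottnt{c} $: among the rules of \cref{fig:par}, only the congruence rule whose conclusion matches the head constructor of $\ottnt{v}$ can apply, because \rref{P-Beta} and \rref{P-App} require the redex to be an application, which no value is. For example, if $\ottnt{v} =  \lambda  \ottmv{x}  \mathbin{:}  \ottnt{A}  \mathpunct{.}  \ottnt{b} $, only \rref{P-Lam} can conclude $ \ottnt{v}  \Rightarrow  \ottnt{c} $, which forces $\ottnt{c} =  \lambda  \ottmv{x}  \mathbin{:}  \ottnt{A'}  \mathpunct{.}  \ottnt{b'} $ with $ \ottnt{A}  \Rightarrow  \ottnt{A'} $ and $ \ottnt{b}  \Rightarrow  \ottnt{b'} $; the nullary cases ($\ottmv{i}$, $\bot$) are handled by the reflexive rules \rref{P-Lvl} and \rref{P-Mty}, which leave the term unchanged.

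Next, I would lift this to the reflexive, transitive closure by induction on the derivation of $ \ottnt{v}  \Rightarrow^\ast  \ottnt{c} $. In the reflexive case, $\ottnt{c} = \ottnt{v}$, and reflexivity of $\Rightarrow^\ast$ on each subterm is immediate. In the transitive step, a single reduction $ \ottnt{v}  \Rightarrow  \ottnt{a} $ is followed by $ \ottnt{a}  \Rightarrow^\ast  \ottnt{c} $; the one-step inversion above shows that $\ottnt{a}$ is a value of the same shape as $\ottnt{v}$, with immediate subterms reached from those of $\ottnt{v}$ in a single parallel step. The induction hypothesis applied to $ \ottnt{a}  \Rightarrow^\ast  \ottnt{c} $ then shows that $\ottnt{c}$ retains that shape, with subterms reached from those of $\ottnt{a}$ by $\Rightarrow^\ast$. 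Composing the initial single step with the multi-step reduction using transitivity of $\Rightarrow^\ast$ yields the required chains from the subterms of $\ottnt{v}$ directly to those of $\ottnt{c}$.

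I do not anticipate any significant obstacle: the argument is essentially case analysis combined with a straightforward induction on the length of the multi-step reduction. The only real bookkeeping is enumerating the six value shapes of \thmref{safety.lean}{Value} and performing the analogous reasoning for each; in the three nullary cases, no subterm reductions need to be tracked, so the conclusion is trivial. The binder cases go through transparently under de Bruijn indices (as in the mechanization) since the congruence rules for $\lambda$ and $\Pi$ simply descend under the binder with no substitution obligation to discharge.
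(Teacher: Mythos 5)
Your proposal is correct and is essentially the proof the paper relies on (the paper omits it, deferring to the Lean mechanization): a one-step inversion by case analysis on the parallel-reduction derivation, noting that \rref{P-Beta} and \rref{P-App} cannot apply to values since no value is an application, followed by induction on the reflexive, transitive closure with transitivity of $\Rightarrow^\ast$ to chain the subterm reductions. No gaps.
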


Proving that conversion is transitive requires proving confluence for parallel reduction.
We use the notion of complete development \fbox{$ \ottnt{a} ^{ \mathsf{T} } $} by Takahashi~\citep{takahashi},
which joins parallel reduction and proves the diamond property.
Its definition is omitted here,
but corresponds to simultaneous reduction of all redexes.

\begin{lemma}[Completion (p.r.)] \thmref{reduction.lean}{parTaka} \label{lem:par:compl}
  If $ \ottnt{a}  \Rightarrow  \ottnt{b} $, then $ \ottnt{b}  \Rightarrow   \ottnt{a} ^{ \mathsf{T} }  $.
\end{lemma}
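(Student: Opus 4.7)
The plan is to induct on the derivation of $a \Rightarrow b$. For each rule, I compute $a^{\mathsf{T}}$ by inspecting the syntactic shape of $a$ and verify that $b$ parallel-reduces to it by applying the appropriate constructor of parallel reduction to the induction hypotheses on subterms. Most cases are routine congruences: for P-Pi, P-Lam, P-Univ, P-Abs, and P-Level<, the complete development distributes over the top-level constructor, so I assemble the result with the matching rule. The base cases P-Var, P-Lvl, and P-Mty use the reflexivity clause of $\cdot^{\mathsf{T}}$ (variables, level literals, and $\bot$ are their own complete developments), and reflexivity of parallel reduction — which can be established as a preliminary lemma by induction on terms — discharges them.

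The real work is in the application cases, because the definition of $(b_0\,a_0)^{\mathsf{T}}$ branches on whether $b_0$ is a lambda. In the P-Beta case, $a = (\lambda x{:}A.\,b_0)\,a_0 \Rightarrow b_0'[x \mapsto a_0']$ with $b_0 \Rightarrow b_0'$ and $a_0 \Rightarrow a_0'$, and by definition $a^{\mathsf{T}} = b_0^{\mathsf{T}}[x \mapsto a_0^{\mathsf{T}}]$. The induction hypotheses give $b_0' \Rightarrow b_0^{\mathsf{T}}$ and $a_0' \Rightarrow a_0^{\mathsf{T}}$, so I close the goal using a separate substitution lemma asserting that if $b \Rightarrow b'$ and $a \Rightarrow a'$ then $b[x \mapsto a] \Rightarrow b'[x \mapsto a']$ — this is proved independently by induction on $b \Rightarrow b'$ and is a strict single-step analogue of \cref{lem:pars:subst}. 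For P-App with a non-lambda head, $a^{\mathsf{T}} = b_0^{\mathsf{T}}\,a_0^{\mathsf{T}}$ and I conclude directly from the induction hypotheses by P-App.

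The main obstacle is the P-App subcase where the head happens to be a lambda: here $a = (\lambda x{:}A.\,b_1)\,a_0 \Rightarrow b_0'\,a_0'$ with $\lambda x{:}A.\,b_1 \Rightarrow b_0'$ and $a_0 \Rightarrow a_0'$, yet $a^{\mathsf{T}} = b_1^{\mathsf{T}}[x \mapsto a_0^{\mathsf{T}}]$ is a substituted term rather than an application. The naive induction hypothesis yields only $b_0' \Rightarrow \lambda x{:}A^{\mathsf{T}}.\,b_1^{\mathsf{T}}$, which does not match the shape of the goal. The fix is to invert the parallel reduction of the lambda — lambdas only parallel-reduce to lambdas — to extract $b_0' = \lambda x{:}A'.\,b_1'$ with $A \Rightarrow A'$ and $b_1 \Rightarrow b_1'$; then the inner induction hypotheses applied to $b_1 \Rightarrow b_1'$ and $a_0 \Rightarrow a_0'$ give $b_1' \Rightarrow b_1^{\mathsf{T}}$ and $a_0' \Rightarrow a_0^{\mathsf{T}}$, so P-Beta fires on $(\lambda x{:}A'.\,b_1')\,a_0'$ to produce $b_1^{\mathsf{T}}[x \mapsto a_0^{\mathsf{T}}]$ as required. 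The only subtlety is making sure the induction principle allows invoking the IH at $b_1 \Rightarrow b_1'$ rather than at the packaged lambda reduction, which is straightforward given that $b_1 \Rightarrow b_1'$ is a strict subderivation.
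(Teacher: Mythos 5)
Your proof is correct and follows essentially the same route as the paper's mechanized proof (the standard Takahashi triangle argument): induction on the derivation of $a \Rightarrow b$, a single-step parallel substitution lemma for the beta case, and inversion of lambda reductions in the application-with-lambda-head case. The one caveat you flag — invoking the IH at the strict subderivation $b_1 \Rightarrow b_1'$ — is most cleanly avoided by applying the IH to the immediate premise $\lambda x{:}A.\,b_1 \Rightarrow b_0'$, which gives $\lambda x{:}A'.\,b_1' \Rightarrow \lambda x{:}A^{\mathsf{T}}.\,b_1^{\mathsf{T}}$, and then inverting to extract $b_1' \Rightarrow b_1^{\mathsf{T}}$, so no strengthened induction principle is needed.
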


\begin{corollary}[Diamond (p.r.)] \thmref{reduction.lean}{diamond} \label{lem:par:diamond}
  If $ \ottnt{a}  \Rightarrow  \ottnt{b} $ and $ \ottnt{a}  \Rightarrow  \ottnt{c} $,
  then there exists some $d$ such that $ \ottnt{b}  \Rightarrow   \mathit{ d }  $ and $ \ottnt{c}  \Rightarrow   \mathit{ d }  $.
  In particular, $d$ is $ \ottnt{a} ^{ \mathsf{T} } $,
  with the reductions given by \nameref{lem:par:compl}.
\end{corollary}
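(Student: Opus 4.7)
The plan is to invoke \nameref{lem:par:compl} twice and take $d = \ottnt{a}^{\mathsf{T}}$. Concretely, from the hypothesis $\ottnt{a} \Rightarrow \ottnt{b}$, completion yields $\ottnt{b} \Rightarrow \ottnt{a}^{\mathsf{T}}$; from $\ottnt{a} \Rightarrow \ottnt{c}$, completion yields $\ottnt{c} \Rightarrow \ottnt{a}^{\mathsf{T}}$. Setting $d \triangleq \ottnt{a}^{\mathsf{T}}$ then gives both required reductions simultaneously.

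The key observation is that Takahashi's trick collapses the usual diamond-chasing argument into a single existential witness determined entirely by the source term $\ottnt{a}$, rather than requiring a separate analysis based on $\ottnt{b}$ and $\ottnt{c}$. Because $\ottnt{a}^{\mathsf{T}}$ is defined as the simultaneous contraction of every redex in $\ottnt{a}$, any one-step parallel reduct of $\ottnt{a}$ must itself be able to reach $\ottnt{a}^{\mathsf{T}}$ in one parallel step by contracting the (possibly empty) set of redexes that were not yet reduced.

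No case analysis or induction is needed at this stage, since all of the heavy lifting has been absorbed into \nameref{lem:par:compl}; the corollary is a two-line consequence. The main obstacle lies upstream, in proving \nameref{lem:par:compl} itself by induction on $\ottnt{a} \Rightarrow \ottnt{b}$, which requires a substitution lemma for $(\cdot)^{\mathsf{T}}$ to handle the \rref{P-Beta} case — but that work is already done by the time we reach this corollary.
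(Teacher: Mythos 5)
Your proof is correct and matches the paper's own argument: the corollary is exactly the two-fold application of \nameref{lem:par:compl}, taking $d =  \ottnt{a} ^{ \mathsf{T} } $ as the witness, just as the statement itself indicates. Nothing further is needed.
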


\begin{theorem}[Confluence (p.r.)] \thmref{reduction.lean}{confluence} \label{lem:par:confl} \\
  If $ \ottnt{a}  \Rightarrow^\ast  \ottnt{b} $ and $ \ottnt{a}  \Rightarrow^\ast  \ottnt{c} $,
  then there exists some $d$ such that $ \ottnt{b}  \Rightarrow^\ast   \mathit{ d }  $ and $ \ottnt{c}  \Rightarrow^\ast   \mathit{ d }  $.
\end{theorem}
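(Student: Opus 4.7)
The plan is to lift the one-step diamond property (\cref{lem:par:diamond}) to the reflexive--transitive closure by the standard two-stage ``strip and tile'' argument. The first stage is a \emph{strip lemma}: if $\ottnt{a} \Rightarrow \ottnt{b}$ and $\ottnt{a} \Rightarrow^\ast \ottnt{c}$, then there exists some $e$ with $\ottnt{b} \Rightarrow^\ast e$ and $\ottnt{c} \Rightarrow e$. The second stage iterates this along the longer reduction to obtain full confluence.

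I would prove the strip lemma by induction on the derivation of $\ottnt{a} \Rightarrow^\ast \ottnt{c}$. In the reflexive case, $\ottnt{c} = \ottnt{a}$, so we simply take $e \triangleq \ottnt{b}$ using reflexivity of $\Rightarrow^\ast$ and the hypothesis $\ottnt{a} \Rightarrow \ottnt{b}$. In the transitive-step case, $\ottnt{a} \Rightarrow \ottnt{c'} \Rightarrow^\ast \ottnt{c}$, we apply \cref{lem:par:diamond} to the pair $\ottnt{a} \Rightarrow \ottnt{b}$, $\ottnt{a} \Rightarrow \ottnt{c'}$ to obtain a common reduct $\ottnt{d'}$ with $\ottnt{b} \Rightarrow \ottnt{d'}$ and $\ottnt{c'} \Rightarrow \ottnt{d'}$; then the induction hypothesis applied to $\ottnt{c'} \Rightarrow \ottnt{d'}$ and $\ottnt{c'} \Rightarrow^\ast \ottnt{c}$ supplies the required $e$, after prepending the single step $\ottnt{b} \Rightarrow \ottnt{d'}$ on one branch.

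Confluence then follows by induction on $\ottnt{a} \Rightarrow^\ast \ottnt{b}$, with $\ottnt{a} \Rightarrow^\ast \ottnt{c}$ fixed. The reflexive case is immediate with $d \triangleq \ottnt{c}$. For the inductive step $\ottnt{a} \Rightarrow \ottnt{a'} \Rightarrow^\ast \ottnt{b}$, the strip lemma applied to $\ottnt{a} \Rightarrow \ottnt{a'}$ and $\ottnt{a} \Rightarrow^\ast \ottnt{c}$ produces some $\ottnt{c'}$ with $\ottnt{a'} \Rightarrow^\ast \ottnt{c'}$ and $\ottnt{c} \Rightarrow \ottnt{c'}$; the induction hypothesis applied to $\ottnt{a'} \Rightarrow^\ast \ottnt{b}$ and $\ottnt{a'} \Rightarrow^\ast \ottnt{c'}$ yields a common reduct $d$, and concatenating the single step $\ottnt{c} \Rightarrow \ottnt{c'}$ with $\ottnt{c'} \Rightarrow^\ast d$ closes the other branch.

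The argument is entirely routine once \cref{lem:par:compl} has been done, which is where the real work lives; the only thing to be careful about is bookkeeping the direction of the inductions, since a naive induction on $\ottnt{a} \Rightarrow^\ast \ottnt{b}$ alone (without first establishing the strip lemma) gets stuck because the inductive hypothesis would need to be applied to a single-step $\Rightarrow$ on the other side rather than a $\Rightarrow^\ast$. Separating out the strip lemma resolves this mismatch and is the main (mild) obstacle.
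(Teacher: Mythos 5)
Your proof is correct and follows essentially the same route as the paper: the paper derives confluence from the diamond property (itself obtained from Takahashi's completion lemma), and your strip-lemma-plus-tiling argument is the standard and intended way to lift that diamond property to the reflexive--transitive closure. Both inductions are set up correctly, so there is nothing to fix.
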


\begin{corollary}[Properties of conversion] \thmref{reduction.lean}{conv*} \label{lem:conv}
  Conversion is reflexive, symmetric, transitive, substitutive, and congruent.
  Transitivity requires \nameref{lem:par:confl};
  the remaining are straightforward
  from the corresponding properties of parallel reduction.
\end{corollary}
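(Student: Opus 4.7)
The plan is to dispatch each of the five properties by unfolding the definition of $\ottnt{a} \Leftrightarrow \ottnt{b}$ as a common reduct under $\Rightarrow^\ast$, and then using the appropriate lemma about parallel reduction already established. Reflexivity and symmetry are immediate: for $\ottnt{a} \Leftrightarrow \ottnt{a}$ take $\ottnt{a}$ itself as the witness, using the fact that $\Rightarrow^\ast$ is reflexive; for symmetry, simply swap the two legs of the witness.

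For substitutivity, suppose $\ottnt{a} \Leftrightarrow \ottnt{a'}$ and $\ottnt{b} \Leftrightarrow \ottnt{b'}$, with common reducts $\ottnt{c_{{\mathrm{1}}}}$ and $\ottnt{c_{{\mathrm{2}}}}$ respectively. Then \cref{lem:pars:subst} gives $ \ottnt{b} [  \ottmv{x}  \mapsto  \ottnt{a}  ]  \Rightarrow^\ast  \ottnt{c_{{\mathrm{2}}}} [  \ottmv{x}  \mapsto  \ottnt{c_{{\mathrm{1}}}}  ] $ and $ \ottnt{b'} [  \ottmv{x}  \mapsto  \ottnt{a'}  ]  \Rightarrow^\ast  \ottnt{c_{{\mathrm{2}}}} [  \ottmv{x}  \mapsto  \ottnt{c_{{\mathrm{1}}}}  ] $, exhibiting the required common reduct. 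Congruence is handled one constructor at a time in the same style, picking common reducts of each subterm and invoking the corresponding case of \cref{lem:pars:cons} on both legs to build a common reduct of the compound terms.

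Transitivity is the step requiring real work, and is where confluence is essential. Suppose $\ottnt{a} \Leftrightarrow \ottnt{b}$ via a common reduct $\ottnt{c_{{\mathrm{1}}}}$ (so $ \ottnt{a}  \Rightarrow^\ast  \ottnt{c_{{\mathrm{1}}}} $ and $ \ottnt{b}  \Rightarrow^\ast  \ottnt{c_{{\mathrm{1}}}} $) and $\ottnt{b} \Leftrightarrow \ottnt{c}$ via $\ottnt{c_{{\mathrm{2}}}}$ (so $ \ottnt{b}  \Rightarrow^\ast  \ottnt{c_{{\mathrm{2}}}} $ and $ \ottnt{c}  \Rightarrow^\ast  \ottnt{c_{{\mathrm{2}}}} $). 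The two reducts $\ottnt{c_{{\mathrm{1}}}}$ and $\ottnt{c_{{\mathrm{2}}}}$ of $\ottnt{b}$ need not coincide, but by confluence (\cref{lem:par:confl}) applied to $\ottnt{b}$ there exists some $\mathit{d}$ with $ \ottnt{c_{{\mathrm{1}}}}  \Rightarrow^\ast   \mathit{ d }  $ and $ \ottnt{c_{{\mathrm{2}}}}  \Rightarrow^\ast   \mathit{ d }  $. Composing with transitivity of $\Rightarrow^\ast$ yields $ \ottnt{a}  \Rightarrow^\ast   \mathit{ d }  $ and $ \ottnt{c}  \Rightarrow^\ast   \mathit{ d }  $, witnessing $\ottnt{a} \Leftrightarrow \ottnt{c}$.

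The only nontrivial step is transitivity, and even there the obstacle is purely in having confluence available; once \cref{lem:par:confl} is in hand the argument is a direct diagram chase. No induction on typing or on term structure is required at this stage, since everything is a formal consequence of properties of $\Rightarrow^\ast$ and of the two previously established lemmas \cref{lem:pars:subst,lem:pars:cons}.
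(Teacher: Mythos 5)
Your proposal is correct and follows essentially the same route as the paper: transitivity is obtained by a confluence diagram chase on the two common reducts of the middle term (\cref{lem:par:confl}), while reflexivity, symmetry, substitutivity, and congruence follow directly from reflexivity of $\Rightarrow^\ast$, swapping legs, \cref{lem:pars:subst}, and \cref{lem:pars:cons} respectively. Nothing further is needed.
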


Inversion on parallel reduction gives syntactic consistency and injectivity of conversion.
Finally, definitional equality is equivalent to conversion,
which allows us to use them interchangeably later on.

\begin{lemma}[Syntactic consistency] \thmref{reduction.lean}{conv\{$\mathcal{U}$,Pi,Mty,Lvl\}\{$\mathcal{U}$,Pi,Mty,Lvl\}} \label{lem:par:consistency}
  If $\ottnt{v_{{\mathrm{1}}}}$ and $\ottnt{v_{{\mathrm{2}}}}$ have different syntactic shapes,
  then $ \ottnt{v_{{\mathrm{1}}}}  \Leftrightarrow  \ottnt{v_{{\mathrm{2}}}} $ is impossible.
\end{lemma}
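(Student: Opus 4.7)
The plan is to unfold the definition of conversion and apply \cref{lem:pars:inv} (inversion on parallel reduction) on both sides to derive a syntactic contradiction. Suppose, toward a contradiction, that $\ottnt{v_{{\mathrm{1}}}} \Leftrightarrow \ottnt{v_{{\mathrm{2}}}}$ with $\ottnt{v_{{\mathrm{1}}}}$ and $\ottnt{v_{{\mathrm{2}}}}$ of distinct top-level shape; the cases of interest are the type-former values $\kw{U} \gap \ottnt{k}$, $\Pi \ottmv{x} \mathbin{:} \ottnt{A} \mathpunct{.} \ottnt{B}$, $\bot$, and $\kw{Level}\texttt{<} \gap \ell$, which is why the mechanization is organized as a finite family of lemmas indexed by pairs of these constructors.

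By the definition of conversion, there exists a common reduct $\ottnt{c}$ with $\ottnt{v_{{\mathrm{1}}}} \Rightarrow^\ast \ottnt{c}$ and $\ottnt{v_{{\mathrm{2}}}} \Rightarrow^\ast \ottnt{c}$. Applying \cref{lem:pars:inv} to $\ottnt{v_{{\mathrm{1}}}} \Rightarrow^\ast \ottnt{c}$ shows that $\ottnt{c}$ must itself be a value sharing the same top-level constructor as $\ottnt{v_{{\mathrm{1}}}}$ (with its immediate subterms obtained by congruent parallel reduction). Applying the lemma again to $\ottnt{v_{{\mathrm{2}}}} \Rightarrow^\ast \ottnt{c}$ forces $\ottnt{c}$ to share the shape of $\ottnt{v_{{\mathrm{2}}}}$. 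Since a single term cannot simultaneously have two distinct head constructors, we obtain a contradiction.

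There is essentially no obstacle here: once the inversion lemma is in hand, each of the finitely many pairs of differing shapes collapses to a one-line contradiction, with no appeal to confluence needed in this step. All of the substantive work lives inside \cref{lem:pars:inv} itself, which is proved by induction on the reflexive-transitive closure of parallel reduction, using the fact that a value is never a redex so any reduction from it must proceed by the congruence rules and hence preserve its outermost constructor.
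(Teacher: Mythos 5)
Your proof is correct and matches the paper's argument: the paper derives syntactic consistency precisely by unfolding conversion to a common reduct and applying \cref{lem:pars:inv} to both reduction sequences, so the reduct would need two distinct head constructors, a contradiction. Your observation that confluence is not needed here (only for transitivity of conversion) is also consistent with the paper's presentation.
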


\begin{lemma}[Injectivity (conv.)] \thmref{reduction.lean}{conv\{Pi,$\mathcal{U}$,Lvl\}Inv}
  \begin{enumerate}[topsep=0pt]
    \item If $  \Pi  \ottmv{x}  \mathbin{:}  \ottnt{A_{{\mathrm{1}}}}  \mathpunct{.}  \ottnt{B_{{\mathrm{1}}}}   \Leftrightarrow   \Pi  \ottmv{x}  \mathbin{:}  \ottnt{A_{{\mathrm{2}}}}  \mathpunct{.}  \ottnt{B_{{\mathrm{2}}}}  $, then $ \ottnt{A_{{\mathrm{1}}}}  \Leftrightarrow  \ottnt{A_{{\mathrm{2}}}} $ and $ \ottnt{B_{{\mathrm{1}}}}  \Leftrightarrow  \ottnt{B_{{\mathrm{2}}}} $.
    \item If $  \kw{U} \gap  \ottnt{k_{{\mathrm{1}}}}   \Leftrightarrow   \kw{U} \gap  \ottnt{k_{{\mathrm{2}}}}  $, then $ \ottnt{k_{{\mathrm{1}}}}  \Leftrightarrow  \ottnt{k_{{\mathrm{2}}}} $.
    \item If $  \kw{Level}\texttt{<} \gap  \ottnt{k_{{\mathrm{1}}}}   \Leftrightarrow   \kw{Level}\texttt{<} \gap  \ottnt{k_{{\mathrm{2}}}}  $, then $ \ottnt{k_{{\mathrm{1}}}}  \Leftrightarrow  \ottnt{k_{{\mathrm{2}}}} $.
  \end{enumerate}
\end{lemma}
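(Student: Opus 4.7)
The plan is to reduce all three injectivity claims to the inversion lemma for the reflexive--transitive closure of parallel reduction (\nameref{lem:pars:inv}), combined with confluence to pick a common reduct.

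I would first unfold the definition of conversion. For the $\Pi$ case, $ \Pi  \ottmv{x}  \mathbin{:}  \ottnt{A_{{\mathrm{1}}}}  \mathpunct{.}  \ottnt{B_{{\mathrm{1}}}}   \Leftrightarrow   \Pi  \ottmv{x}  \mathbin{:}  \ottnt{A_{{\mathrm{2}}}}  \mathpunct{.}  \ottnt{B_{{\mathrm{2}}}} $ gives some $\ottnt{c}$ with $ \Pi  \ottmv{x}  \mathbin{:}  \ottnt{A_{{\mathrm{1}}}}  \mathpunct{.}  \ottnt{B_{{\mathrm{1}}}}   \Rightarrow^\ast  \ottnt{c} $ and $ \Pi  \ottmv{x}  \mathbin{:}  \ottnt{A_{{\mathrm{2}}}}  \mathpunct{.}  \ottnt{B_{{\mathrm{2}}}}   \Rightarrow^\ast  \ottnt{c} $. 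Applying \nameref{lem:pars:inv} to each, since $ \Pi  \ottmv{x}  \mathbin{:}  \ottnt{A_{{\mathrm{i}}}}  \mathpunct{.}  \ottnt{B_{{\mathrm{i}}}} $ is a value, $\ottnt{c}$ must have the shape $ \Pi  \ottmv{x}  \mathbin{:}  \ottnt{A'}  \mathpunct{.}  \ottnt{B'} $ for some $\ottnt{A'}, \ottnt{B'}$, with $ \ottnt{A_{{\mathrm{1}}}}  \Rightarrow^\ast  \ottnt{A'} $, $ \ottnt{B_{{\mathrm{1}}}}  \Rightarrow^\ast  \ottnt{B'} $, $ \ottnt{A_{{\mathrm{2}}}}  \Rightarrow^\ast  \ottnt{A'} $, and $ \ottnt{B_{{\mathrm{2}}}}  \Rightarrow^\ast  \ottnt{B'} $. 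These witness $ \ottnt{A_{{\mathrm{1}}}}  \Leftrightarrow  \ottnt{A_{{\mathrm{2}}}} $ and $ \ottnt{B_{{\mathrm{1}}}}  \Leftrightarrow  \ottnt{B_{{\mathrm{2}}}} $ directly from the definition of conversion, using $\ottnt{A'}$ and $\ottnt{B'}$ respectively as the common reducts.

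The universe and level cases are entirely analogous: from $  \kw{U} \gap  \ottnt{k_{{\mathrm{1}}}}   \Leftrightarrow   \kw{U} \gap  \ottnt{k_{{\mathrm{2}}}}  $ we extract a common reduct $\ottnt{c}$, which by \nameref{lem:pars:inv} must be of the form $ \kw{U} \gap  \ottnt{k'} $ with $ \ottnt{k_{{\mathrm{1}}}}  \Rightarrow^\ast  \ottnt{k'} $ and $ \ottnt{k_{{\mathrm{2}}}}  \Rightarrow^\ast  \ottnt{k'} $, yielding $ \ottnt{k_{{\mathrm{1}}}}  \Leftrightarrow  \ottnt{k_{{\mathrm{2}}}} $. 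The $ \kw{Level}\texttt{<} \gap  \ottnt{k} $ case is identical in structure.

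There is no real obstacle here, since the hard work is already packaged into \nameref{lem:par:confl} (implicit in the definition of conversion being well-behaved) and \nameref{lem:pars:inv}; the latter is what guarantees that parallel reduction from a head-constructor-formed term cannot rewrite the head constructor itself. The only minor subtlety is that \nameref{lem:pars:inv} is stated for a single-step-closed reduction $ \ottnt{v}  \Rightarrow^\ast  \ottnt{c} $, not for arbitrary $ \ottnt{a}  \Rightarrow^\ast  \ottnt{b} $, but this is precisely the shape we have since the left-hand sides are all head-constructor values. Each case of the lemma should be a short proof, essentially a destructuring followed by two applications of the definition of conversion.
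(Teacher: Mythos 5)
Your proposal is correct and matches the paper's intended argument: the paper proves injectivity exactly by unfolding conversion to a common reduct and applying the inversion lemma for the reflexive--transitive closure of parallel reduction to both sides, just as you do. One small remark: confluence is not actually needed anywhere here, since the common reduct comes for free from the definition of conversion, so your passing appeal to it is superfluous rather than harmful.
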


\begin{theorem} \thmref{typing.lean}{convEqv,eqvConv} \label{lem:eq-conv}
  $ \ottnt{a}  \equiv  \ottnt{b} $ iff $ \ottnt{a}  \Leftrightarrow  \ottnt{b} $.
\end{theorem}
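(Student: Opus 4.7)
The plan is to prove the two implications separately, using in each direction the structural properties that have already been established for the other relation.

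For the forward direction ($ \ottnt{a}  \equiv  \ottnt{b} $ implies $ \ottnt{a}  \Leftrightarrow  \ottnt{b} $), I proceed by induction on the derivation of $ \ottnt{a}  \equiv  \ottnt{b} $. The E-Refl, E-Sym and E-Trans cases follow immediately from the corresponding reflexivity, symmetry, and transitivity of conversion established in Corollary \ref{lem:conv}; transitivity is the place where confluence (Theorem \ref{lem:par:confl}) is being silently used. For each (omitted) congruence rule, I invoke the congruence property of conversion from the same corollary. The only computational case is E-Beta: $   (  \lambda  \ottmv{x}  \mathbin{:}  \ottnt{A}  \mathpunct{.}  \ottnt{b}  )   \gap  \ottnt{a}   \equiv   \ottnt{b} [  \ottmv{x}  \mapsto  \ottnt{a}  ]  $, which I witness by taking the common reduct $ \ottnt{b} [  \ottmv{x}  \mapsto  \ottnt{a}  ] $: the left side reduces via P-Beta together with P-Refl-style reflexivity of parallel reduction on $\ottnt{b}$ and $\ottnt{a}$, and the right side reduces to itself in zero steps.

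For the backward direction, I first prove the auxiliary fact that $ \ottnt{a}  \Rightarrow  \ottnt{b} $ implies $ \ottnt{a}  \equiv  \ottnt{b} $ by induction on the parallel reduction derivation. The non-beta cases (P-Pi, P-Lam, P-App, P-Abs, P-Univ, P-Level<) apply the inductive hypotheses and then the congruence rules of definitional equality; the leaf cases P-Var, P-Lvl, P-Mty give E-Refl. The P-Beta case for $   (  \lambda  \ottmv{x}  \mathbin{:}  \ottnt{A}  \mathpunct{.}  \ottnt{b}  )   \gap  \ottnt{a}   \Rightarrow   \ottnt{b'} [  \ottmv{x}  \mapsto  \ottnt{a'}  ]  $ uses the inductive hypotheses $ \ottnt{b}  \equiv  \ottnt{b'} $ and $ \ottnt{a}  \equiv  \ottnt{a'} $ together with congruence to get $   (  \lambda  \ottmv{x}  \mathbin{:}  \ottnt{A}  \mathpunct{.}  \ottnt{b}  )   \gap  \ottnt{a}   \equiv    (  \lambda  \ottmv{x}  \mathbin{:}  \ottnt{A}  \mathpunct{.}  \ottnt{b'}  )   \gap  \ottnt{a'}  $, then chains this with E-Beta and E-Trans to arrive at $ \ottnt{b'} [  \ottmv{x}  \mapsto  \ottnt{a'}  ] $. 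A straightforward induction on the length of the reduction sequence lifts this to $ \ottnt{a}  \Rightarrow^\ast  \ottnt{b} $ implies $ \ottnt{a}  \equiv  \ottnt{b} $, via E-Refl and E-Trans.

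Given these, the main statement is immediate: if $ \ottnt{a}  \Leftrightarrow  \ottnt{b} $, unfold the definition to obtain a common reduct $\ottnt{c}$ with $ \ottnt{a}  \Rightarrow^\ast  \ottnt{c} $ and $ \ottnt{b}  \Rightarrow^\ast  \ottnt{c} $, derive $ \ottnt{a}  \equiv  \ottnt{c} $ and $ \ottnt{b}  \equiv  \ottnt{c} $ by the auxiliary lemma, and combine them via E-Sym and E-Trans. I expect the main point of care to be the backward direction's P-Beta case, since it is the one step where parallel reduction fires multiple redexes simultaneously while definitional equality must be assembled one congruence at a time; the trick is to first rewrite the redex congruentially using the inductive hypotheses under the binder and argument position, and only then apply E-Beta, rather than trying to beta-reduce first. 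Everything else is bookkeeping against the congruence rules of $\equiv$ that were elided from Figure \ref{fig:typing:basic}.
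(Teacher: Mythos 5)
Your proof is correct and follows the standard route that the paper's mechanization (convEqv/eqvConv) takes: the forward direction by induction on $ \ottnt{a}  \equiv  \ottnt{b} $ using the properties of conversion from Corollary~\ref{lem:conv} (with confluence hiding inside transitivity), and the backward direction by first showing single-step, then multi-step parallel reduction implies $\equiv$, handling \textsc{P-Beta} by congruence first and $\beta$ last so no substitutivity of $\equiv$ is needed. No substantive differences from the paper's approach.
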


\subsection{Subject reduction and type safety}

To prove subject reduction,
we need the usual weakening, substitution, replacement, and regularity lemmas.
They follow from stronger forms of these lemmas involving simultaneous renaming and substitution,
whose details we omit.

\begin{lemma} \thmref{safety.lean}{wtWeaken,wtSubst,wtReplace,wtRegularity} \label{lem:wt:preservation}
  \begin{itemize}
    \item \textit{Weakening.} If $ \mathop{\vdash}  \Gamma $, $ \Gamma  \vdash  \ottnt{B}  \mathrel{:}   \kw{U} \gap  \ottnt{k}  $, and $ \Gamma  \vdash  \ottnt{a}  \mathrel{:}  \ottnt{A} $,
      then $  \Gamma ,  \ottmv{x}  \mathbin{:}  \ottnt{B}   \vdash  \ottnt{a}  \mathrel{:}  \ottnt{A} $, where $\ottmv{x}$ not in $\ottnt{a}$, $\ottnt{A}$.
    \item \textit{Substitution.} If $ \Gamma  \vdash  \ottnt{b}  \mathrel{:}  \ottnt{B} $ and $  \Gamma ,  \ottmv{x}  \mathbin{:}  \ottnt{B}   \vdash  \ottnt{a}  \mathrel{:}  \ottnt{A} $,
      then $ \Gamma  \vdash   \ottnt{a} [  \ottmv{x}  \mapsto  \ottnt{b}  ]   \mathrel{:}   \ottnt{A} [  \ottmv{x}  \mapsto  \ottnt{b}  ]  $.
    \item \textit{Replacement.} If $ \ottnt{A}  \equiv  \ottnt{B} $, $ \Gamma  \vdash  \ottnt{B}  \mathrel{:}   \kw{U} \gap  \ottnt{k}  $, and $  \Gamma ,  \ottmv{x}  \mathbin{:}  \ottnt{A}   \vdash  \ottnt{c}  \mathrel{:}  \ottnt{C} $,
      then $  \Gamma ,  \ottmv{x}  \mathbin{:}  \ottnt{B}   \vdash  \ottnt{c}  \mathrel{:}  \ottnt{C} $.
    \item \textit{Regularity.} If $ \Gamma  \vdash  \ottnt{a}  \mathrel{:}  \ottnt{A} $, then there exists some $\ottnt{k}$ such that
      $ \Gamma  \vdash  \ottnt{A}  \mathrel{:}   \kw{U} \gap  \ottnt{k}  $.
  \end{itemize}
\end{lemma}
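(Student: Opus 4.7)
The plan is to strengthen weakening and substitution to simultaneous-renaming and simultaneous-substitution lemmas, prove those by induction on typing, and then bootstrap replacement and regularity from them. First, I would prove a renaming lemma: if $\xi$ is a variable map from $\Gamma$ to $\Delta$ with $\mathop{\vdash}\Delta$ and $\xi(\ottmv{y}) \mathbin{:} \ottnt{A}[\xi] \in \Delta$ whenever $\ottmv{y} \mathbin{:} \ottnt{A} \in \Gamma$, then $\Gamma \vdash \ottnt{a} \mathrel{:} \ottnt{A}$ implies $\Delta \vdash \ottnt{a}[\xi] \mathrel{:} \ottnt{A}[\xi]$. The proof is a routine induction on typing; under each binder I extend $\xi$ by the fresh bound variable and close $\mathop{\vdash} \Delta, \ottmv{x} \mathbin{:} \ottnt{A}[\xi]$ by \rref{Cons} using the domain's typing premise. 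Weakening is then the shift renaming into $\Gamma, \ottmv{x} \mathbin{:} \ottnt{B}$.

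Next I would generalize to simultaneous substitution: if $\sigma$ sends every $\ottmv{y} \mathbin{:} \ottnt{A} \in \Gamma$ to a term with $\Delta \vdash \sigma(\ottmv{y}) \mathrel{:} \ottnt{A}[\sigma]$, then typing transports along $\sigma$. The proof is again induction on typing; under binders $\sigma$ is lifted by the identity on the new variable and weakened on its existing images using the renaming lemma. The stated substitution lemma is the instance $\sigma = [\ottmv{x} \mapsto \ottnt{b}]$. Replacement then follows by instantiating simultaneous substitution with the identity-like substitution from $\Gamma, \ottmv{x} \mathbin{:} \ottnt{A}$ to $\Gamma, \ottmv{x} \mathbin{:} \ottnt{B}$: the only nontrivial coherence is at $\ottmv{x}$, where \rref{Var} yields $\Gamma, \ottmv{x} \mathbin{:} \ottnt{B} \vdash \ottmv{x} \mathrel{:} \ottnt{B}$; extracting $\Gamma \vdash \ottnt{A} \mathrel{:} \kw{U} \gap \ottnt{k'}$ from well-formedness of $\Gamma, \ottmv{x} \mathbin{:} \ottnt{A}$, weakening it into $\Gamma, \ottmv{x} \mathbin{:} \ottnt{B}$, and applying \rref{Conv} with $\ottnt{B} \equiv \ottnt{A}$ then produces $\Gamma, \ottmv{x} \mathbin{:} \ottnt{B} \vdash \ottmv{x} \mathrel{:} \ottnt{A}$.

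Finally I would prove regularity by induction on the typing derivation. Most cases read off a universe type from a premise or the conclusion's form: \rref{Pi}, \rref{Univ}, \rref{Level<}, \rref{Mty}, \rref{Lvl}, \rref{Trans}, \rref{Cumul}, and \rref{Conv} are immediate; \rref{Lam} and \rref{Abs} use their explicit well-typedness premise of $\Pi \ottmv{x} \mathbin{:} \ottnt{A} \mathpunct{.} \ottnt{B}$ or $\ottnt{A}$; \rref{Var} combines context well-formedness with weakening. The main obstacle is the \rref{App} case: the IH on $\ottnt{b} \mathrel{:} \Pi \ottmv{x} \mathbin{:} \ottnt{A} \mathpunct{.} \ottnt{B}$ yields $\Gamma \vdash \Pi \ottmv{x} \mathbin{:} \ottnt{A} \mathpunct{.} \ottnt{B} \mathrel{:} \kw{U} \gap \ottnt{k}$, but the conclusion needs a universe type for $\ottnt{B}[\ottmv{x} \mapsto \ottnt{a}]$. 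Extracting $\Gamma, \ottmv{x} \mathbin{:} \ottnt{A} \vdash \ottnt{B} \mathrel{:} \kw{U} \gap \ottnt{k''}$ requires an inversion principle on Pi-typings that sees through interleaved \rref{Conv} and \rref{Cumul} steps, appealing to syntactic consistency and injectivity of conversion at $\Pi$ and $\kw{U}$. Substitution then yields $\Gamma \vdash \ottnt{B}[\ottmv{x} \mapsto \ottnt{a}] \mathrel{:} \kw{U} \gap (\ottnt{k''}[\ottmv{x} \mapsto \ottnt{a}])$, completing the case.
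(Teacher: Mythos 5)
Your proposal takes essentially the same route as the paper: weakening and substitution are derived from stronger simultaneous renaming and morphing lemmas proved by induction on typing, replacement is the morphing lemma at an identity-like substitution whose only nontrivial entry is discharged by \rref{Var}, \rref{Conv}, and weakening of the type obtained from context well-formedness, and regularity is a routine induction using context well-typedness in the variable case. One minor remark: in the \rref{App} case of regularity, the needed inversion on Pi-typings goes through by plain structural induction, since \rref{Conv} and \rref{Cumul} leave the subject term untouched, so syntactic consistency and injectivity of conversion are not actually required there.
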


\begin{theorem}[Subject reduction] \thmref{safety.lean}{wtPar} \label{lem:preservation} \\
  If $ \ottnt{a}  \Rightarrow  \ottnt{b} $ and $ \Gamma  \vdash  \ottnt{a}  \mathrel{:}  \ottnt{A} $, then $ \Gamma  \vdash  \ottnt{b}  \mathrel{:}  \ottnt{A} $.
\end{theorem}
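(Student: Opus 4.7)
The plan is to proceed by induction on the parallel reduction derivation $\ottnt{a} \Rightarrow \ottnt{b}$, with case analysis on the last rule used. The uniform shape of each case is: invert the typing $\Gamma \vdash \ottnt{a} : \ottnt{A}$ to extract typings of $\ottnt{a}$'s subterms, apply the induction hypotheses to the component reductions, reassemble using the corresponding typing rule, and close with \rref{Conv} to match the original $\ottnt{A}$ whenever reassembly yields a type that is only definitionally equal.

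Before the main induction, I would first establish typing inversion lemmas for each head former that look through intervening uses of \rref{Conv}, using the injectivity properties of conversion to push equalities onto the subcomponents. For example, from $\Gamma \vdash \lambda \ottmv{x} : \ottnt{A}. \ottnt{b} : \ottnt{T}$ one should extract $\Gamma, \ottmv{x} : \ottnt{A} \vdash \ottnt{b} : \ottnt{B}$ for some $\ottnt{B}$ with $\ottnt{T} \equiv \Pi \ottmv{x} : \ottnt{A}. \ottnt{B}$, together with the well-typedness premises demanded by \rref{Lam}. With these in hand, the congruence cases (\rref{P-Var}, \rref{P-Lvl}, \rref{P-Mty}, \rref{P-Univ}, \rref{P-Level<}, \rref{P-App}, \rref{P-Pi}, \rref{P-Lam}, \rref{P-Abs}) become mostly mechanical. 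The recurring subtlety is that whenever a type annotation $\ottnt{A}$ reduces to $\ottnt{A'}$, any subterm originally typed in $\Gamma, \ottmv{x} : \ottnt{A}$ must be retyped in $\Gamma, \ottmv{x} : \ottnt{A'}$; this is exactly what Replacement provides, using $\ottnt{A} \equiv \ottnt{A'}$ (parallel reduction embeds into conversion) and $\Gamma \vdash \ottnt{A'} : \kw{U}\ \ottnt{k}$ obtained from the IH together with Regularity. The extra well-typedness premises of \rref{Lam} on both the domain and the full $\Pi$-type are essential here: they supply the induction hypotheses needed to retype the reduced $\Pi$-type alongside the body, avoiding circularity.

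The headline case is \rref{P-Beta}, where $(\lambda \ottmv{x} : \ottnt{A}. \ottnt{b})\ \ottnt{a} \Rightarrow \ottnt{b'}[\ottmv{x} \mapsto \ottnt{a'}]$. Inverting the application yields $\Gamma \vdash \lambda \ottmv{x} : \ottnt{A}. \ottnt{b} : \Pi \ottmv{x} : A_0. B_0$ and $\Gamma \vdash \ottnt{a} : A_0$ with the original type being $B_0[\ottmv{x} \mapsto \ottnt{a}]$. Inverting the lambda produces some $\ottnt{B}$ with $\Gamma, \ottmv{x} : \ottnt{A} \vdash \ottnt{b} : \ottnt{B}$ and $\Pi \ottmv{x} : \ottnt{A}. \ottnt{B} \equiv \Pi \ottmv{x} : A_0. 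B_0$; injectivity of conversion then gives $\ottnt{A} \equiv A_0$ and $\ottnt{B} \equiv B_0$. One \rref{Conv} step upgrades $\Gamma \vdash \ottnt{a} : A_0$ to $\Gamma \vdash \ottnt{a} : \ottnt{A}$; the IH then gives $\Gamma \vdash \ottnt{a'} : \ottnt{A}$ and $\Gamma, \ottmv{x} : \ottnt{A} \vdash \ottnt{b'} : \ottnt{B}$; Substitution produces $\Gamma \vdash \ottnt{b'}[\ottmv{x} \mapsto \ottnt{a'}] : \ottnt{B}[\ottmv{x} \mapsto \ottnt{a'}]$; and a final \rref{Conv} (via congruence of conversion on $\ottnt{B} \equiv B_0$ and $\ottnt{a} \equiv \ottnt{a'}$) aligns this with $B_0[\ottmv{x} \mapsto \ottnt{a}]$. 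The main obstacle throughout is engineering the inversion lemmas to deliver just the right premises for each case to close; once that is done, the rest is careful bookkeeping with Replacement and Conv.
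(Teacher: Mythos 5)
Your route differs from the paper's: you induct on the parallel reduction and rely on typing inversion for every term former, whereas the paper inducts on the typing derivation, so that \rref{Conv}, \rref{Cumul}, and \rref{Trans} are dispatched immediately by their induction hypotheses and the only inversion ever needed is on the $\lambda$ in the \rref{P-Beta} case. The gap in your version sits exactly at the two rules your inversion lemmas never mention: \rref{Cumul} and \rref{Trans} change the ascribed type by steps that are \emph{not} conversions, so lemmas that only ``look through intervening uses of Conv'' and relate the actual type to the canonical one by $\equiv$ are false in this system. Concretely, for application: if $ \Gamma  \vdash   \ottnt{f}  \gap  \ottnt{a}   \mathrel{:}   \kw{U} \gap   0   $ arises from \rref{App} with $B_0[\ottmv{x}\mapsto \ottnt{a}] =  \kw{U} \gap   0  $, a subsequent \rref{Cumul} step yields $ \Gamma  \vdash   \ottnt{f}  \gap  \ottnt{a}   \mathrel{:}   \kw{U} \gap   1   $, and no choice of witnesses can restore ``$T \equiv B_0[\ottmv{x}\mapsto\ottnt{a}]$'' since every $\Pi$-type assignable to $\ottnt{f}$ has a codomain convertible to $B_0$ and $ \kw{U} \gap   0   \not\equiv  \kw{U} \gap   1  $. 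The same problem breaks your closing step in \rref{P-Beta} and in the congruence cases for $ \kw{absurd}_{ \ottnt{A} } \gap  \ottnt{b} $ and application: the goal type $\ottnt{A}$ of the theorem need not be convertible to the type you rebuild, so the final \rref{Conv} cannot reach it. (The $\lambda$ inversion you describe is the one benign case: there \rref{Cumul}/\rref{Trans} are excluded because a $\Pi$-type is never convertible to a universe or level type, by syntactic consistency.)

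The approach is repairable, but it needs an idea that is absent from your writeup: state the inversion lemmas up to the typed subsumption chain generated by interleaving conversion with \rref{Cumul} and \rref{Trans} lifts (equivalently, in continuation style: ``any term typeable at $B_0[\ottmv{x}\mapsto\ottnt{a}]$, up to conversion, is also typeable at $T$''), and replay that chain after reassembling the reduced term. Alternatively, follow the paper and induct on the typing derivation, which makes \rref{Conv}, \rref{Cumul}, and \rref{Trans} one-line cases and confines inversion to the $\lambda$ in the $\beta$ case. A smaller remark: your explanation of why \rref{Lam} carries both well-typedness premises is tied to the paper's induction-on-typing strategy (there they literally supply induction hypotheses); in your reduction-induction setup the IH is available for any typing of a subreduct, so those premises mainly spare you an extra inversion rather than preventing a circularity.
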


\begin{proof}
  By induction on the typing derivation of $\ottnt{a}$.
  The most complex case is when the reduction is \rref*{P-Beta},
  requiring \cref{lem:conv} and \cref{lem:wt:preservation}.
  Even so, the proof is standard,
  and the cases for the universe and level rules in \cref{fig:typing:univ}
  follow from the induction hypotheses.
\end{proof}

At this point, we are able to prove admissibility of \rref{Lam}
without its first premise, which depends only on regularity.

\begin{corollary}[\textsc{Lam'}] \thmref{safety.lean}{wtfAbs} \label{Lam'} \\
  Given $ \Gamma  \vdash   \Pi  \ottmv{x}  \mathbin{:}  \ottnt{A}  \mathpunct{.}  \ottnt{B}   \mathrel{:}   \kw{U} \gap  \ottnt{k}  $ and $  \Gamma ,  \ottmv{x}  \mathbin{:}  \ottnt{A}   \vdash  \ottnt{b}  \mathrel{:}  \ottnt{B} $,
  we have $ \Gamma  \vdash   \lambda  \ottmv{x}  \mathbin{:}  \ottnt{A}  \mathpunct{.}  \ottnt{b}   \mathrel{:}   \Pi  \ottmv{x}  \mathbin{:}  \ottnt{A}  \mathpunct{.}  \ottnt{B}  $.
\end{corollary}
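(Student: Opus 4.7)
The plan is to apply \rref{Lam}, whose three premises are well-typedness of $A$ at some universe, well-typedness of $\Pi x:A. B$ at some universe, and well-typedness of $b$ in the extended context. The latter two are exactly the hypotheses of the corollary, so the only task is to derive $\Gamma \vdash A : \kw{U}\ k'$ for an appropriate level $k'$.

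This missing premise follows from regularity in essentially two steps. First, I apply \cref{lem:wt:preservation} to the second hypothesis $\Gamma, x:A \vdash b : B$. Alongside the routine fact that any typing derivation has a well-formed context (provable by a straightforward mutual induction), this yields $\mathop{\vdash} \Gamma, x:A$. Second, since \rref{Cons} is the only rule deriving a well-formed cons context, inverting it supplies some level $k_A$ with $\Gamma \vdash A : \kw{U}\ k_A$, which is precisely the piece missing from the \rref{Lam} premises.

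With all three premises in place, \rref{Lam} directly concludes $\Gamma \vdash \lambda x:A. b : \Pi x:A. B$. The only potential wrinkle is that \rref{Lam} may insist that its first two premises use a common universe level; if so, the discrepancy between $k_A$ (from regularity) and $k$ (from the Pi hypothesis) can be resolved by a single application of \rref{Cumul} to lift the lower typing to match the other. This last step is expected to be the main technical obstacle, though a minor one, since it just requires that $k_A$ and $k$ be comparable through the already-typed order. As the text advertises, the substance of the proof is the appeal to regularity; everything else is bookkeeping.
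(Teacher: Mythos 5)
Your main line is the paper's own: \rref{Lam} supplies the conclusion once the missing domain premise is recovered from the second hypothesis, exactly what the paper summarizes as ``depends only on regularity'' --- from $ \Gamma ,  \ottmv{x}  \mathbin{:}  \ottnt{A}   \vdash  \ottnt{b}  \mathrel{:}  \ottnt{B} $ one gets $ \mathop{\vdash}   \Gamma ,  \ottmv{x}  \mathbin{:}  \ottnt{A}  $ (context well-formedness, which is indeed a separate routine fact rather than part of \cref{lem:wt:preservation}, as you note), and inverting \rref{Cons} yields $ \Gamma  \vdash  \ottnt{A}  \mathrel{:}   \kw{U} \gap  \ottnt{k_{{\mathrm{1}}}} $ for some level $\ottnt{k_{{\mathrm{1}}}}$.

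The weak point is your contingency plan for the level mismatch. \rref{Cumul} cannot ``lift the lower typing to match the other'': its premise is a \emph{typing judgement} $ \Gamma  \vdash  \ottnt{k_{{\mathrm{1}}}}  \mathrel{:}   \kw{Level}\texttt{<} \gap  \ottnt{k}  $, and for arbitrary level terms no such comparability is derivable --- levels are open terms here, and the theory internalizes no totality or trichotomy on them. For instance, if $\ottnt{k_{{\mathrm{1}}}}$ and $\ottnt{k}$ are two distinct variables both of type $ \kw{Level}\texttt{<} \gap   \omega  $, neither is typeable as bounded by the other, nor are they convertible, so ``lift the lower one'' is simply unavailable. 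Hence, if \rref{Lam} is read as in \cref{fig:typing:basic} with a shared level in its first two premises, your patch fails; the repair in that reading is not \rref{Cumul} but an inversion of the hypothesis $ \Gamma  \vdash   \Pi  \ottmv{x}  \mathbin{:}  \ottnt{A}  \mathpunct{.}  \ottnt{B}   \mathrel{:}   \kw{U} \gap  \ottnt{k}  $ (peeling off \rref{Conv}/\rref{Cumul} steps down to \rref{Pi}), which returns $\ottnt{A}$ and $\ottnt{B}$ typed at a \emph{common} level $\ottnt{k'}$; rebuilding $ \Gamma  \vdash   \Pi  \ottmv{x}  \mathbin{:}  \ottnt{A}  \mathpunct{.}  \ottnt{B}   \mathrel{:}   \kw{U} \gap  \ottnt{k'}  $ with \rref{Pi} and applying \rref{Lam} at $\ottnt{k'}$ closes the goal, since the corollary's conclusion never mentions the level. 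Apart from this misstep (which is harmless only if the mechanized rule does not in fact force the two universe premises to share a level), the argument is the paper's.
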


For progress and type safety, our notion of evaluation is
the reflexive, transitive closure \fbox{$ \ottnt{a}  \rightsquigarrow^\ast  \ottnt{b} $}
of call-by-name (cbn) reduction \fbox{$ \ottnt{a}  \rightsquigarrow  \ottnt{b} $},
which reduces $\beta$-redexes and head positions.
A single step of cbn reduction embeds into
a single step of parallel reduction by induction,
which allows us to use \nameref{lem:preservation}.
These proofs are also standard.

\begin{lemma}[Progress] \thmref{safety.lean}{wtProgress} \label{lem:progress} \\
  If $  \cdot   \vdash  \ottnt{a}  \mathrel{:}  \ottnt{A} $, then either $\ottnt{a}$ is a value,
  or $ \ottnt{a}  \rightsquigarrow  \ottnt{b} $ for some $\ottnt{b}$.
\end{lemma}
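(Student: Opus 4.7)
The plan is to prove this by induction on the typing derivation of $\ottnt{a}$, with the crux being two canonical forms lemmas for the two elimination forms in the language.

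First, I would establish canonical forms for closed values at the relevant types:
\begin{itemize}
\item If $\ottnt{v}$ is a value and $ \cdot  \vdash \ottnt{v} \mathrel{:}  \Pi  \ottmv{x}  \mathbin{:}  \ottnt{A}  \mathpunct{.}  \ottnt{B} $, then $\ottnt{v}$ is of the form $ \lambda  \ottmv{x}  \mathbin{:}  \ottnt{A'}  \mathpunct{.}  \ottnt{b'} $.
\item If $\ottnt{v}$ is a value, then $ \cdot  \vdash \ottnt{v} \mathrel{:}  \bot $ is impossible.
\end{itemize}
Both follow from induction on the typing derivation of $\ottnt{v}$, where the only interesting case is \rref{Conv}, which is handled by \cref{lem:eq-conv} and \nameref{lem:par:consistency}: no conversion chain can cross between the syntactic shapes $\kw{Level}\texttt{<}$, $\kw{U}$, $\Pi$, and $\bot$, so the ``natural'' outer shape that \rref{Pi,Univ,Level<,Lvl,Mty,Lam} assigns to each value form is preserved up to conversion. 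In particular, the value forms $\ottmv{i}$, $ \Pi  \ottmv{x}  \mathbin{:}  \ottnt{A}  \mathpunct{.}  \ottnt{B} $, $ \lambda  \ottmv{x}  \mathbin{:}  \ottnt{A}  \mathpunct{.}  \ottnt{b} $, $ \bot $, $ \kw{U} \gap  \ottnt{k} $, and $ \kw{Level}\texttt{<} \gap  \ell $ have natural types of shape $\kw{Level}\texttt{<}$, $\kw{U}$, $\Pi$, $\kw{U}$, $\kw{U}$, and $\kw{U}$ respectively, none of which are conversion-compatible with $\bot$, and only the $\lambda$ case is compatible with $\Pi$.

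Then the main induction proceeds by cases on the last rule. \Rref{Var} is vacuous because the context is empty. \Rref{Pi,Lam,Mty,Univ,Level<,Lvl} all yield values directly. \Rref{Conv} follows immediately from the induction hypothesis, since reducibility and being a value are properties of the term, not the type. The two interesting cases are the elimination forms:
\begin{itemize}
\item \rref{App}: for $ \ottnt{b}  \gap  \ottnt{a} $ with $ \cdot  \vdash \ottnt{b} \mathrel{:}  \Pi  \ottmv{x}  \mathbin{:}  \ottnt{A}  \mathpunct{.}  \ottnt{B} $, apply the IH to $\ottnt{b}$. If $ \ottnt{b}  \rightsquigarrow  \ottnt{b'} $, then by the congruence step for application in call-by-name, $  \ottnt{b}  \gap  \ottnt{a}   \rightsquigarrow   \ottnt{b'}  \gap  \ottnt{a}  $. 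Otherwise $\ottnt{b}$ is a value, so by the Pi canonical forms lemma $\ottnt{b} =  \lambda  \ottmv{x}  \mathbin{:}  \ottnt{A'}  \mathpunct{.}  \ottnt{b'} $, and we can take the $\beta$-step to $ \ottnt{b'} [  \ottmv{x}  \mapsto  \ottnt{a}  ] $.
\item \rref{Abs}: for $ \kw{absurd}_{ \ottnt{A} } \gap  \ottnt{b} $ with $ \cdot  \vdash \ottnt{b} \mathrel{:}  \bot $, apply the IH to $\ottnt{b}$. If $ \ottnt{b}  \rightsquigarrow  \ottnt{b'} $, then $  \kw{absurd}_{ \ottnt{A} } \gap  \ottnt{b}   \rightsquigarrow   \kw{absurd}_{ \ottnt{A} } \gap  \ottnt{b'}  $. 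Otherwise $\ottnt{b}$ is a value, contradicting the $\bot$ canonical forms lemma.
\end{itemize}

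The main obstacle is the canonical forms argument for \rref{Conv}: one must iterate inversion through conversion steps while controlling the head shape of the type. This is exactly what \nameref{lem:par:consistency} buys us, so the work reduces to a careful induction on the typing derivation that collapses arbitrary conversion chains into a single conversion between a canonical natural type and the target type. Once these lemmas are in hand, the remainder of the proof is routine and mirrors the standard progress proof for simply typed dependent calculi.
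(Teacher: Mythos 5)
Your proposal is correct and follows essentially the same route as the paper: induction on the typing derivation, with canonical forms lemmas for function values and for the empty type (themselves proven by induction on typing, using the conversion/consistency machinery to handle \rref{Conv}), applied in the \rref{App} and \rref{Abs} cases. The only cosmetic difference is that you do not explicitly list the \rref{Cumul} and \rref{Trans} cases of the main induction, but these follow immediately from the induction hypothesis just like \rref{Conv}.
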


\begin{theorem}[Type safety] \thmref{safety.lean}{wtSafety} \\
  If $  \cdot   \vdash  \ottnt{a}  \mathrel{:}  \ottnt{A} $ and $ \ottnt{a}  \rightsquigarrow^\ast  \ottnt{b} $,
  then either $\ottnt{b}$ is a value,
  or $ \ottnt{b}  \rightsquigarrow  \ottnt{c} $ for some $\ottnt{c}$.
\end{theorem}

\section{Consistency and canonicity} \label{sec:lr}

To prove consistency and canonicity,
we use a logical relation to semantically interpret closed types as sets of closed terms;
these sets are backward closed under reduction,
so if a term reduces to something in the set, then it is also in the set.
The empty type is interpreted as the empty set,
universes as sets of terms that reduce to types,
and level types as sets of terms that reduce to concrete levels.
Consistency and canonicity then follow from the fundamental soundness theorem,
which states that if a term $\ottnt{a}$ has type $\ottnt{A}$,
then $\ottnt{a}$ is in the interpretation of $\ottnt{A}$.
For instance, there is no closed term of the empty type,
since it must belong to its interpretation as an empty set, which is a contradiction.
The structure of the logical relation and the soundness proof
is adapted from the mechanization by Liu~\citep{lr-pearl}.
We cover some details here,
especially as they pertain to universes and levels.

\subsection{Logical relation for closed types}

The logical relation is written as \fbox{$ \mathopen{\llbracket}  \ottnt{A}  \mathclose{\rrbracket}_{ \ottmv{i} } \searrow  \ottnt{P} $},
where $\ottnt{A}$ is the type, $\ottnt{P}$ is the set of terms,
and $\ottmv{i}$ is the universe level of the type.
A set of terms $\ottnt{P}$ is mechanized as a predicate on terms,
though we to write $ \ottnt{a}  \in  \ottnt{P} $ in lieu of $\ottnt{P}(\ottnt{a})$
to say that $\ottnt{a}$ is in the set,
and we use set-builder notation in lieu of explicit abstractions.
When proving properties of the logical relation,
we require no other axioms than predicate extensionality,
which follows from function and propositional extensionality;
we explicitly mark the lemmas in which they are used with $\dagger$.

Because universes are interpreted as sets of types
which themselves have interpretations at a lower universe level,
to ensure that the interpretation is well defined,
the mechanization implements it as an inductive definition
parametrized by interpretations at lower levels,
then ties the knot by well-founded induction on levels.
For clarity and concision, we ignore these details
and present the logical relation in \cref{fig:lr:closed}
without worrying about well-foundedness.

\begin{figure}
\begin{mathpar}
  \inferrule[\ottdrulename{I-Mty}]{~}
    { \mathopen{\llbracket}   \bot   \mathclose{\rrbracket}_{ \ottmv{i} } \searrow   \varnothing  }
  \and
  \inferrule[\ottdrulename{I-Univ}]
    { \ottmv{j}  <  \ottmv{i} }
  { \mathopen{\llbracket}   \kw{U} \gap   \ottmv{j}    \mathclose{\rrbracket}_{ \ottmv{i} } \searrow   \lbrace  \ottmv{z}  \mid   \exists  \ottnt{P}  \mathpunct{.}   \mathopen{\llbracket}   \mathit{ \ottmv{z} }   \mathclose{\rrbracket}_{ \ottmv{j} } \searrow  \ottnt{P}    \rbrace  }
  \and
  \inferrule[\ottdrulename{I-Level<}]{~}
    { \mathopen{\llbracket}   \kw{Level}\texttt{<} \gap   \ottmv{j_{{\mathrm{1}}}}    \mathclose{\rrbracket}_{ \ottmv{i} } \searrow   \lbrace  \ottmv{z}  \mid   \exists  \ottmv{j_{{\mathrm{2}}}}  \mathpunct{.}     \mathit{ \ottmv{z} }   \Rightarrow^\ast   \ottmv{j_{{\mathrm{2}}}}    \wedge   \ottmv{j_{{\mathrm{2}}}}  <  \ottmv{j_{{\mathrm{1}}}}     \rbrace  }
  \and
  \inferrule[\ottdrulename{I-Step}]
    { \ottnt{A}  \Rightarrow  \ottnt{B}  \and
      \mathopen{\llbracket}  \ottnt{B}  \mathclose{\rrbracket}_{ \ottmv{i} } \searrow  \ottnt{P} }
    { \mathopen{\llbracket}  \ottnt{A}  \mathclose{\rrbracket}_{ \ottmv{i} } \searrow  \ottnt{P} }
  \and
  \inferrule[\ottdrulename{I-Pi}]
    { \mathopen{\llbracket}  \ottnt{A}  \mathclose{\rrbracket}_{ \ottmv{i} } \searrow  \ottnt{P_{{\mathrm{1}}}}  \and
        \forall  \ottmv{y}  \mathpunct{.}   \mathit{ \ottmv{y} }    \in  \ottnt{P_{{\mathrm{1}}}}   \to   \exists  \ottnt{P_{{\mathrm{2}}}}  \mathpunct{.}   \ottnt{R} ( \ottmv{y} ,  \ottnt{P_{{\mathrm{2}}}} )    \\\\
       \forall  \ottmv{y}  \mathpunct{.}   \forall  \ottnt{P_{{\mathrm{2}}}}  \mathpunct{.}   \ottnt{R} ( \ottmv{y} ,  \ottnt{P_{{\mathrm{2}}}} )     \to   \mathopen{\llbracket}   \ottnt{B} [  \ottmv{x}  \mapsto   \mathit{ \ottmv{y} }   ]   \mathclose{\rrbracket}_{ \ottmv{i} } \searrow  \ottnt{P_{{\mathrm{2}}}}  }
    { \mathopen{\llbracket}   \Pi  \ottmv{x}  \mathbin{:}  \ottnt{A}  \mathpunct{.}  \ottnt{B}   \mathclose{\rrbracket}_{ \ottmv{i} } \searrow   \lbrace  f  \mid    \forall  \ottmv{y}  \mathpunct{.}   \forall  \ottnt{P_{{\mathrm{2}}}}  \mathpunct{.}   \ottnt{R} ( \ottmv{y} ,  \ottnt{P_{{\mathrm{2}}}} )     \to     \mathit{ \ottmv{y} }   \in  \ottnt{P_{{\mathrm{1}}}}   \to     \mathit{ f }   \gap   \mathit{ \ottmv{y} }    \in  \ottnt{P_{{\mathrm{2}}}}     \rbrace  }
\end{mathpar}
\caption{Logical relation for closed types \thmref{semantics.lean}{Interps}}
\label{fig:lr:closed}
\end{figure}

Let us get the easier cases out of the way.
The interpretation of the empty type as the empty set is given by \rref{I-Mty}.
\Rref{I-Step} backward closes the interpretation under reduction of the type,
so a type has an interpretation if it reduces to a type with an interpretation.
We show shortly that forward closure under reduction of the type also holds,
as well as backward closure under reduction of the \emph{terms} in the interpretations.%
\footnote{We do not require forward closure.}

Because we consider the interpretation of closed types only,
and we have a constructor for backward closure,
the only other constructors we need are those for normal, closed types.
In particular, we need only consider $ \kw{U} \gap   \ottmv{j}  $ and $ \kw{Level}\texttt{<} \gap   \ottmv{j_{{\mathrm{1}}}}  $
with concrete levels rather than arbitrary level terms.
The interpretation of $ \kw{Level}\texttt{<} \gap   \ottmv{j_{{\mathrm{1}}}}  $ given by \rref{I-Level<}
is the set of level terms strictly less than $\ottmv{j_{{\mathrm{1}}}}$;
more precisely, it is the set of terms that reduce to such concrete levels.
The interpretation of $ \kw{U} \gap   \ottmv{j}  $ given by \rref{I-Univ}
is the set of types that have an interpretation.

The intuition behind \rref{I-Pi} for function types is that a function $f$
is in its interpretation if for every argument $\ottmv{y}$ in the interpretation of the domain,
the application $  \mathit{ f }   \gap   \mathit{ \ottmv{y} }  $ is in the interpretation of the codomain.
Because we are dealing with dependent types,
the interpretation of the codomain varies with the argument,
so we need to ensure first that the interpretation exists
for \emph{every} argument in the interpretation of the domain,
and that $  \mathit{ f }   \gap   \mathit{ \ottmv{y} }  $ is in the \emph{particular} interpretation of the codomain.
It then sounds like we would want \rref{I-Pi'} below \thmref{semantics.lean}{interpsPi}.
\begin{mathpar}
  \inferrule*[right=\ottdrulename{I-Pi'}]
    { \mathopen{\llbracket}  \ottnt{A}  \mathclose{\rrbracket}_{ \ottmv{i} } \searrow  \ottnt{P_{{\mathrm{1}}}}  \and
        \forall  \ottmv{y}  \mathpunct{.}   \mathit{ \ottmv{y} }    \in  \ottnt{P_{{\mathrm{1}}}}   \to   \exists  \ottnt{P_{{\mathrm{2}}}}  \mathpunct{.}   \mathopen{\llbracket}   \ottnt{B} [  \ottmv{x}  \mapsto   \mathit{ \ottmv{y} }   ]   \mathclose{\rrbracket}_{ \ottmv{i} } \searrow  \ottnt{P_{{\mathrm{2}}}}   }
    { \mathopen{\llbracket}   \Pi  \ottmv{x}  \mathbin{:}  \ottnt{A}  \mathpunct{.}  \ottnt{B}   \mathclose{\rrbracket}_{ \ottmv{i} } \searrow   \lbrace  f  \mid   \forall  \ottmv{y}  \mathpunct{.}     \forall  \ottnt{P_{{\mathrm{2}}}}  \mathpunct{.}   (  \mathopen{\llbracket}   \ottnt{B} [  \ottmv{x}  \mapsto   \mathit{ \ottmv{y} }   ]   \mathclose{\rrbracket}_{ \ottmv{i} } \searrow  \ottnt{P_{{\mathrm{2}}}}  )    \to    \mathit{ \ottmv{y} }   \in  \ottnt{P_{{\mathrm{1}}}}    \to     \mathit{ f }   \gap   \mathit{ \ottmv{y} }    \in  \ottnt{P_{{\mathrm{2}}}}     \rbrace  }
\end{mathpar}

The problem is that the interpretation is not strictly positive in the conclusion,
so \rref*{I-Pi'} as a constructor is not well defined.
\Rref{I-Pi} therefore uses an auxiliary relation $\ottnt{R}$
that relates the argument $\ottmv{y}$ to the interpretation of the codomain $ \ottnt{B} [  \ottmv{x}  \mapsto   \mathit{ \ottmv{y} }   ] $.
\Rref{I-Pi'} then holds by instantiating $ \ottnt{R} ( \ottmv{y} ,  \ottnt{P_{{\mathrm{2}}}} ) $ with $ \mathopen{\llbracket}   \ottnt{B} [  \ottmv{x}  \mapsto   \mathit{ \ottmv{y} }   ]   \mathclose{\rrbracket}_{ \ottmv{i} } \searrow  \ottnt{P_{{\mathrm{2}}}} $ in \rref{I-Pi}.
This is the same trick used by Liu~\citep{lr-pearl},
whose origins are documented by Anand and Rahli~\citep{mech-nuprl}.

We require of the logical relation inversion properties for each constructor,
along with properties that hold \apriori for syntactic typing:
conversion and cumulativity.
A key intermediate lemma is functionality,
\ie that the interpretation of a type is deterministic.
Cumulativity holds directly by induction on the logical relation.
To prove conversion, we begin with closures over reductions.

\begin{lemma}[Cumulativity (l.r.)] \thmref{semantics.lean}{interpsCumul} \label{lem:lr:cumul} \\
  Suppose $ \ottmv{i}  <  \ottmv{j} $. If $ \mathopen{\llbracket}  \ottnt{A}  \mathclose{\rrbracket}_{ \ottmv{i} } \searrow  \ottnt{P} $, then $ \mathopen{\llbracket}  \ottnt{A}  \mathclose{\rrbracket}_{ \ottmv{j} } \searrow  \ottnt{P} $.
\end{lemma}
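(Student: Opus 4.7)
The plan is to proceed by induction on the derivation of $ \mathopen{\llbracket}  \ottnt{A}  \mathclose{\rrbracket}_{ \ottmv{i} } \searrow  \ottnt{P} $, showing in each case that the same $\ottnt{P}$ arises as an interpretation at the larger level $\ottmv{j}$. Since the logical relation is itself built by well-founded induction on universe levels, care is needed to confirm that the inner interpretations referenced in the \rref{I-Univ} and \rref{I-Pi} cases really are the same sets regardless of whether we view them as living at level $\ottmv{i}$ or level $\ottmv{j}$; fortunately, the sets $\ottnt{P}$ appearing in the conclusions of each constructor depend only on the subterms and, for \rref{I-Univ}, on the level of the interpreted universe, not on the ambient level index.

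The routine cases are immediate. For \rref{I-Mty}, the conclusion $ \mathopen{\llbracket}   \bot   \mathclose{\rrbracket}_{ \ottmv{j} } \searrow   \varnothing  $ is another instance of the same rule at level $\ottmv{j}$. For \rref{I-Level<}, the set $ \lbrace  \ottmv{z}  \mid   \exists  \ottmv{j_{{\mathrm{2}}}}  \mathpunct{.}     \mathit{ \ottmv{z} }   \Rightarrow^\ast   \ottmv{j_{{\mathrm{2}}}}    \wedge   \ottmv{j_{{\mathrm{2}}}}  <  \ottmv{j_{{\mathrm{1}}}}     \rbrace $ is independent of the ambient level, so the same rule applies at level $\ottmv{j}$. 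For \rref{I-Step}, we have $ \ottnt{A}  \Rightarrow  \ottnt{B} $ and $ \mathopen{\llbracket}  \ottnt{B}  \mathclose{\rrbracket}_{ \ottmv{i} } \searrow  \ottnt{P} $; the induction hypothesis yields $ \mathopen{\llbracket}  \ottnt{B}  \mathclose{\rrbracket}_{ \ottmv{j} } \searrow  \ottnt{P} $ and re-applying \rref{I-Step} closes the case.

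For \rref{I-Univ}, the premise is $\ottmv{k} < \ottmv{i}$ where $\ottnt{A}$ is $ \kw{U} \gap   \ottmv{k}  $, so by transitivity of the level order we get $\ottmv{k} < \ottmv{j}$, and the set $ \lbrace  \ottmv{z}  \mid   \exists  \ottnt{P'}  \mathpunct{.}   \mathopen{\llbracket}   \mathit{ \ottmv{z} }   \mathclose{\rrbracket}_{ \ottmv{k} } \searrow  \ottnt{P'}    \rbrace $ is literally the same at either ambient level; \rref{I-Univ} re-applies. For \rref{I-Pi}, we apply the induction hypothesis to the interpretation of $\ottnt{A}$ and, pointwise under the auxiliary relation $\ottnt{R}$, to each interpretation of $ \ottnt{B} [  \ottmv{x}  \mapsto   \mathit{ \ottmv{y} }   ] $, then instantiate \rref{I-Pi} at level $\ottmv{j}$ with the same $\ottnt{R}$. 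The resulting set of functions is unchanged because it is described in terms of $\ottnt{R}$ and the (identical) set $\ottnt{P_{{\mathrm{1}}}}$.

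The one place that requires thought is reconciling the on-paper induction with the mechanized definition, which is stratified by well-founded induction on levels and has an inner inductive definition at each level. The cleanest way to organize the proof in Lean is to prove the statement by the same well-founded induction on $\ottmv{i}$ that defines the relation, with an inner induction on the derivation supplying the constructor analysis above; the \rref{I-Univ} case is the only place where the outer induction hypothesis is genuinely needed, and even there we only need that the set of interpretable types at level $\ottmv{k}$ does not change. I expect this bookkeeping to be the main obstacle in a formal development; the mathematical content of the lemma is otherwise straightforward constructor-by-constructor re-assembly.
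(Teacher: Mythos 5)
Your proof is correct and follows the same route as the paper, which establishes cumulativity directly by induction on the logical relation, with the \rref{I-Univ} case closed by transitivity of the level order and the remaining cases being straightforward re-application of the constructors (the interpretation sets not depending on the ambient level index). Your remark about organizing the mechanized argument around the well-founded stratification matches how the development handles it; no gap here.
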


\begin{lemma}[Forward and backward closure (l.r.)] \thmref{semantics.lean}{interps\{Fwds,Bwds\}} \label{lem:lr:pars} ~
  \begin{enumerate}[topsep=0pt]
    \item If $ \mathopen{\llbracket}  \ottnt{A}  \mathclose{\rrbracket}_{ \ottmv{i} } \searrow  \ottnt{P} $ and either $ \ottnt{A}  \Rightarrow  \ottnt{B} $ or $ \ottnt{A}  \Rightarrow^\ast  \ottnt{B} $,
      then $ \mathopen{\llbracket}  \ottnt{B}  \mathclose{\rrbracket}_{ \ottmv{i} } \searrow  \ottnt{P} $.
    \item If $ \mathopen{\llbracket}  \ottnt{B}  \mathclose{\rrbracket}_{ \ottmv{i} } \searrow  \ottnt{P} $ and either $ \ottnt{A}  \Rightarrow  \ottnt{B} $ or $ \ottnt{A}  \Rightarrow^\ast  \ottnt{B} $,
      then $ \mathopen{\llbracket}  \ottnt{A}  \mathclose{\rrbracket}_{ \ottmv{i} } \searrow  \ottnt{P} $.
  \end{enumerate}
\end{lemma}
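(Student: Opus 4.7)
The two items should be established together, with item 2 proved first. The single-step form of item 2 is immediate from \rref{I-Step}, and the multi-step form follows by a routine induction on $\ottnt{A} \Rightarrow^\ast \ottnt{B}$, chaining single applications of \rref{I-Step}. With backward closure in hand, I would then prove the single-step form of item 1 by induction on the derivation of $\mathopen{\llbracket}\ottnt{A}\mathclose{\rrbracket}_{\ottmv{i}} \searrow \ottnt{P}$, and lift to the multi-step form by induction on $\ottnt{A} \Rightarrow^\ast \ottnt{B}$.

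For the induction in item 1, the base cases \rref{I-Mty}, \rref{I-Univ}, and \rref{I-Level<} are settled by \cref{lem:pars:inv}: in each case $\ottnt{A}$ is a value, so $\ottnt{B}$ has the same syntactic shape, and moreover the inner level in \rref{I-Univ} and \rref{I-Level<} is concrete, which only reduces to itself by \rref{P-Lvl}; hence $\ottnt{B}$ is syntactically $\ottnt{A}$ and the same rule reapplies to give $\ottnt{P}$. In the \rref{I-Pi} case, \cref{lem:pars:inv} gives $\ottnt{B} = \Pi\ottmv{x}\mathbin{:}\ottnt{A'}\mathpunct{.}\ottnt{B'}$ with componentwise parallel reductions; applying the induction hypotheses to the two premises of \rref{I-Pi}, and using that parallel reduction commutes with substitution by a variable (so $\ottnt{B}[\ottmv{x}\mapsto\ottmv{y}] \Rightarrow \ottnt{B'}[\ottmv{x}\mapsto\ottmv{y}]$ since $\ottmv{y} \Rightarrow \ottmv{y}$), reassembles a derivation $\mathopen{\llbracket}\Pi\ottmv{x}\mathbin{:}\ottnt{A'}\mathpunct{.}\ottnt{B'}\mathclose{\rrbracket}_{\ottmv{i}} \searrow \ottnt{P}$ with the same auxiliary relation $\ottnt{R}$ and hence the same set $\ottnt{P}$.

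The delicate case is \rref{I-Step}, where the derivation gives $\ottnt{A} \Rightarrow \ottnt{C}$ with $\mathopen{\llbracket}\ottnt{C}\mathclose{\rrbracket}_{\ottmv{i}} \searrow \ottnt{P}$, and I must ``overtake'' the unrelated reduction $\ottnt{A} \Rightarrow \ottnt{B}$. Here I would invoke \nameref{lem:par:diamond} to obtain a common reduct $\ottnt{D}$ with $\ottnt{C} \Rightarrow \ottnt{D}$ and $\ottnt{B} \Rightarrow \ottnt{D}$; the induction hypothesis applied along $\ottnt{C} \Rightarrow \ottnt{D}$ yields $\mathopen{\llbracket}\ottnt{D}\mathclose{\rrbracket}_{\ottmv{i}} \searrow \ottnt{P}$, and backward closure (item 2) along $\ottnt{B} \Rightarrow \ottnt{D}$ then gives the goal. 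This \rref{I-Step} case is the main obstacle: it explains both why item 2 must precede item 1 and why confluence is indispensable, since without the diamond property there is no way to reconcile the two one-step reducts $\ottnt{C}$ and $\ottnt{B}$ of $\ottnt{A}$.
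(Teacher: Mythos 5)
Your proposal is correct and follows essentially the same route as the paper: backward closure is immediate from the I-Step constructor (multi-step by induction on the reduction), and forward closure goes by induction on the logical relation, using the substitution property of parallel reduction to handle the codomain in the I-Pi case and the diamond property to reconcile the two reducts in the I-Step case, with the multi-step version again by induction on the reduction. Your elaborations (value inversion in the base cases, and closing the I-Step case by the induction hypothesis followed by backward closure along the common reduct) are exactly how the paper's sketch is meant to be filled in.
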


\begin{proof} ~
  \begin{enumerate}[topsep=0pt]
    \item For $ \ottnt{A}  \Rightarrow  \ottnt{B} $, by induction on the logical relation,
      using \nameref{lem:par:diamond} in the \rref*{I-Step} case.
      \nameref{lem:pars:subst} is needed in the \rref*{I-Pi} case
      to manipulate the substitution in the function codomain.
      For $ \ottnt{A}  \Rightarrow^\ast  \ottnt{B} $, by induction on this reduction.
    \item For $ \ottnt{A}  \Rightarrow  \ottnt{B} $, directly by \rref{I-Step}.
      For $ \ottnt{A}  \Rightarrow^\ast  \ottnt{B} $, by induction on this reduction. \qedhere
  \end{enumerate}
\end{proof}

\begin{corollary}[Conversion (l.r.)] \thmref{semantics.lean}{interpsConv} \\
  If $ \mathopen{\llbracket}  \ottnt{A}  \mathclose{\rrbracket}_{ \ottmv{i} } \searrow  \ottnt{P} $ and $ \ottnt{A}  \Leftrightarrow  \ottnt{B} $,
  then $ \mathopen{\llbracket}  \ottnt{B}  \mathclose{\rrbracket}_{ \ottmv{i} } \searrow  \ottnt{P} $,
  using forward and backward closure.
\end{corollary}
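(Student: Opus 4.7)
The plan is to unfold the definition of conversion and use the two closure directions of \nameref{lem:lr:pars} in sequence. By definition, $ \ottnt{A}  \Leftrightarrow  \ottnt{B} $ gives us some common reduct $\ottnt{c}$ with $ \ottnt{A}  \Rightarrow^\ast  \ottnt{c} $ and $ \ottnt{B}  \Rightarrow^\ast  \ottnt{c} $. So we have a natural two-step path for transporting the interpretation: first ride the reduction $ \ottnt{A}  \Rightarrow^\ast  \ottnt{c} $ forward, then ride the reduction $ \ottnt{B}  \Rightarrow^\ast  \ottnt{c} $ backward.

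Concretely, I would first apply the forward closure direction of \nameref{lem:lr:pars} to $ \mathopen{\llbracket}  \ottnt{A}  \mathclose{\rrbracket}_{ \ottmv{i} } \searrow  \ottnt{P} $ together with $ \ottnt{A}  \Rightarrow^\ast  \ottnt{c} $ to obtain $ \mathopen{\llbracket}  \ottnt{c}  \mathclose{\rrbracket}_{ \ottmv{i} } \searrow  \ottnt{P} $. Then I would apply the backward closure direction to this and to $ \ottnt{B}  \Rightarrow^\ast  \ottnt{c} $ to conclude $ \mathopen{\llbracket}  \ottnt{B}  \mathclose{\rrbracket}_{ \ottmv{i} } \searrow  \ottnt{P} $. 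Importantly, the predicate $\ottnt{P}$ is the same at every step, since both closure lemmas preserve $\ottnt{P}$; this is exactly why the corollary falls out immediately rather than requiring any auxiliary functionality argument.

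There is no real obstacle here: all of the work has been done in \nameref{lem:lr:pars}, where the inductive case analysis on the logical relation (in particular the \rref*{I-Step} case using \nameref{lem:par:diamond} and the \rref*{I-Pi} case using \nameref{lem:pars:subst}) had to be handled for forward closure. Once both directions are available, conversion is just their composition through the confluent witness $\ottnt{c}$ provided by the definition of $ \relax   \Leftrightarrow   \relax  $.
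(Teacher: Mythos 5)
Your proof is correct and matches the paper's argument exactly: the paper likewise obtains the common reduct $\ottnt{c}$ from the definition of $ \ottnt{A}  \Leftrightarrow  \ottnt{B} $ and composes the forward and backward directions of \nameref{lem:lr:pars}, with the same $\ottnt{P}$ throughout. No gaps; the only quibble is that calling $\ottnt{c}$ a ``confluent witness'' is loose wording, since confluence itself is not needed here---the definition of conversion already supplies the common reduct.
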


The final closure lemma we need is backward closure of the terms in the interpretations.
When proving the fundamental theorem,
we encounter situations where our goal requires inclusion of a reduced term in an interpretation,
while induction hypotheses only piece together inclusion of the term before reduction.

\begin{lemma}[Backward closure] \thmref{semantics.lean}{interpsBwdsP} \label{lem:lr:back} \\
  If $ \mathopen{\llbracket}  \ottnt{A}  \mathclose{\rrbracket}_{ \ottmv{i} } \searrow  \ottnt{P} $ and $ \ottnt{a}  \Rightarrow^\ast  \ottnt{b} $,
  then $ \ottnt{b}  \in  \ottnt{P} $ implies $ \ottnt{a}  \in  \ottnt{P} $.
\end{lemma}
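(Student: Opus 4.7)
The plan is to proceed by induction on the derivation of $ \mathopen{\llbracket}  \ottnt{A}  \mathclose{\rrbracket}_{ \ottmv{i} } \searrow  \ottnt{P} $, casing on the final rule applied. The base cases are essentially immediate: for \rref{I-Mty}, the predicate $\ottnt{P}$ is empty so $ \ottnt{b}  \in  \ottnt{P} $ is vacuous; for \rref{I-Level<}, membership in $\ottnt{P}$ amounts to reducing to a concrete level $\ottmv{j_{{\mathrm{2}}}}$ with $ \ottmv{j_{{\mathrm{2}}}}  <  \ottmv{j_{{\mathrm{1}}}} $, which transfers from $\ottnt{b}$ to $\ottnt{a}$ by transitivity of $ \relax   \Rightarrow^\ast  \relax $; and for \rref{I-Univ}, membership says $\ottnt{b}$ itself has some interpretation $ \mathopen{\llbracket}  \ottnt{b}  \mathclose{\rrbracket}_{ \ottmv{j} } \searrow  \ottnt{P'} $, which transfers to $ \mathopen{\llbracket}  \ottnt{a}  \mathclose{\rrbracket}_{ \ottmv{j} } \searrow  \ottnt{P'} $ by the already-established backward closure at the type level (\cref{lem:lr:pars}(2)).

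The \rref*{I-Step} case is handled by direct appeal to the induction hypothesis on the sub-derivation $ \mathopen{\llbracket}  \ottnt{B}  \mathclose{\rrbracket}_{ \ottmv{i} } \searrow  \ottnt{P} $, since the predicate $\ottnt{P}$ does not change along the type reduction. The interesting case is \rref{I-Pi}, where $\ottnt{P}$ is the set of functions $f$ such that for every $\ottmv{y}, \ottnt{P_{{\mathrm{2}}}}$ with $ \ottnt{R} ( \ottmv{y} ,  \ottnt{P_{{\mathrm{2}}}} ) $ and $ \ottmv{y}  \in  \ottnt{P_{{\mathrm{1}}}} $, we have $   \mathit{ f }   \gap   \mathit{ \ottmv{y} }    \in  \ottnt{P_{{\mathrm{2}}}} $. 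To show $ \ottnt{a}  \in  \ottnt{P} $, I fix such $\ottmv{y}, \ottnt{P_{{\mathrm{2}}}}$; from $ \ottnt{b}  \in  \ottnt{P} $ I obtain $   \ottnt{b}  \gap   \mathit{ \ottmv{y} }    \in  \ottnt{P_{{\mathrm{2}}}} $. Using the construction lemma for parallel reduction (\cref{lem:pars:cons}), $ \ottnt{a}  \Rightarrow^\ast  \ottnt{b} $ lifts to $   \ottnt{a}  \gap   \mathit{ \ottmv{y} }    \Rightarrow^\ast    \ottnt{b}  \gap   \mathit{ \ottmv{y} }   $, and the induction hypothesis for the codomain interpretation $ \mathopen{\llbracket}   \ottnt{B} [  \ottmv{x}  \mapsto   \mathit{ \ottmv{y} }   ]   \mathclose{\rrbracket}_{ \ottmv{i} } \searrow  \ottnt{P_{{\mathrm{2}}}} $ (which we have access to because $ \ottnt{R} ( \ottmv{y} ,  \ottnt{P_{{\mathrm{2}}}} ) $ holds) yields $   \ottnt{a}  \gap   \mathit{ \ottmv{y} }    \in  \ottnt{P_{{\mathrm{2}}}} $ as required.

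The main subtlety is the \rref*{I-Pi} case: the induction must provide hypotheses not just for the domain's interpretation but also, universally, for every codomain interpretation encoded through the auxiliary relation $\ottnt{R}$. This is precisely why \rref{I-Pi} was formulated with $\ottnt{R}$ rather than the non-strictly-positive \rref{I-Pi'}, and the induction principle associated with the inductive definition naturally supplies these codomain IHs. Once that setup is in place, the proof is a mechanical reduction to \cref{lem:pars:cons} and the IHs, with no further use of confluence or typing beyond what is already available.
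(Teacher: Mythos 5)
Your proposal is correct and follows the same route as the paper: induction on the derivation of the logical relation, with the \rref*{I-Univ} case discharged by type-level backward closure (\cref{lem:lr:pars}) and the \rref*{I-Pi} case by the codomain induction hypotheses supplied through the auxiliary relation $\ottnt{R}$ together with congruence of application (\cref{lem:pars:cons}). The paper's proof sketch only highlights the \rref*{I-Univ} case, but the details you fill in for the remaining cases are exactly the intended ones.
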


\begin{proof}
  By induction on the logical relation.
  In the \rref*{I-Univ} case, where $\ottnt{a}$ and $\ottnt{b}$ are types,
  we use backward closure from \cref{lem:lr:pars}.
\end{proof}

The inversion principles for each constructor of the logical relation
hold by induction, using properties of parallel reduction as needed.
However, it is the inversion principle for \rref{I-Pi'} that we want.
The issue lies in the set of terms of the interpretation:
if we do not yet know that the sets are unique,
then inversion on \rref{I-Pi} gives \emph{some} interpretation $\ottnt{P_{{\mathrm{2}}}}$ of the codomain,
but we do not know whether it is \emph{the} interpretation that is required.
We solve this by proving functionality.

\begin{lemma}[Fixed-level functionality (l.r.)]$\!\!{\dagger}$ \thmref{semantics.lean}{interpsDet'} \label{lem:lr:fixed-func} \\
  If $ \mathopen{\llbracket}  \ottnt{A}  \mathclose{\rrbracket}_{ \ottmv{i} } \searrow  \ottnt{P} $ and $ \mathopen{\llbracket}  \ottnt{A}  \mathclose{\rrbracket}_{ \ottmv{i} } \searrow  \ottnt{Q} $, then $\ottnt{P} = \ottnt{Q}$.
\end{lemma}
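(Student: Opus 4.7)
The plan is to prove functionality by induction on the derivation of $ \mathopen{\llbracket}  \ottnt{A}  \mathclose{\rrbracket}_{ \ottmv{i} } \searrow  \ottnt{P} $, with case analysis on the derivation of $ \mathopen{\llbracket}  \ottnt{A}  \mathclose{\rrbracket}_{ \ottmv{i} } \searrow  \ottnt{Q} $ in each case. The central idea is that both derivations eventually bottom out at a head constructor (\textsc{I-Mty}, \textsc{I-Univ}, \textsc{I-Level<}, or \textsc{I-Pi}) after possibly many uses of \textsc{I-Step}, and confluence together with inversion on parallel reduction forces those head constructors to agree on parameters.

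First I would dispatch the asymmetric \textsc{I-Step} mismatches: if one derivation uses \textsc{I-Step} with $ \ottnt{A}  \Rightarrow  \ottnt{A'} $ while the other does not, I apply forward closure from \cref{lem:lr:pars} to transport the latter into a derivation $ \mathopen{\llbracket}  \ottnt{A'}  \mathclose{\rrbracket}_{ \ottmv{i} } \searrow  \ottnt{Q} $ and then invoke the induction hypothesis on the smaller derivation. After these reductions, the remaining cases have matching head constructors. For \textsc{I-Mty}, \textsc{I-Univ}, and \textsc{I-Level<}, reading off the premises gives reductions $ \ottnt{A}  \Rightarrow^\ast   \bot  $, $ \ottnt{A}  \Rightarrow^\ast   \kw{U} \gap   \ottmv{j}   $, or $ \ottnt{A}  \Rightarrow^\ast   \kw{Level}\texttt{<} \gap   \ottmv{j_{{\mathrm{1}}}}   $ on each side; by confluence (\cref{lem:par:confl}) and value-inversion of parallel reduction (\cref{lem:pars:inv}), the head forms and their indices must coincide, so the two set-builder descriptions of $\ottnt{P}$ and $\ottnt{Q}$ are pointwise identical, and predicate extensionality yields $\ottnt{P} = \ottnt{Q}$.

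The main obstacle is the \textsc{I-Pi} case. Here each derivation carries its own auxiliary relation $\ottnt{R}, \ottnt{R'}$ relating an argument $\ottmv{y}$ to \emph{some} interpretation $\ottnt{P_{{\mathrm{2}}}}$, $\ottnt{P_{{\mathrm{2}}}}'$ of the codomain $ \ottnt{B} [  \ottmv{x}  \mapsto   \mathit{ \ottmv{y} }   ] $. From confluence and inversion on parallel reduction, the two Pi-types must share a common reduct $ \Pi  \ottmv{x}  \mathbin{:}  \ottnt{A'}  \mathpunct{.}  \ottnt{B'} $, with domains and codomains joinable. Applying \cref{lem:lr:pars} to align the domain derivations and then the induction hypothesis gives $\ottnt{P_{{\mathrm{1}}}} = \ottnt{P_{{\mathrm{1}}}}'$. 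For the codomain, fixing any $\ottmv{y} \in \ottnt{P_{{\mathrm{1}}}}$, the second and third premises of each \textsc{I-Pi} supply witnesses $\ottnt{P_{{\mathrm{2}}}}, \ottnt{P_{{\mathrm{2}}}}'$ together with actual interpretation derivations of $ \ottnt{B} [  \ottmv{x}  \mapsto   \mathit{ \ottmv{y} }   ] $; the induction hypothesis, applied to the genuinely nested codomain derivations (after using \cref{lem:lr:pars} to bring them to a common reduct), equates $\ottnt{P_{{\mathrm{2}}}}$ and $\ottnt{P_{{\mathrm{2}}}}'$.

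Finally, to conclude the two function sets are equal, I show pointwise that $f \in P$ iff $f \in Q$: given any $\ottmv{y} \in \ottnt{P_{{\mathrm{1}}}}$, the membership condition on each side reduces under $\ottnt{R}$ or $\ottnt{R'}$ to requiring $  \mathit{ f }   \gap   \mathit{ \ottmv{y} }  $ live in some codomain interpretation, which we have just proven to be the same set. Predicate extensionality (marked $\dagger$) then upgrades this pointwise equivalence to an actual equality of predicates, giving $\ottnt{P} = \ottnt{Q}$. The subtle step throughout the \textsc{I-Pi} case is being careful that each appeal to the induction hypothesis is on a subderivation actually present in the original derivation of $ \mathopen{\llbracket}  \ottnt{A}  \mathclose{\rrbracket}_{ \ottmv{i} } \searrow  \ottnt{P} $, rather than on an arbitrary derivation one extracts from the existential witnesses of $\ottnt{R}$ or $\ottnt{R'}$.
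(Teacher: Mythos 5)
Your proposal is correct and follows essentially the same route as the paper: induction on the first derivation, forward closure to discharge \textsc{I-Step}, inversion/parallel-reduction facts for the base constructors, and in the \textsc{I-Pi} case the induction hypotheses on domain and codomain combined with predicate extensionality to equate the two function sets. The only differences are cosmetic: after peeling \textsc{I-Step} the two derivations concern the syntactically identical type, so your confluence/common-reduct detours in the base and Pi cases are unnecessary, and your careful handling of the codomain witnesses from $R$ and $R'$ is exactly how the paper's appeal to the induction hypotheses works.
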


\begin{proof}
  By induction on the first logical relation,
  then generally inversion on the second,
  except for the \rref*{I-Step} case,
  which holds directly by the induction hypothesis
  and forward closure on the second logical relation.
  The complex case is \rref*{I-Pi},
  where we must prove the two sets of terms equal,
  knowing by the induction hypotheses
  that the interpretations of the domain and codomain yield equal sets.
  Because sets are encoded as predicates,
  we need to use predicate extensionality.
  It then suffices to show that membership in one set implies membership in the other,
  which holds using the induction hypotheses.
\end{proof}

Functionality holds even with different universe levels,
the idea being that the interpretation of a type is independent
of the level at which it lives.
We are then finally able to prove the inversion property for \rref{I-Pi'}.

\begin{lemma}[Functionality (l.r.)] \thmref{semantics.lean}{interpsDet} \label{lem:lr:func} \\
  If $ \mathopen{\llbracket}  \ottnt{A}  \mathclose{\rrbracket}_{ \ottmv{i} } \searrow  \ottnt{P} $ and $ \mathopen{\llbracket}  \ottnt{A}  \mathclose{\rrbracket}_{ \ottmv{j} } \searrow  \ottnt{Q} $, then $\ottnt{P} = \ottnt{Q}$.
\end{lemma}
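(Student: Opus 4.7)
The plan is to reduce this to the fixed-level functionality lemma (\cref{lem:lr:fixed-func}) by using cumulativity (\cref{lem:lr:cumul}) to equalize the two universe levels. Since the set of levels is totally ordered, we may do case analysis on the comparison of $\ottmv{i}$ and $\ottmv{j}$: either $\ottmv{i} = \ottmv{j}$, or $\ottmv{i} < \ottmv{j}$, or $\ottmv{j} < \ottmv{i}$.

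In the first case, the two interpretations are already at the same level, so $\ottnt{P} = \ottnt{Q}$ follows immediately from fixed-level functionality. In the second case, apply \nameref{lem:lr:cumul} to $ \mathopen{\llbracket}  \ottnt{A}  \mathclose{\rrbracket}_{ \ottmv{i} } \searrow  \ottnt{P} $ to obtain $ \mathopen{\llbracket}  \ottnt{A}  \mathclose{\rrbracket}_{ \ottmv{j} } \searrow  \ottnt{P} $, then invoke fixed-level functionality at level $\ottmv{j}$ against $ \mathopen{\llbracket}  \ottnt{A}  \mathclose{\rrbracket}_{ \ottmv{j} } \searrow  \ottnt{Q} $ to conclude $\ottnt{P} = \ottnt{Q}$. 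The third case is entirely symmetric: lift $\ottnt{Q}$ from level $\ottmv{j}$ to level $\ottmv{i}$ via cumulativity, then apply fixed-level functionality at level $\ottmv{i}$.

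There is no real obstacle here: cumulativity precisely says that the predicate $\ottnt{P}$ is preserved when the level is raised (the \emph{same} $\ottnt{P}$, not merely an equivalent one), so no further reasoning about the shape of $\ottnt{A}$ is needed. The lemma is essentially a one-line corollary of \cref{lem:lr:fixed-func,lem:lr:cumul} combined with trichotomy on the woset of levels, and requires predicate extensionality only indirectly through the use of \nameref{lem:lr:fixed-func}.
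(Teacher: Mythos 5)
Your proposal matches the paper's proof: both use totality/trichotomy of the level order, lift the lower-level interpretation via \nameref{lem:lr:cumul} to the common level, and conclude with \nameref{lem:lr:fixed-func}. It is correct as stated, with no gaps.
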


\begin{proof}
  By totality of the order on levels,
  either $\ottmv{i}$ and $\ottmv{j}$ are equal,
  or one is strictly larger than the other.
  In the latter case,
  we use \nameref{lem:lr:cumul} to lift the logical relation at the lower level to the higher level.
  Then the sets are equal by \nameref{lem:lr:fixed-func}.
\end{proof}

\begin{lemma}[Inversion on function types (l.r.)]$\!\!{\dagger}$ \thmref{semantics.lean}{interpsPiInv} \label{lem:lr:inv-pi} \\
  If $ \mathopen{\llbracket}   \Pi  \ottmv{x}  \mathbin{:}  \ottnt{A}  \mathpunct{.}  \ottnt{B}   \mathclose{\rrbracket}_{ \ottmv{i} } \searrow  \ottnt{P} $,
  then there exists a $\ottnt{P_{{\mathrm{1}}}}$ such that:
  \begin{enumerate}[topsep=0pt]
    \item \label{lem:inv-pi:goal:A} $ \mathopen{\llbracket}  \ottnt{A}  \mathclose{\rrbracket}_{ \ottmv{i} } \searrow  \ottnt{P_{{\mathrm{1}}}} $;
    \item \label{lem:inv-pi:goal:B} $   \forall  \ottmv{y}  \mathpunct{.}   \mathit{ \ottmv{y} }    \in  \ottnt{P_{{\mathrm{1}}}}   \to   \exists  \ottnt{P_{{\mathrm{2}}}}  \mathpunct{.}   \mathopen{\llbracket}   \ottnt{B} [  \ottmv{x}  \mapsto   \mathit{ \ottmv{y} }   ]   \mathclose{\rrbracket}_{ \ottmv{i} } \searrow  \ottnt{P_{{\mathrm{2}}}}   $; and
    \item \label{lem:inv-pi:goal:P} $\ottnt{P} =  \lbrace  f  \mid   \forall  \ottmv{y}  \mathpunct{.}     \forall  \ottnt{P_{{\mathrm{2}}}}  \mathpunct{.}   (  \mathopen{\llbracket}   \ottnt{B} [  \ottmv{x}  \mapsto   \mathit{ \ottmv{y} }   ]   \mathclose{\rrbracket}_{ \ottmv{i} } \searrow  \ottnt{P_{{\mathrm{2}}}}  )    \to    \mathit{ \ottmv{y} }   \in  \ottnt{P_{{\mathrm{1}}}}    \to     \mathit{ f }   \gap   \mathit{ \ottmv{y} }    \in  \ottnt{P_{{\mathrm{2}}}}     \rbrace $.
  \end{enumerate}
\end{lemma}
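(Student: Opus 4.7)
My plan is to proceed by induction on the derivation of $\mathopen{\llbracket}   \Pi  \ottmv{x}  \mathbin{:}  \ottnt{A}  \mathpunct{.}  \ottnt{B}   \mathclose{\rrbracket}_{ \ottmv{i} } \searrow  \ottnt{P} $. Only two rules can apply: \rref{I-Pi} and \rref{I-Step}, since no other constructor of the logical relation produces a type of $\Pi$-shape.

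The \rref{I-Pi} case supplies $\ottnt{P_{{\mathrm{1}}}}$ with $\mathopen{\llbracket}  \ottnt{A}  \mathclose{\rrbracket}_{ \ottmv{i} } \searrow  \ottnt{P_{{\mathrm{1}}}}$ directly, discharging goal 1, along with an auxiliary relation $\ottnt{R}$ whose two stated properties (totality on $\ottnt{P_{{\mathrm{1}}}}$, and $\ottnt{R}(\ottmv{y}, \ottnt{P_{{\mathrm{2}}}})$ entailing $\mathopen{\llbracket}  \ottnt{B}[\ottmv{x} \mapsto \ottmv{y}] \mathclose{\rrbracket}_{ \ottmv{i} } \searrow  \ottnt{P_{{\mathrm{2}}}}$) combine to give goal 2. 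Goal 3 is the interesting one: the set defining $\ottnt{P}$ here is phrased using $\ottnt{R}$, whereas the target set is phrased using the interpretation itself. Predicate extensionality (hence the $\dagger$) reduces the equation to a pointwise biimplication; one direction follows at once from the $\ottnt{R}$-to-interpretation premise, and the other supposes an external witness $\mathopen{\llbracket}  \ottnt{B}[\ottmv{x} \mapsto \ottmv{y}] \mathclose{\rrbracket}_{ \ottmv{i} } \searrow  \ottnt{P_{{\mathrm{2}}}}$, uses totality of $\ottnt{R}$ to obtain some $\ottnt{R}(\ottmv{y}, \ottnt{P_{{\mathrm{2}}}}')$ with its associated interpretation, and invokes \cref{lem:lr:fixed-func} to conclude $\ottnt{P_{{\mathrm{2}}}} = \ottnt{P_{{\mathrm{2}}}}'$.

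In the \rref{I-Step} case, \cref{lem:pars:inv} forces the reduct to be of the form $ \Pi  \ottmv{x}  \mathbin{:}  \ottnt{A'}  \mathpunct{.}  \ottnt{B'} $ with $ \ottnt{A}  \Rightarrow  \ottnt{A'} $ and $ \ottnt{B}  \Rightarrow  \ottnt{B'} $. The induction hypothesis on the subderivation yields a $\ottnt{P_{{\mathrm{1}}}}$ satisfying the three goals for $\ottnt{A'}, \ottnt{B'}$. Backward closure (\cref{lem:lr:pars}) transfers goal 1 back to $\ottnt{A}$. For goal 2, given $\ottmv{y} \in \ottnt{P_{{\mathrm{1}}}}$ the IH provides an interpretation of $\ottnt{B'}[\ottmv{x} \mapsto \ottmv{y}]$; by \cref{lem:pars:subst} we have $\ottnt{B}[\ottmv{x} \mapsto \ottmv{y}] \Rightarrow \ottnt{B'}[\ottmv{x} \mapsto \ottmv{y}]$, and backward closure then yields the interpretation of $\ottnt{B}[\ottmv{x} \mapsto \ottmv{y}]$ with the same $\ottnt{P_{{\mathrm{2}}}}$. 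For goal 3, the IH describes $\ottnt{P}$ as a set phrased over $\ottnt{B'}$; after predicate extensionality, the two directions of the biimplication are discharged by forward and backward closure respectively, which interchange the interpretations of $\ottnt{B'}[\ottmv{x} \mapsto \ottmv{y}]$ and $\ottnt{B}[\ottmv{x} \mapsto \ottmv{y}]$.

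The hardest step is goal 3 in the \rref{I-Pi} case, which genuinely needs \cref{lem:lr:fixed-func}: without functionality one cannot identify the codomain interpretation supplied internally by $\ottnt{R}$ with an externally quantified one, and the target set description cannot be recovered. This is precisely why functionality must be proved prior to this inversion lemma, and why predicate extensionality enters the picture; the remaining goals, and the entire \rref{I-Step} case, reduce to routine closure and substitutivity arguments on the logical relation.
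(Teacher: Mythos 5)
Your proposal is correct and its essential content coincides with the paper's proof: the heart of the argument is exactly the paper's, namely using predicate extensionality to reduce the set equation to a pointwise biimplication, with one direction immediate from the $R$-to-interpretation premise and the other obtained from totality of $R$ plus functionality to identify the externally quantified codomain interpretation with the one supplied by $R$. The only difference is organizational: the paper invokes its previously established raw inversion principle (itself proved by induction, absorbing the \rref{I-Step} case) and then performs this conversion, whereas you inline that induction and discharge \rref{I-Step} with forward/backward closure and substitutivity of parallel reduction --- the same bookkeeping in a different place.
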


\begin{proof}
  By inversion on the logical relation,
  which gives $\ottnt{P_{{\mathrm{1}}}}$ and $\ottnt{R}$ such that:
  \begin{enumerate}[topsep=0pt,start=4]
    \item \label{lem:inv-pi:hyp:A} $ \mathopen{\llbracket}  \ottnt{A}  \mathclose{\rrbracket}_{ \ottmv{i} } \searrow  \ottnt{P_{{\mathrm{1}}}} $;
    \item \label{lem:inv-pi:hyp:R} $   \forall  \ottmv{y}  \mathpunct{.}   \mathit{ \ottmv{y} }    \in  \ottnt{P_{{\mathrm{1}}}}   \to   \exists  \ottnt{P_{{\mathrm{2}}}}  \mathpunct{.}   \ottnt{R} ( \ottmv{y} ,  \ottnt{P_{{\mathrm{2}}}} )   $;
    \item \label{lem:inv-pi:hyp:B} $  \forall  \ottmv{y}  \mathpunct{.}   \forall  \ottnt{P_{{\mathrm{2}}}}  \mathpunct{.}   \ottnt{R} ( \ottmv{y} ,  \ottnt{P_{{\mathrm{2}}}} )     \to   \mathopen{\llbracket}   \ottnt{B} [  \ottmv{x}  \mapsto   \mathit{ \ottmv{y} }   ]   \mathclose{\rrbracket}_{ \ottmv{i} } \searrow  \ottnt{P_{{\mathrm{2}}}}  $; and
    \item \label{lem:inv-pi:hyp:P} $\ottnt{P} =  \lbrace  f  \mid    \forall  \ottmv{y}  \mathpunct{.}   \forall  \ottnt{P_{{\mathrm{2}}}}  \mathpunct{.}   \ottnt{R} ( \ottmv{y} ,  \ottnt{P_{{\mathrm{2}}}} )     \to     \mathit{ \ottmv{y} }   \in  \ottnt{P_{{\mathrm{1}}}}   \to     \mathit{ f }   \gap   \mathit{ \ottmv{y} }    \in  \ottnt{P_{{\mathrm{2}}}}     \rbrace $.
  \end{enumerate}
  \ref{lem:inv-pi:goal:A} holds directly by \ref{lem:inv-pi:hyp:A},
  and \ref{lem:inv-pi:goal:B} holds by combining \ref{lem:inv-pi:hyp:R} and \ref{lem:inv-pi:hyp:B}.
  To show that the sets in \ref{lem:inv-pi:goal:P} and \ref{lem:inv-pi:hyp:P} are equal,
  we again use predicate extensionality.
  \begin{itemize}[topsep=0pt]
    \item \textit{\ref{lem:inv-pi:goal:P} implies \ref{lem:inv-pi:hyp:P}.}
      Supposing $\ottmv{y}$ and $\ottnt{P_{{\mathrm{2}}}}$,
      we have three hypotheses $   (  \mathopen{\llbracket}   \ottnt{B} [  \ottmv{x}  \mapsto   \mathit{ \ottmv{y} }   ]   \mathclose{\rrbracket}_{ \ottmv{i} } \searrow  \ottnt{P_{{\mathrm{2}}}}  )   \to    \mathit{ \ottmv{y} }   \in  \ottnt{P_{{\mathrm{1}}}}    \to     \mathit{ f }   \gap   \mathit{ \ottmv{y} }    \in  \ottnt{P_{{\mathrm{2}}}}  $,
      $ \ottnt{R} ( \ottmv{y} ,  \ottnt{P_{{\mathrm{2}}}} ) $, and $  \mathit{ \ottmv{y} }   \in  \ottnt{P_{{\mathrm{1}}}} $.
      From \ref{lem:inv-pi:hyp:B} on the second hypothesis,
      we have $ \mathopen{\llbracket}   \ottnt{B} [  \ottmv{x}  \mapsto   \mathit{ \ottmv{y} }   ]   \mathclose{\rrbracket}_{ \ottmv{i} } \searrow  \ottnt{P_{{\mathrm{2}}}} $,
      so we can apply the first hypothesis to get $   \mathit{ f }   \gap   \mathit{ \ottmv{y} }    \in  \ottnt{P_{{\mathrm{2}}}} $.
    \item \textit{\ref{lem:inv-pi:hyp:P} implies \ref{lem:inv-pi:goal:P}.}
      Supposing $\ottmv{y}$ and $\ottnt{P_{{\mathrm{2}}}}$,
      we have three hypotheses $  \ottnt{R} ( \ottmv{y} ,  \ottnt{P_{{\mathrm{2}}}} )   \to     \mathit{ \ottmv{y} }   \in  \ottnt{P_{{\mathrm{1}}}}   \to     \mathit{ f }   \gap   \mathit{ \ottmv{y} }    \in  \ottnt{P_{{\mathrm{2}}}}   $,
      $ \mathopen{\llbracket}   \ottnt{B} [  \ottmv{x}  \mapsto   \mathit{ \ottmv{y} }   ]   \mathclose{\rrbracket}_{ \ottmv{i} } \searrow  \ottnt{P_{{\mathrm{2}}}} $, and $  \mathit{ \ottmv{y} }   \in  \ottnt{P_{{\mathrm{1}}}} $.
      By the first hypothesis on the second and on \ref{lem:inv-pi:hyp:R},
      there exists a $\ottnt{P'_{{\mathrm{2}}}}$ such that $   \mathit{ f }   \gap   \mathit{ \ottmv{y} }    \in  \ottnt{P'_{{\mathrm{2}}}} $.
      From \ref{lem:inv-pi:hyp:B}, we also have $ \mathopen{\llbracket}   \ottnt{B} [  \ottmv{x}  \mapsto   \mathit{ \ottmv{y} }   ]   \mathclose{\rrbracket}_{ \ottmv{i} } \searrow  \ottnt{P'_{{\mathrm{2}}}} $.
      Then by \nameref{lem:lr:func}, we have $\ottnt{P_{{\mathrm{2}}}} = \ottnt{P'_{{\mathrm{2}}}}$, so $   \mathit{ f }   \gap   \mathit{ \ottmv{y} }    \in  \ottnt{P_{{\mathrm{2}}}} $.
      \qedhere
  \end{itemize}
\end{proof}

Inversion principles also hold for the other types
by induction on the logical relation.

\begin{lemma}[Inversion on universes (l.r.)] \thmref{semantics.lean}{interps$\mathcal{U}$Inv} \label{lem:lr:inv-univ} \\
  If $ \mathopen{\llbracket}   \kw{U} \gap  \ottnt{k}   \mathclose{\rrbracket}_{ \ottmv{i} } \searrow  \ottnt{P} $ and $ \ottnt{A}  \in  \ottnt{P} $,
  then there exists $\ottmv{j}, \ottnt{Q}$ such that $ \ottnt{k}  \Rightarrow^\ast   \ottmv{j}  $ and $ \mathopen{\llbracket}  \ottnt{A}  \mathclose{\rrbracket}_{ \ottmv{j} } \searrow  \ottnt{Q} $.
\end{lemma}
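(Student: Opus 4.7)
The plan is to proceed by induction on the derivation of $ \mathopen{\llbracket}   \kw{U} \gap  \ottnt{k}   \mathclose{\rrbracket}_{ \ottmv{i} } \searrow  \ottnt{P} $. Inspecting the constructors of the logical relation in \cref{fig:lr:closed}, the only ones whose conclusion can have $ \kw{U} \gap  \ottnt{k} $ on the left are \rref{I-Univ} and \rref{I-Step}; the cases for \rref{I-Mty}, \rref{I-Level<}, and \rref{I-Pi} are ruled out by the syntactic shape of the indexed type.

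In the \rref*{I-Univ} case, $\ottnt{k}$ is definitionally a concrete level $\ottmv{j}$, and $\ottnt{P}$ is the set $ \lbrace  \ottmv{z}  \mid   \exists  \ottnt{Q}  \mathpunct{.}   \mathopen{\llbracket}   \mathit{ \ottmv{z} }   \mathclose{\rrbracket}_{ \ottmv{j} } \searrow  \ottnt{Q}    \rbrace $. Unfolding the hypothesis $ \ottnt{A}  \in  \ottnt{P} $ yields directly some $\ottnt{Q}$ with $ \mathopen{\llbracket}  \ottnt{A}  \mathclose{\rrbracket}_{ \ottmv{j} } \searrow  \ottnt{Q} $, and we take $\ottmv{j}$ as the witness, noting $ \ottnt{k}  \Rightarrow^\ast   \ottmv{j}  $ by reflexivity since $\ottnt{k}$ and $\ottmv{j}$ are the same term.

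In the \rref*{I-Step} case, we have $  \kw{U} \gap  \ottnt{k}   \Rightarrow  \ottnt{B} $ and $ \mathopen{\llbracket}  \ottnt{B}  \mathclose{\rrbracket}_{ \ottmv{i} } \searrow  \ottnt{P} $ as the premise from which $ \mathopen{\llbracket}   \kw{U} \gap  \ottnt{k}   \mathclose{\rrbracket}_{ \ottmv{i} } \searrow  \ottnt{P} $ was derived. Inversion on single-step parallel reduction (which by the rules in \cref{fig:par} leaves only \rref{P-Univ} applicable to a universe head) forces $\ottnt{B}$ to be of the form $ \kw{U} \gap  \ottnt{k'} $ with $ \ottnt{k}  \Rightarrow  \ottnt{k'} $. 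The induction hypothesis then applies to $ \mathopen{\llbracket}   \kw{U} \gap  \ottnt{k'}   \mathclose{\rrbracket}_{ \ottmv{i} } \searrow  \ottnt{P} $ and $ \ottnt{A}  \in  \ottnt{P} $, yielding $\ottmv{j}$ and $\ottnt{Q}$ with $ \ottnt{k'}  \Rightarrow^\ast   \ottmv{j}  $ and $ \mathopen{\llbracket}  \ottnt{A}  \mathclose{\rrbracket}_{ \ottmv{j} } \searrow  \ottnt{Q} $. Prepending the single step $ \ottnt{k}  \Rightarrow  \ottnt{k'} $ gives $ \ottnt{k}  \Rightarrow^\ast   \ottmv{j}  $, as required.

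The only subtlety is the inversion on $  \kw{U} \gap  \ottnt{k}   \Rightarrow  \ottnt{B} $ in the \rref*{I-Step} case, which is not one of the inversion facts explicitly listed in \cref{lem:pars:inv} (those are stated for $ \Rightarrow^\ast $ starting from a value), but is an immediate case analysis on the single-step parallel reduction rules. Otherwise the argument is entirely mechanical, and I do not anticipate needing any closure, functionality, or extensionality results here.
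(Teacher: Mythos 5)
Your proof is correct and matches the paper's (only sketched) argument: the paper proves this inversion principle by induction on the logical relation, using properties of parallel reduction, which is exactly your case analysis on \rref{I-Univ} and \rref{I-Step} with single-step inversion forcing the \rref{P-Univ} shape. No gaps; the single-step inversion you flag is indeed immediate and is the "property of parallel reduction" the paper alludes to.
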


\begin{lemma}[Inversion on level types (l.r.)] \thmref{semantics.lean}{interpLvlInv} \label{lem:lr:inv-lvl} \\
  If $ \mathopen{\llbracket}   \kw{Level}\texttt{<} \gap  \ell   \mathclose{\rrbracket}_{ \ottmv{i} } \searrow  \ottnt{P} $ and $ \ottnt{k}  \in  \ottnt{P} $,
  then there exist $\ottmv{j_{{\mathrm{2}}}} < \ottmv{j_{{\mathrm{1}}}}$ such that $ \ell  \Rightarrow^\ast   \ottmv{j_{{\mathrm{1}}}}  $ and $ \ottnt{k}  \Rightarrow^\ast   \ottmv{j_{{\mathrm{2}}}}  $.
\end{lemma}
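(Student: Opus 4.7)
The plan is to prove this by induction on the derivation of $\mathopen{\llbracket} \kw{Level}\texttt{<} \gap \ell \mathclose{\rrbracket}_{\ottmv{i}} \searrow \ottnt{P}$, closely mirroring the structure of \nameref{lem:lr:inv-univ}. Of the constructors of the logical relation in \cref{fig:lr:closed}, only \rref{I-Level<} and \rref{I-Step} can apply, since \rref{I-Mty}, \rref{I-Univ}, and \rref{I-Pi} produce types of incompatible syntactic shape ($\bot$, $\kw{U} \gap \cdot$, and $\Pi \cdot : \cdot. \cdot$ respectively), and these are ruled out by direct comparison with $\kw{Level}\texttt{<} \gap \ell$.

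In the \rref{I-Level<} case, $\ell$ is itself a concrete level $\ottmv{j_{{\mathrm{1}}}}$ and $\ottnt{P}$ is exactly $\lbrace \ottmv{z} \mid \exists \ottmv{j_{{\mathrm{2}}}}.\ \ottmv{z} \Rightarrow^\ast \ottmv{j_{{\mathrm{2}}}} \wedge \ottmv{j_{{\mathrm{2}}}} < \ottmv{j_{{\mathrm{1}}}} \rbrace$. Unfolding the hypothesis $\ottnt{k} \in \ottnt{P}$ directly yields the required $\ottmv{j_{{\mathrm{2}}}} < \ottmv{j_{{\mathrm{1}}}}$ with $\ottnt{k} \Rightarrow^\ast \ottmv{j_{{\mathrm{2}}}}$, and $\ell \Rightarrow^\ast \ottmv{j_{{\mathrm{1}}}}$ holds by reflexivity of the reflexive, transitive closure.

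In the \rref{I-Step} case, we have $\kw{Level}\texttt{<} \gap \ell \Rightarrow \ottnt{B}$ with $\mathopen{\llbracket} \ottnt{B} \mathclose{\rrbracket}_{\ottmv{i}} \searrow \ottnt{P}$. The only parallel reduction rule in \cref{fig:par} that can fire on the left-hand side is \rref{P-Level<}, so inversion forces $\ottnt{B} = \kw{Level}\texttt{<} \gap \ell'$ with $\ell \Rightarrow \ell'$. Applying the induction hypothesis to the subderivation $\mathopen{\llbracket} \kw{Level}\texttt{<} \gap \ell' \mathclose{\rrbracket}_{\ottmv{i}} \searrow \ottnt{P}$ together with $\ottnt{k} \in \ottnt{P}$ produces $\ottmv{j_{{\mathrm{2}}}} < \ottmv{j_{{\mathrm{1}}}}$ with $\ell' \Rightarrow^\ast \ottmv{j_{{\mathrm{1}}}}$ and $\ottnt{k} \Rightarrow^\ast \ottmv{j_{{\mathrm{2}}}}$; prepending the single step $\ell \Rightarrow \ell'$ upgrades this to $\ell \Rightarrow^\ast \ottmv{j_{{\mathrm{1}}}}$, as required.

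There is no serious obstacle: the only potential subtlety is ensuring that parallel reduction from $\kw{Level}\texttt{<} \gap \ell$ preserves the outer shape, which is immediate from the form of \rref{P-Level<} and already available as a standard parallel reduction inversion lemma in \cref{lem:pars:inv} (used at the single-step level here rather than its transitive closure).
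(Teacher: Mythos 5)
Your proposal is correct and matches the paper's approach: the paper proves all such inversion principles (including this one) by induction on the logical relation, using inversion on parallel reduction in the \rref*{I-Step} case, exactly as you spell out. The case analysis is exhaustive and the handling of \rref*{I-Level<} and \rref*{I-Step} is right, so there is nothing to add.
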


\begin{lemma}[Inversion (l.r.)] \thmref{semantics.lean}{interpsStepInv} \label{lem:lr:inv} \\
  If $ \mathopen{\llbracket}  \ottnt{C}  \mathclose{\rrbracket}_{ \ottmv{i} } \searrow  \ottnt{P} $, then one of the following holds: $ \ottnt{C}  \Rightarrow^\ast   \bot  $; or
  \begin{itemize}[topsep=0pt]
    \item There exist $\ottnt{A}$ and $\ottnt{B}$ such that $ \ottnt{C}  \Rightarrow^\ast   \Pi  \ottmv{x}  \mathbin{:}  \ottnt{A}  \mathpunct{.}  \ottnt{B}  $; or
    \item There exists $\ottmv{i}$ such that $ \ottnt{C}  \Rightarrow^\ast   \kw{U} \gap   \ottmv{i}   $ or $ \ottnt{C}  \Rightarrow^\ast   \kw{Level}\texttt{<} \gap   \ottmv{i}   $.
  \end{itemize}
\end{lemma}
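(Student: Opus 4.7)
The plan is to proceed by straightforward induction on the derivation of $ \mathopen{\llbracket}  \ottnt{C}  \mathclose{\rrbracket}_{ \ottmv{i} } \searrow  \ottnt{P} $. The four ``base'' constructors each pin down the syntactic shape of $\ottnt{C}$ immediately, and the recursive \rref{I-Step} constructor propagates the conclusion up one reduction step.

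Concretely, in the \rref{I-Mty} case we have $\ottnt{C} =  \bot $, and $ \ottnt{C}  \Rightarrow^\ast   \bot  $ holds by reflexivity of $ \Rightarrow^\ast $. In the \rref{I-Univ} case we have $\ottnt{C} =  \kw{U} \gap   \ottmv{j}  $ for some $\ottmv{j}$, and the third disjunct holds with this witness, again by reflexivity. The \rref{I-Level<} case is symmetric: $\ottnt{C} =  \kw{Level}\texttt{<} \gap   \ottmv{j_{{\mathrm{1}}}}  $ and the third disjunct holds with $\ottmv{j_{{\mathrm{1}}}}$. The \rref{I-Pi} case gives $\ottnt{C} =  \Pi  \ottmv{x}  \mathbin{:}  \ottnt{A}  \mathpunct{.}  \ottnt{B} $ directly, discharging the second disjunct.

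The only case that uses the induction hypothesis is \rref{I-Step}: we have $ \ottnt{C}  \Rightarrow  \ottnt{D} $ and $ \mathopen{\llbracket}  \ottnt{D}  \mathclose{\rrbracket}_{ \ottmv{i} } \searrow  \ottnt{P} $, together with the induction hypothesis applied to the derivation for $\ottnt{D}$. That hypothesis gives us a reduction $ \ottnt{D}  \Rightarrow^\ast  \ottnt{E} $ where $\ottnt{E}$ has one of the four permitted shapes ($ \bot $, a $\Pi$-type, a universe, or a level type). Prepending the single step $ \ottnt{C}  \Rightarrow  \ottnt{D} $ (which embeds into $ \Rightarrow^\ast $) and using transitivity of $ \Rightarrow^\ast $ yields $ \ottnt{C}  \Rightarrow^\ast  \ottnt{E} $, which establishes the same disjunct for $\ottnt{C}$ with the same witnesses.

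There is essentially no obstacle here: no confluence, no functionality, and no predicate extensionality are needed, since we are only extracting the ``head shape up to reduction'' already built into the logical relation's constructors. The only mild care required is to note that the single step from \rref{I-Step} must be lifted to the reflexive--transitive closure before being composed with the reduction produced by the induction hypothesis.
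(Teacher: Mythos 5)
Your proof is correct and matches the paper's approach: the paper also establishes this lemma by induction on the logical relation, with the base constructors determining the head shape of $\ottnt{C}$ up to reflexivity and the \rref{I-Step} case composing the single parallel step with the reduction from the induction hypothesis. No further machinery (confluence, functionality, extensionality) is needed, exactly as you observe.
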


\subsection{Fundamental soundness theorem}

Although the logical relation relates closed types to sets of closed terms,
the fundamental theorem is proven over syntactic typing of open terms,
so we need a notion of semantic typing that handles closing over the terms
in a given typing context with a simultaneous substitution.
Semantic typing is then elementhood of a term in the interpretation of its type
for any substitution that closes them both.

At this point, referring to simultaneous substitutions is inevitable.
We denote them as $\sigma$, and write $ \sigma ,  \ottmv{x}  \mapsto  \ottnt{a} $
for its extension by a single substitution of $\ottmv{x}$ by $\ottnt{a}$.
In the mechanization, semantic well-typedness of a substitution \fbox{$ \sigma  \vDash  \Gamma $}
is defined similarly to semantic typing \fbox{$ \Gamma  \vdash  \ottnt{a}  \mathrel{:}  \ottnt{A} $},
but the admissible rules defined in \cref{fig:sem:subst} are more convenient.

\begin{definition} \thmref{semantics.lean}{semSubst}
  A substitution $\sigma$ is semantically well typed
  wrt context $\Gamma$ iff for every $ \ottmv{x}  \mathrel{:}  \ottnt{A}  \in  \Gamma $,
  there exist $\ottmv{i}, \ottnt{P}$ such that
  $ \mathopen{\llbracket}   \ottnt{A} [  \sigma  ]   \mathclose{\rrbracket}_{ \ottmv{i} } \searrow  \ottnt{P} $ and $   \mathit{ \ottmv{x} }  [  \sigma  ]   \in  \ottnt{P} $.
\end{definition}

\begin{figure}
\begin{mathpar}
  \mprset{fraction={\cdot\cdots\cdot}}
  \inferrule*[right=\ottdrulename{I-Nil}]{~}{ \sigma  \vDash   \cdot  }
  \and
  \inferrule*[right=\ottdrulename{I-Cons}]
    { \sigma  \vDash  \Gamma  \and
      \mathopen{\llbracket}   \ottnt{A} [  \sigma  ]   \mathclose{\rrbracket}_{ \ottmv{i} } \searrow  \ottnt{P}  \and
      \ottnt{a}  \in  \ottnt{P} }
    {  \sigma ,  \ottmv{x}  \mapsto  \ottnt{a}   \vDash   \Gamma ,  \ottmv{x}  \mathbin{:}  \ottnt{A}  }
\end{mathpar}
\caption{Semantically well-typed substitutions \thmref{semantics.lean}{semSubst\{Nil,Cons\}}}
\label{fig:sem:subst}
\end{figure}

\begin{definition}[Semantic typing] \thmref{semantics.lean}{semWt}
  A term $\ottnt{a}$ is semantically well typed with type $\ottnt{A}$ under context $\Gamma$,
  written $ \Gamma  \vDash  \ottnt{a}  :  \ottnt{A} $, iff for every $\sigma$ such that $ \sigma  \vDash  \Gamma $,
  there exist $\ottmv{i}, \ottnt{P}$ such that
  $ \mathopen{\llbracket}   \ottnt{A} [  \sigma  ]   \mathclose{\rrbracket}_{ \ottmv{i} } \searrow  \ottnt{P} $ and $  \ottnt{a} [  \sigma  ]   \in  \ottnt{P} $.
\end{definition}

The fundamental soundness theorem
states that syntactic typing implies semantic typing.
The cases corresponding to the rules in \cref{fig:typing:basic} are routine
by construction and inversion of \rref{I-Pi,I-Mty}~\citep{lr-pearl},
so we do not cover them all here.
Instead, we detail only the \rref*{I-Lam} case to highlight
where some of the above lemmas are used,
followed by the cases for the rules in \cref{fig:typing:univ}
that are unique to our system.
For concision, we skip steps involving massaging substitutions into the right shape.

\begin{theorem}[Soundness] \thmref{soundness.lean}{soundness} \label{thm:soundness}
  If $ \Gamma  \vdash  \ottnt{a}  \mathrel{:}  \ottnt{A} $, then $ \Gamma  \vDash  \ottnt{a}  :  \ottnt{A} $.
\end{theorem}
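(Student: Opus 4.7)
The plan is to proceed by induction on the typing derivation $\Gamma \vdash a : A$. For each rule, the goal is to fix an arbitrary $\sigma \vDash \Gamma$ and produce some $i, P$ with $\llbracket A[\sigma] \rrbracket_{i} \searrow P$ and $a[\sigma] \in P$. The cases for \textsc{Var}, \textsc{App}, \textsc{Mty}, \textsc{Abs}, and \textsc{Conv} are standard: \textsc{Var} unfolds the definition of $\sigma \vDash \Gamma$; \textsc{App} combines the two induction hypotheses with \cref{lem:lr:inv-pi}; \textsc{Mty} uses \rref{I-Mty} together with \rref{I-Univ}; \textsc{Abs} exploits the empty interpretation produced by the second hypothesis; and \textsc{Conv} uses \cref{lem:eq-conv} followed by conversion for the logical relation.

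The central structural case is \textsc{Lam}. From the three premises, the IHs give interpretations of $A[\sigma]$, of $\Pi x{:}A.B[\sigma]$, and of $B$ under every extension of $\sigma$. Applying \cref{lem:lr:inv-pi} to the second IH yields an explicit $P_{1}$ for the domain together with the function set description in goal form. To show $(\lambda x{:}A.b)[\sigma] \in P$, I would fix $y \in P_{1}$ and an interpretation $P_{2}$ of $B[\sigma, x\mapsto y]$, use \rref{I-Cons} to build $\sigma, x \mapsto y \vDash \Gamma, x{:}A$, then invoke the IH on $b$ to land $b[\sigma, x\mapsto y]$ in some interpretation of the codomain; by \nameref{lem:lr:func} this agrees with $P_{2}$, and \nameref{lem:lr:back} together with \nameref{lem:pars:cons} lets us transfer this to $(\lambda x{:}A.b)[\sigma] \; y$, which is what is required.

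For the rules in \cref{fig:typing:univ}, the recurring technique is to push each level term down to a concrete level using \cref{lem:lr:inv-univ,lem:lr:inv-lvl} and to reconcile multiple reductions via \nameref{lem:par:confl}. \textsc{Lvl} is immediate from \rref{I-Level<} since concrete levels substitute to themselves. For \textsc{Univ}, the IH on $k : \text{Level}<\ell$ gives $\ell[\sigma] \Rightarrow^\ast j$ and $k[\sigma] \Rightarrow^\ast j'$ with $j' < j$; \rref{I-Univ} at level $j$ plus \nameref{lem:lr:pars} produces the interpretation of $\mathcal{U}\ \ell[\sigma]$, and $\mathcal{U}\ k[\sigma]$ lies in it because $\mathcal{U}\ j'$ is interpretable at $j'$ (and hence at $j'$, the level demanded by \rref{I-Univ}). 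For \textsc{Level<}, the first IH with \nameref{lem:lr:inv-univ} produces a concrete level $j$ witnessing the interpretation of $\mathcal{U}\ k_{1}[\sigma]$, the second IH reduces $k_{0}[\sigma]$ to some concrete $j_{2}$, and backward closure of the type puts $\text{Level}<\ k_{0}[\sigma]$ into the universe set. \textsc{Trans} uses both IHs to reduce $k_{1}, k_{2}, k_{3}$ to concrete levels; confluence forces the two reductions of $k_{2}[\sigma]$ to share an endpoint, so transitivity of the metalevel order delivers the required inequality. \textsc{Cumul} is analogous: the two IHs plus confluence align the concrete level $j_{1}$ obtained from $k[\sigma]$ with the witness from \nameref{lem:lr:inv-univ}, after which \nameref{lem:lr:cumul} lifts the interpretation of $A[\sigma]$ to the higher level needed inside the new universe interpretation.

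The hardest step is clearly the \textsc{Lam} case, where the circularity between the definition of the Pi interpretation and the need to invoke the IH on the body under an extended substitution must be untangled; the inversion lemma \cref{lem:lr:inv-pi}, together with the functionality afforded by \nameref{lem:lr:func}, is what makes this go through. A secondary subtlety is ensuring, in the \textsc{Univ} and \textsc{Cumul} cases, that the chosen enclosing level in $\mathcal{U}\ \ell[\sigma]$ is strictly above the concrete level reached by $k[\sigma]$, which relies on the combination of inversion, confluence, and the cofinality assumption on the set of levels.
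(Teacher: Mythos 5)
Your proposal is correct and follows essentially the same route as the paper: induction on the typing derivation with a fixed $\sigma \vDash \Gamma$, the \textsc{Lam} case resolved via inversion on universes and function types plus functionality and backward closure under \rref{P-Beta}, and the universe/level cases handled by \cref{lem:lr:inv-univ,lem:lr:inv-lvl}, confluence with syntactic consistency of concrete levels, cumulativity, and cofinality. The only blemish is the garbled parenthetical in your \textsc{Univ} case (the set $\mathcal{U}\,k[\sigma]$ must be interpretable at the concrete level reached by $\ell[\sigma]$, not at its own level, and the enclosing level comes from cofinality), which your closing paragraph already acknowledges correctly.
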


\begin{proof}
  By induction on the typing derivation.
  In each case, we suppose that $ \sigma  \vDash  \Gamma $.
  \begin{itemize}[topsep=0pt]
    \item \textit{\Rref{Lam}.}
      The relevant premises are $ \Gamma  \vdash   \Pi  \ottmv{x}  \mathbin{:}  \ottnt{A}  \mathpunct{.}  \ottnt{B}   \mathrel{:}   \kw{U} \gap  \ottnt{k}  $ and $  \Gamma ,  \ottmv{x}  \mathbin{:}  \ottnt{A}   \vdash  \ottnt{b}  \mathrel{:}  \ottnt{B} $,
      concluding with $ \Gamma  \vdash   \lambda  \ottmv{x}  \mathbin{:}  \ottnt{A}  \mathpunct{.}  \ottnt{b}   \mathrel{:}   \Pi  \ottmv{x}  \mathbin{:}  \ottnt{A}  \mathpunct{.}  \ottnt{B}  $.
      By the induction hypothesis on the first premise,
      \cref{lem:lr:inv-univ}, and \cref{lem:lr:inv-pi},
      we have $ \mathopen{\llbracket}   \ottnt{A} [  \sigma  ]   \mathclose{\rrbracket}_{ \ottmv{i} } \searrow  \ottnt{P_{{\mathrm{1}}}} $, $ \mathopen{\llbracket}   \ottnt{B} [   \sigma ,  \ottmv{x}  \mapsto  \ottnt{a}   ]   \mathclose{\rrbracket}_{ \ottmv{i} } \searrow  \ottnt{P_{{\mathrm{2}}}} $, and $ \ottnt{a}  \in  \ottnt{P_{{\mathrm{1}}}} $,
      where the goal is now to show that $   (  \lambda  \ottmv{x}  \mathbin{:}  \ottnt{A}  \mathpunct{.}  \ottnt{b}  )   \gap  \ottnt{a}   \in  \ottnt{P_{{\mathrm{2}}}} $.
      By \rref{I-Cons} and the induction hypothesis on the second premise,
      we have $ \mathopen{\llbracket}   \ottnt{B} [   \sigma ,  \ottmv{x}  \mapsto  \ottnt{a}   ]   \mathclose{\rrbracket}_{ \ottmv{i'} } \searrow  \ottnt{P'_{{\mathrm{2}}}} $ and $  \ottnt{b} [  \ottmv{x}  \mapsto  \ottnt{a}  ]   \in  \ottnt{P'_{{\mathrm{2}}}} $ for some $\ottmv{i'}, \ottnt{P'_{{\mathrm{2}}}}$.
      By \nameref{lem:lr:func}, we have that $\ottnt{P_{{\mathrm{2}}}} = \ottnt{P'_{{\mathrm{2}}}}$.
      Finally, by \nameref{lem:lr:back} on \rref{P-Beta} and $  \ottnt{b} [  \ottmv{x}  \mapsto  \ottnt{a}  ]   \in  \ottnt{P_{{\mathrm{2}}}} $,
      we obtain $   (  \lambda  \ottmv{x}  \mathbin{:}  \ottnt{A}  \mathpunct{.}  \ottnt{b}  )   \gap  \ottnt{a}   \in  \ottnt{P_{{\mathrm{2}}}} $.
    \item \textit{\Rref{Univ}.}
      The premise is $ \Gamma  \vdash  \ottnt{k}  \mathrel{:}   \kw{Level}\texttt{<} \gap  \ell  $,
      concluding with $ \Gamma  \vdash   \kw{U} \gap  \ottnt{k}   \mathrel{:}   \kw{U} \gap  \ell  $.
      By the induction hypothesis and \cref{lem:lr:inv-lvl},
      we have $ \ottmv{i_{{\mathrm{1}}}}  <  \ottmv{i_{{\mathrm{2}}}} $ such that $  \ottnt{k} [  \sigma  ]   \Rightarrow^\ast   \ottmv{i_{{\mathrm{1}}}}  $ and $  \ell [  \sigma  ]   \Rightarrow^\ast   \ottmv{i_{{\mathrm{2}}}}  $.
      By cofinality, there must exist a $\ottmv{j}$ such that $ \ottmv{i_{{\mathrm{2}}}}  <  \ottmv{j} $.
      The goal is now to show that $ \mathopen{\llbracket}   \kw{U} \gap   (  \ell [  \sigma  ]  )    \mathclose{\rrbracket}_{ \ottmv{j} } \searrow   \lbrace  \ottmv{z}  \mid   \exists  \ottnt{P}  \mathpunct{.}   \mathopen{\llbracket}   \mathit{ \ottmv{z} }   \mathclose{\rrbracket}_{ \ottmv{i_{{\mathrm{2}}}} } \searrow  \ottnt{P}    \rbrace  $
      and $ \mathopen{\llbracket}   \kw{U} \gap   (  \ottnt{k} [  \sigma  ]  )    \mathclose{\rrbracket}_{ \ottmv{i_{{\mathrm{2}}}} } \searrow   \lbrace  \ottmv{z}  \mid   \exists  \ottnt{P}  \mathpunct{.}   \mathopen{\llbracket}   \mathit{ \ottmv{z} }   \mathclose{\rrbracket}_{ \ottmv{i_{{\mathrm{1}}}} } \searrow  \ottnt{P}    \rbrace  $.
      These are both constructed using \rref{I-Univ} and \cref{lem:lr:pars}.
    \item \textit{\Rref{Level<}.}
      The premises are $ \Gamma  \vdash   \kw{U} \gap  \ottnt{k_{{\mathrm{1}}}}   \mathrel{:}   \kw{U} \gap  \ell_{{\mathrm{1}}}  $ and $ \Gamma  \vdash  \ottnt{k_{{\mathrm{0}}}}  \mathrel{:}   \kw{Level}\texttt{<} \gap  \ell_{{\mathrm{0}}}  $,
      concluding with $ \Gamma  \vdash   \kw{Level}\texttt{<} \gap  \ottnt{k_{{\mathrm{0}}}}   \mathrel{:}   \kw{U} \gap  \ottnt{k_{{\mathrm{1}}}}  $.
      By the induction hypothesis on the first premise and \cref{lem:lr:inv-univ},
      $ \kw{U} \gap   (  \ottnt{k_{{\mathrm{1}}}} [  \sigma  ]  )  $ has an interpretation as a universe,
      so it remains to find a $\ottnt{P}$ such that $ \mathopen{\llbracket}   \kw{Level}\texttt{<} \gap   (  \ottnt{k_{{\mathrm{0}}}} [  \sigma  ]  )    \mathclose{\rrbracket}_{ \ottmv{j} } \searrow  \ottnt{P} $,
      where $  \ottnt{k_{{\mathrm{1}}}} [  \sigma  ]   \Rightarrow^\ast   \ottmv{j}  $.
      By the induction on the second premise and \cref{lem:lr:inv-lvl},
      we have $  \ottnt{k_{{\mathrm{0}}}} [  \sigma  ]   \Rightarrow^\ast   \ottmv{i}  $ for some $\ottmv{i}$.
      Then the goal is constructed using \rref{I-Level<} and \cref{lem:lr:pars}.
    \item \textit{\Rref{Lvl}.} Straightforward by construction using \rref{I-Level<}.
    \item \textit{\Rref{Trans}.}
      The premises are $ \Gamma  \vdash  \ottnt{k_{{\mathrm{1}}}}  \mathrel{:}   \kw{Level}\texttt{<} \gap  \ottnt{k_{{\mathrm{2}}}}  $ and $ \Gamma  \vdash  \ottnt{k_{{\mathrm{2}}}}  \mathrel{:}   \kw{Level}\texttt{<} \gap  \ottnt{k_{{\mathrm{3}}}}  $,
      concluding with $ \Gamma  \vdash  \ottnt{k_{{\mathrm{1}}}}  \mathrel{:}   \kw{Level}\texttt{<} \gap  \ottnt{k_{{\mathrm{3}}}}  $.
      By the induction hypotheses on the two premises and \cref{lem:lr:inv-lvl},
      we know that $  \ottnt{k_{{\mathrm{1}}}} [  \sigma  ]   \Rightarrow^\ast   \ottmv{i_{{\mathrm{1}}}}  $, $  \ottnt{k_{{\mathrm{2}}}} [  \sigma  ]   \Rightarrow^\ast   \ottmv{i_{{\mathrm{2}}}}  $, $  \ottnt{k_{{\mathrm{2}}}} [  \sigma  ]   \Rightarrow^\ast   \ottmv{i'_{{\mathrm{2}}}}  $, and $  \ottnt{k_{{\mathrm{3}}}} [  \sigma  ]   \Rightarrow^\ast   \ottmv{i_{{\mathrm{3}}}}  $
      such that $ \ottmv{i_{{\mathrm{1}}}}  <  \ottmv{i_{{\mathrm{2}}}} $ and $ \ottmv{i'_{{\mathrm{2}}}}  <  \ottmv{i_{{\mathrm{3}}}} $.
      By \nameref{lem:par:confl} and \nameref{lem:par:consistency},
      it must be that $\ottmv{i_{{\mathrm{2}}}} = \ottmv{i'_{{\mathrm{2}}}}$.
      From the second inversion, we already know that $ \kw{Level}\texttt{<} \gap   (  \ottnt{k_{{\mathrm{3}}}} [  \sigma  ]  )  $ has an interpretation,
      so it remains to show that $  \ottnt{k_{{\mathrm{1}}}} [  \sigma  ]   \Rightarrow^\ast   \ottmv{i_{{\mathrm{1}}}}  $ and $  \ottnt{k_{{\mathrm{3}}}} [  \sigma  ]   \Rightarrow^\ast   \ottmv{i_{{\mathrm{3}}}}  $ such that $ \ottmv{i_{{\mathrm{1}}}}  <  \ottmv{i_{{\mathrm{3}}}} $,
      which holds by transitivity.
    \item \textit{\Rref{Cumul}.}
      The premises are $ \Gamma  \vdash  \ottnt{A}  \mathrel{:}   \kw{U} \gap  \ottnt{k}  $ and $ \Gamma  \vdash  \ottnt{k}  \mathrel{:}   \kw{Level}\texttt{<} \gap  \ell  $,
      concluding with $ \Gamma  \vdash  \ottnt{A}  \mathrel{:}   \kw{U} \gap  \ell  $.
      By induction on the first premise and \cref{lem:lr:inv-univ},
      we have some $\ottnt{P}$ such that $ \mathopen{\llbracket}   \ottnt{A} [  \sigma  ]   \mathclose{\rrbracket}_{ \ottmv{i} } \searrow  \ottnt{P} $ and $  \ottnt{k} [  \sigma  ]   \Rightarrow^\ast   \ottmv{i}  $.
      By induction on the second premise and \cref{lem:lr:inv-lvl},
      we have some $ \ottmv{i'}  <  \ottmv{j} $ such that $  \ottnt{k} [  \sigma  ]   \Rightarrow^\ast   \ottmv{i'}  $ and $  \ell [  \sigma  ]   \Rightarrow^\ast   \ottmv{j}  $.
      By \nameref{lem:par:confl} and \nameref{lem:par:consistency},
      it must be that $\ottmv{i} = \ottmv{i'}$.
      By cofinality and \cref{lem:lr:pars},
      $ \kw{U} \gap   (  \ell [  \sigma  ]  )  $ has an interpretation as a universe.
      It remains to show that $ \mathopen{\llbracket}   \ottnt{A} [  \sigma  ]   \mathclose{\rrbracket}_{ \ottmv{j} } \searrow  \ottnt{P} $,
      which holds by \nameref{lem:lr:cumul} on $ \ottmv{i}  <  \ottmv{j} $.
      \qedhere
  \end{itemize}
\end{proof}

Consistency and canonicity results then follow from the fundamental theorem as corollaries.

\begin{corollary}[Consistency] \thmref{soundness.lean}{consistency}
  There is no $\ottnt{b}$ such that $  \cdot   \vdash  \ottnt{b}  \mathrel{:}   \bot  $ holds.
  If there were, by \nameref{thm:soundness},
  we get have $  \cdot   \vDash  \ottnt{b}  :   \bot  $.
  Instantiating with the identity substitution,
  then inverting on the interpretation of $ \bot $,
  we get $ \ottnt{b}  \in   \varnothing  $, which is a contradiction.
\end{corollary}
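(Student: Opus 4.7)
The plan is to prove the non-existence of a closed proof of $ \bot $ by contradiction, leveraging the fundamental soundness theorem (\cref{thm:soundness}) to move from the syntactic world into the semantic one where $ \bot $ has the obviously empty interpretation. So I would suppose, for contradiction, that there is some $\ottnt{b}$ with $  \cdot   \vdash  \ottnt{b}  \mathrel{:}   \bot  $, and invoke \nameref{thm:soundness} to obtain the semantic typing $  \cdot   \vDash  \ottnt{b}  :   \bot  $.

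Next, I would instantiate the semantic typing with the identity substitution $\sigma_{\mathsf{id}}$. Since the empty context is semantically inhabited by any substitution via \rref{I-Nil}, this premise is discharged trivially. The identity substitution acts as the identity on the closed terms $\ottnt{b}$ and $ \bot $, so unpacking the definition of semantic typing yields some $\ottmv{i}$ and $\ottnt{P}$ with $ \mathopen{\llbracket}   \bot   \mathclose{\rrbracket}_{ \ottmv{i} } \searrow  \ottnt{P} $ and $ \ottnt{b}  \in  \ottnt{P} $.

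It then remains to invert the interpretation of $ \bot $ to conclude $\ottnt{P} =  \varnothing $. Although no dedicated inversion lemma for $ \bot $ is listed among \cref{lem:lr:inv-univ,lem:lr:inv-lvl,lem:lr:inv}, this is a straightforward induction on $ \mathopen{\llbracket}   \bot   \mathclose{\rrbracket}_{ \ottmv{i} } \searrow  \ottnt{P} $: the \rref*{I-Mty} case directly gives $\ottnt{P} =  \varnothing $, and the \rref*{I-Step} case asks for some $\ottnt{B}$ with $  \bot   \Rightarrow  \ottnt{B} $, where by \rref{P-Mty} (or equivalently \nameref{lem:pars:inv}) we must have $\ottnt{B} =  \bot $, so the induction hypothesis applies. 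With $\ottnt{P} =  \varnothing $ in hand, the membership $ \ottnt{b}  \in   \varnothing  $ is absurd.

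I do not foresee any real obstacle: once soundness is available, consistency is immediate, because $ \bot $ is a value with a trivially empty semantic content. The only mildly delicate point is making the $ \bot $-inversion step precise, which amounts to combining \rref{I-Mty}, \rref{I-Step}, and \nameref{lem:pars:inv} exactly as in the proofs of the other inversion lemmas.
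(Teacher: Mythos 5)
Your proposal is correct and follows the paper's own argument essentially verbatim: apply the fundamental soundness theorem, instantiate with the identity substitution, invert the interpretation of $ \bot $ to get the empty set, and derive a contradiction from $ \ottnt{b}  \in   \varnothing  $. Your extra care in spelling out the $ \bot $-inversion via \rref{I-Mty}, \rref{I-Step}, and the fact that $ \bot $ only reduces to itself is exactly the implicit step the paper leaves to the reader.
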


\begin{corollary}[Canonicity of types] \thmref{soundness.lean}{canon$\mathcal{U}$} \label{lem:canon:univ}
  If $  \cdot   \vdash  \ottnt{C}  \mathrel{:}   \kw{U} \gap  \ottnt{k}  $,
  then $ \ottnt{C}  \Rightarrow^\ast   \Pi  \ottmv{x}  \mathbin{:}  \ottnt{A}  \mathpunct{.}  \ottnt{B}  $, $ \ottnt{C}  \Rightarrow^\ast   \kw{U} \gap   \ottmv{i}   $, $ \ottnt{C}  \Rightarrow^\ast   \kw{Level}\texttt{<} \gap   \ottmv{i}   $, or $ \ottnt{C}  \Rightarrow^\ast   \bot  $.
  By \nameref{thm:soundness},
  instantiating with the identity substitution,
  we have $\ottmv{j}, \ottnt{Q}$ such that $ \mathopen{\llbracket}   \kw{U} \gap  \ottnt{k}   \mathclose{\rrbracket}_{ \ottmv{j} } \searrow  \ottnt{Q} $ and $ \ottnt{C}  \in  \ottnt{Q} $.
  By inversion on the former,
  we have $\ottmv{i}, \ottnt{P}$ such that $ \ottnt{k}  \Rightarrow^\ast   \ottmv{i}  $ and $ \mathopen{\llbracket}  \ottnt{C}  \mathclose{\rrbracket}_{ \ottmv{i} } \searrow  \ottnt{P} $.
  Then the goal holds by \nameref{lem:lr:inv}.
\end{corollary}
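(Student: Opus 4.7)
The plan is to follow the standard recipe for obtaining canonicity from a logical relation. First, I would invoke the fundamental soundness theorem (\nameref{thm:soundness}) on the hypothesis $  \cdot   \vdash  \ottnt{C}  \mathrel{:}   \kw{U} \gap  \ottnt{k}  $ to upgrade it to the semantic typing statement $  \cdot   \vDash  \ottnt{C}  :   \kw{U} \gap  \ottnt{k}  $. Unfolding the definition of semantic typing, this says that for every $\sigma$ with $ \sigma  \vDash   \cdot  $ there exist $\ottmv{j}, \ottnt{Q}$ with $ \mathopen{\llbracket}   (   \kw{U} \gap  \ottnt{k}  ) [  \sigma  ]   \mathclose{\rrbracket}_{ \ottmv{j} } \searrow  \ottnt{Q} $ and $  \ottnt{C} [  \sigma  ]   \in  \ottnt{Q} $. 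The identity substitution satisfies the premise vacuously by \rref{I-Nil}, and leaves both $\ottnt{C}$ and $ \kw{U} \gap  \ottnt{k} $ unchanged, so we land at $ \mathopen{\llbracket}   \kw{U} \gap  \ottnt{k}   \mathclose{\rrbracket}_{ \ottmv{j} } \searrow  \ottnt{Q} $ together with $ \ottnt{C}  \in  \ottnt{Q} $.

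Next, I would apply \cref{lem:lr:inv-univ} (inversion on universes) to these two facts. This yields some $\ottmv{i}, \ottnt{P}$ with $ \ottnt{k}  \Rightarrow^\ast   \ottmv{i}  $ and, crucially, $ \mathopen{\llbracket}  \ottnt{C}  \mathclose{\rrbracket}_{ \ottmv{i} } \searrow  \ottnt{P} $, so that $\ottnt{C}$ itself now lies in the logical relation at level $\ottmv{i}$. The reduction of $\ottnt{k}$ is not used for the conclusion, but the interpretation of $\ottnt{C}$ at level $\ottmv{i}$ is exactly what feeds into the last step.

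Finally, the general inversion principle \cref{lem:lr:inv} applied to $ \mathopen{\llbracket}  \ottnt{C}  \mathclose{\rrbracket}_{ \ottmv{i} } \searrow  \ottnt{P} $ delivers precisely the desired four-way case split: $\ottnt{C}$ parallel-reduces either to $ \bot $, to some $ \Pi  \ottmv{x}  \mathbin{:}  \ottnt{A}  \mathpunct{.}  \ottnt{B} $, to some $ \kw{U} \gap  \ottmv{i}  $, or to some $ \kw{Level}\texttt{<} \gap  \ottmv{i}  $. I do not anticipate any real obstacle here: the proof is a direct chaining of the soundness theorem with the inversion lemmas already established in \cref{sec:lr}. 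The only point requiring any care is recognising that the identity substitution trivially validates $ \sigma  \vDash   \cdot  $, which is what permits the semantic typing statement to be instantiated in the first place.
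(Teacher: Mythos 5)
Your proposal is correct and matches the paper's own proof essentially step for step: apply \nameref{thm:soundness}, instantiate with the identity substitution (which trivially satisfies $ \sigma  \vDash   \cdot  $), invert with \cref{lem:lr:inv-univ} to get $ \mathopen{\llbracket}  \ottnt{C}  \mathclose{\rrbracket}_{ \ottmv{i} } \searrow  \ottnt{P} $, and conclude by \cref{lem:lr:inv}. No gaps.
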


\begin{corollary}[Canonicity of levels] \thmref{soundness.lean}{canonLvl}
  If $  \cdot   \vdash  \ottnt{k}  \mathrel{:}   \kw{Level}\texttt{<} \gap  \ell  $,
  then $ \ottnt{k}  \Rightarrow^\ast   \ottmv{i}  $ for some concrete level $\ottmv{i}$.
  By \nameref{thm:soundness},
  instantiating with the identity substitution,
  we have $\ottmv{j}, \ottnt{P}$ such that $ \mathopen{\llbracket}   \kw{Level}\texttt{<} \gap  \ell   \mathclose{\rrbracket}_{ \ottmv{j} } \searrow  \ottnt{P} $ and $ \ottnt{k}  \in  \ottnt{P} $.
  By inversion on the former,
  we have that $ \ell  \Rightarrow^\ast   \ottmv{i_{{\mathrm{2}}}}  $ and $ \ottnt{k}  \Rightarrow^\ast   \ottmv{i_{{\mathrm{1}}}}  $ such that $ \ottmv{i_{{\mathrm{1}}}}  <  \ottmv{i_{{\mathrm{2}}}} $.
\end{corollary}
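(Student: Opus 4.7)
The plan is to mirror the preceding canonicity-style corollaries by routing the typing judgement through \nameref{thm:soundness} and then reading off the desired concrete-level reduction from the inversion principle for $\kw{Level}\texttt{<}$ in the logical relation.

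First, I would apply soundness to the hypothesis $  \cdot   \vdash  \ottnt{k}  \mathrel{:}   \kw{Level}\texttt{<} \gap  \ell  $, obtaining the semantic typing $  \cdot   \vDash  \ottnt{k}  :   \kw{Level}\texttt{<} \gap  \ell  $. Instantiating this with the identity substitution---valid because $\mathrm{id} \vDash  \cdot $ holds vacuously by \rref{I-Nil}, and because applying the identity to closed terms leaves them syntactically unchanged---yields some $\ottmv{j}$ and $\ottnt{P}$ with $ \mathopen{\llbracket}   \kw{Level}\texttt{<} \gap  \ell   \mathclose{\rrbracket}_{ \ottmv{j} } \searrow  \ottnt{P} $ and $ \ottnt{k}  \in  \ottnt{P} $.

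Second, I would invoke \cref{lem:lr:inv-lvl} on these two facts. It produces concrete levels $\ottmv{i_{{\mathrm{1}}}} < \ottmv{i_{{\mathrm{2}}}}$ with $ \ell  \Rightarrow^\ast   \ottmv{i_{{\mathrm{2}}}}  $ and $ \ottnt{k}  \Rightarrow^\ast   \ottmv{i_{{\mathrm{1}}}}  $; taking $\ottmv{i} \coloneqq \ottmv{i_{{\mathrm{1}}}}$ immediately discharges the goal. Note that the bounding level $\ell$ and its interpretation level $\ottmv{j}$ play no role in the final conclusion---they are only used to obtain a well-formed interpretation from which the reduction of $\ottnt{k}$ can be extracted.

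I do not anticipate any real obstacle, and certainly no step harder than in the preceding canonicity results. The logical relation was set up so that the interpretation of $ \kw{Level}\texttt{<} \gap   \ottmv{j_{{\mathrm{1}}}}  $ is precisely the set of terms reducing to some concrete level strictly below $\ottmv{j_{{\mathrm{1}}}}$ (rule \rref{I-Level<}), so once soundness is available the corollary is essentially already contained in the statement of \cref{lem:lr:inv-lvl}. The overall structure is identical to the proof of \cref{lem:canon:univ} for canonicity of types, differing only in which inversion lemma is invoked at the final step.
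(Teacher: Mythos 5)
Your proposal matches the paper's argument exactly: apply \nameref{thm:soundness}, instantiate with the identity substitution to obtain $ \mathopen{\llbracket}   \kw{Level}\texttt{<} \gap  \ell   \mathclose{\rrbracket}_{ \ottmv{j} } \searrow  \ottnt{P} $ with $ \ottnt{k}  \in  \ottnt{P} $, and conclude by \cref{lem:lr:inv-lvl}. This is correct and is the same route the paper takes.
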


\section{Towards normalization} \label{sec:normalization}

One conventional way to prove normalization,
given that we already have a syntactic logical relation,
is to extend it from closed to open types and terms.
However, we have not yet found the correct interpretation for open universe types
that continues to satisfy the same properties we need
(inversion, conversion, cumulativity, functionality)
while being strong enough for the soundness proof to go through.

It is also unclear whether the issue is finding the correct semantic model,
or if normalization does not hold at all,
because it depends on the syntactic presentation:
if we remove type annotations from our type theory
and present it Curry-style, is not normalizing.
While directly declaring an ill-founded level $ \ottmv{x}  \mathbin{:}   \kw{Level}\texttt{<} \gap   \mathit{ \ottmv{x} }   $ is impossible,
we can construct such a level in an inconsistent context
using an unannotated $ \kw{absurd} \gap   \relax  $ eliminator.
Then it becomes possible to type the universe at this level as its own type.
\Cref{fig:type-in-type} explicitly constructs the key part of the typing derivation
for $ \kw{U} \gap   (  \kw{absurd} \gap   \mathit{ \ottmv{x} }   )   :  \kw{U} \gap   (  \kw{absurd} \gap   \mathit{ \ottmv{x} }   )  $ where $\ottmv{x} :  \bot $.
With an instance of type-in-type,
we can construct a nonnormalizing lambda term via
\eg Hurkens' paradox~\citep{hurkens}.

\begin{figure}
\begin{mathpar}
  \inferrule*[Left=\rref*{Univ}]{
  \inferrule*[Left=\rref*{Abs}]{
    \inferrule*[Left=\rref*{Level<}]{
      \inferrule*[Left=\rref*{Univ}]{
      \inferrule*[Left=\rref*{Lvl}]{0 < 1}{  \ottmv{x}  \mathbin{:}   \bot    \vdash   0   \mathrel{:}   \kw{Level}\texttt{<} \gap   1   }}
      {  \ottmv{x}  \mathbin{:}   \bot    \vdash   \kw{U} \gap   0    \mathrel{:}   \kw{U} \gap   1   }
      \and
      \inferrule*[Right=\rref*{Abs}]{
      \inferrule*[]{\dots}{  \ottmv{x}  \mathbin{:}   \bot    \vdash   \kw{Level}\texttt{<} \gap   0    \mathrel{:}   \kw{U} \gap   0   }
      \and
      \inferrule*[Right=\rref*{Var}]{ \ottmv{x}  \mathrel{:}   \bot   \in   \ottmv{x}  \mathbin{:}   \bot   }{  \ottmv{x}  \mathbin{:}   \bot    \vdash   \mathit{ \ottmv{x} }   \mathrel{:}   \bot  }}
      {  \ottmv{x}  \mathbin{:}   \bot    \vdash   \kw{absurd} \gap   \mathit{ \ottmv{x} }    \mathrel{:}   \kw{Level}\texttt{<} \gap   0   }}
      {  \ottmv{x}  \mathbin{:}   \bot    \vdash   \kw{Level}\texttt{<} \gap   (  \kw{absurd} \gap   \mathit{ \ottmv{x} }   )    \mathrel{:}   \kw{U} \gap   0   }
    \and
    \inferrule*[Right=\rref*{Var}]{ \ottmv{x}  \mathrel{:}   \bot   \in   \ottmv{x}  \mathbin{:}   \bot   }{  \ottmv{x}  \mathbin{:}   \bot    \vdash   \mathit{ \ottmv{x} }   \mathrel{:}   \bot  }}
    {  \ottmv{x}  \mathbin{:}   \bot    \vdash   \kw{absurd} \gap   \mathit{ \ottmv{x} }    \mathrel{:}   \kw{Level}\texttt{<} \gap   (  \kw{absurd} \gap   \mathit{ \ottmv{x} }   )   }}
  {  \ottmv{x}  \mathbin{:}   \bot    \vdash   \kw{U} \gap   (  \kw{absurd} \gap   \mathit{ \ottmv{x} }   )    \mathrel{:}   \kw{U} \gap   (  \kw{absurd} \gap   \mathit{ \ottmv{x} }   )   }
\end{mathpar}
\caption{Type-in-type in an inconsistent context}
\label{fig:type-in-type}
\end{figure}

The ability to assign different types to the term $ \kw{absurd} \gap   \mathit{ \ottmv{x} }  $ is key to constructing this derivation.
By requiring a type annotation that gets compared during definitional equality,
we can only construct a derivation for
$ \kw{U} \gap   (  \kw{absurd}_{  (  \kw{Level}\texttt{<} \gap   (  \kw{absurd}_{  (  \kw{Level}\texttt{<} \gap   0   )  } \gap   \mathit{ \ottmv{x} }   )   )  } \gap   \mathit{ \ottmv{x} }   )   :  \kw{U} \gap   (  \kw{absurd}_{  (  \kw{Level}\texttt{<} \gap   0   )  } \gap   \mathit{ \ottmv{x} }   )  $,
which cannot be used as type-in-type.
For similar reasons, we cannot use $\ottmv{x} :  \Pi  A  \mathbin{:}   \kw{U} \gap   \mathit{ i }    \mathpunct{.}   \mathit{ A }  $
to construct the ill-founded level, as the type arguments will be incomparable.
In contrast, type annotations have no influence on consistency,
as it remains provable via the logical relation on closed types
even when annotations are removed.

\section{Extensions} \label{sec:extensions}

Our type theory is intentionally minimal to focus only on the core necessities
of first-class levels and to keep the proof development small and uncluttered.
Some reasonable extensions include the remaining missing types from MLTT,
\ie dependent pairs, sums, naturals, propositional equality, and W types,
or general inductive types as in CIC~\citep{pcuic}.
However, these features and their difficulties are orthogonal from universes and levels.
Here, we instead look at extensions that augment how universes and levels behave,
some of which are validated by our current semantics,
and others which present additional challenges.

\subsection{Level operators and eliminators}

The only features missing from \lang that Agda has are a zeroth level,
a level successor operator, and a level maximum operator.
To justify them semantically,
we would impose the first two as additional existence conditions on the metalevel levels;
the third follows from the total ordering, which lets us pick the larger of two levels.

\vspace{-0.75\baselineskip}
\begin{mathpar}
  \inferrule[\ottdrulename{Zero}]
    { \Gamma  \vdash  \ottnt{k}  \mathrel{:}   \kw{Level}\texttt{<} \gap  \ell  }
    { \Gamma  \vdash   0   \mathrel{:}   \kw{Level}\texttt{<} \gap   (  \mathop{\uparrow}  \ottnt{k}  )   }
  \and
  \inferrule[\ottdrulename{Succ}]
    { \Gamma  \vdash  \ottnt{k}  \mathrel{:}   \kw{Level}\texttt{<} \gap  \ell  }
    { \Gamma  \vdash   \mathop{\uparrow}  \ottnt{k}   \mathrel{:}   \kw{Level}\texttt{<} \gap   (  \mathop{\uparrow}  \ell  )   }
  \and
  \inferrule[\ottdrulename{Max}]
    { \Gamma  \vdash  \ottnt{k_{{\mathrm{1}}}}  \mathrel{:}   \kw{Level}\texttt{<} \gap  \ell_{{\mathrm{1}}}   \and
      \Gamma  \vdash  \ottnt{k_{{\mathrm{2}}}}  \mathrel{:}   \kw{Level}\texttt{<} \gap  \ell_{{\mathrm{2}}}  }
    { \Gamma  \vdash   \ottnt{k_{{\mathrm{1}}}}  \sqcup  \ottnt{k_{{\mathrm{2}}}}   \mathrel{:}   \kw{Level}\texttt{<} \gap   (  \ell_{{\mathrm{1}}}  \sqcup  \ell_{{\mathrm{2}}}  )   }
\end{mathpar}

What complicates matters are the additional definitional equalities that
ensure that the maximum operator is idempotent, associative, commutative,
distributive with respect to successors,
and that $0$ is its identity element.
While these properties hold automatically at the metalevel for concrete levels,
they do not for arbitrary level expressions,
\eg $   0   \sqcup   \mathop{\uparrow}   (   \mathit{ \ottmv{x} }   \sqcup   \mathop{\uparrow}   \mathit{ \ottmv{x} }    )     \equiv   \mathop{\uparrow}   \mathop{\uparrow}   \mathit{ \ottmv{x} }    $.
Our notions of reduction then need to pick a direction for each equality
to reduce levels to some chosen canonical form.
We believe the mechanization to be doable but tedious.

Meanwhile, well-founded induction on levels already holds semantically,
as we need it to define our logical relation in the first place.
We can internalize it by syntactically introducing an eliminator $ \kw{wf} \gap   \relax  $ for levels,
which states that a predicate $\ottnt{B}$ holds on arbitrary levels
if we can show that it holds for a given level
when we know it holds for all smaller levels.
However, it is unclear whether such an eliminator would be useful.

\vspace{-0.75\baselineskip}
\begin{mathpar}
  \inferrule[ElimLvl]
    {  \Gamma ,  \ottmv{z}  \mathbin{:}   \kw{Level}\texttt{<} \gap  \ottnt{k}    \vdash  \ottnt{B}  \mathrel{:}   \kw{U} \gap  \ell   \\\\
      \Gamma  \vdash  \ottnt{b}  \mathrel{:}     \Pi  \ottmv{x}  \mathbin{:}   \kw{Level}\texttt{<} \gap  \ottnt{k}   \mathpunct{.}   (   \Pi  \ottmv{y}  \mathbin{:}   \kw{Level}\texttt{<} \gap   \mathit{ \ottmv{x} }    \mathpunct{.}  \ottnt{B}  [  \ottmv{z}  \mapsto   \mathit{ \ottmv{y} }   ]  )    \to  \ottnt{B}  [  \ottmv{z}  \mapsto   \mathit{ \ottmv{x} }   ]  }
    { \Gamma  \vdash   \kw{wf} \gap  \ottnt{b}   \mathrel{:}   \Pi  \ottmv{z}  \mathbin{:}   \kw{Level}\texttt{<} \gap  \ottnt{k}   \mathpunct{.}  \ottnt{B}  }
  \and
  \inferrule[E-ElimLvl]{~}{   \kw{wf} \gap  \ottnt{b}   \gap  \ottnt{k}   \equiv    \ottnt{b}  \gap  \ottnt{k}   \gap   (   \lambda  \ottmv{y}  \mathpunct{.}   \kw{wf} \gap  \ottnt{b}    \gap   \mathit{ \ottmv{y} }   )   }
\end{mathpar}

\subsection{Subtyping} \label{sec:subtyping}

Because levels are now terms,
subtyping necessarily involves typing to compare two levels.
In particular, a universe at a smaller level is a subtype of a one at a larger level,
while a level type bounded by a smaller level is a subtype of a one bounded by a larger level.
The former is already expressed by \rref{Cumul}, the latter by \rref{Trans}.
The additional benefit of subtyping making
function domains contravariant and codomains covariant with respect to subtyping.
Selected subtyping rules are given below,
along with an updated \rref{Conv'} rule.

\vspace{-0.75\baselineskip}
\begin{mathpar}
  \inferrule[\ottdrulename{S-Univ}]
    { \Gamma  \vdash  \ottnt{k}  \mathrel{:}   \kw{Level}\texttt{<} \gap  \ell  }
    { \Gamma  \vdash   \kw{U} \gap  \ottnt{k}   \preccurlyeq   \kw{U} \gap  \ell  }
  \and
  \inferrule[\ottdrulename{S-Level<}]
    { \Gamma  \vdash  \ottnt{k}  \mathrel{:}   \kw{Level}\texttt{<} \gap  \ell  }
    { \Gamma  \vdash   \kw{Level}\texttt{<} \gap  \ottnt{k}   \preccurlyeq   \kw{Level}\texttt{<} \gap  \ell  }
  \and
  \inferrule[\ottdrulename{S-Pi}]
    { \Gamma  \vdash  \ottnt{A_{{\mathrm{2}}}}  \preccurlyeq  \ottnt{A_{{\mathrm{1}}}}  \and
       \Gamma ,  \ottmv{x}  \mathbin{:}  \ottnt{A_{{\mathrm{1}}}}   \vdash  \ottnt{B_{{\mathrm{1}}}}  \preccurlyeq  \ottnt{B_{{\mathrm{2}}}} }
    { \Gamma  \vdash   \Pi  \ottmv{x}  \mathbin{:}  \ottnt{A_{{\mathrm{1}}}}  \mathpunct{.}  \ottnt{B_{{\mathrm{1}}}}   \preccurlyeq   \Pi  \ottmv{x}  \mathbin{:}  \ottnt{A_{{\mathrm{2}}}}  \mathpunct{.}  \ottnt{B_{{\mathrm{2}}}}  }
  \and
  \inferrule[\ottdrulename{S-Trans}]
    { \Gamma  \vdash  \ottnt{A}  \preccurlyeq  \ottnt{B}  \and
      \Gamma  \vdash  \ottnt{B}  \preccurlyeq  \ottnt{C} }
    { \Gamma  \vdash  \ottnt{A}  \preccurlyeq  \ottnt{C} }
  \and
  \inferrule[\ottdrulename{S-Conv}]
    { \ottnt{A}  \equiv  \ottnt{B} }
    { \Gamma  \vdash  \ottnt{A}  \preccurlyeq  \ottnt{B} }
  \and
  \inferrule[\ottdrulename{Conv'}]
    { \Gamma  \vdash  \ottnt{a}  \mathrel{:}  \ottnt{A}  \and
      \Gamma  \vdash  \ottnt{B}  \mathrel{:}   \kw{U} \gap  \ottnt{k}   \and
      \Gamma  \vdash  \ottnt{A}  \preccurlyeq  \ottnt{B} }
    { \Gamma  \vdash  \ottnt{a}  \mathrel{:}  \ottnt{B} }
\end{mathpar}

Although all of this subtyping behaviour holds semantically in our current model,
proving logical consistency is not so easy.
The simplicity of our logical relation relies on
the independence of definitional equality from typing,
along with its equivalence to conversion.
By introducing a subtyping judgement that depends on typing,
which in turn depends on subtyping, to prove consistency,
the logical relation would need to include a semantic notion of equality,
similar to the reducibility judgements used by Abel, \"Ohman, and Vezzosi~\citep{dec-conv}.

\section{Conclusion and future work} \label{sec:conclusion}

We have presented \lang, a type theory with first-class universe levels.
In contrast to existing work,
rather than level constraints being separate from the type of levels,
we combine them such that every level explicitly has a bound.
We have proven our type theory to be type safe,
and in particular that subject reduction holds.
This is in contrast to BCDE~\citep{univ-poly},
the only other formal syntactic system we know of
with universe level polymorphism beyond prenex polymorphism,
which violates subject reduction.
We have also proven our type theory to be logically consistent,
and therefore useable as a logic for writing proofs.

Proving normalization and decidability of type checking
is the next step in showing that our type theory is effectively type checkable
and thus has the potential to be a basis for theorem proving.
Whether the extended logical relation presented in \cref{sec:normalization}
can be repaired to prove normalization is unclear,
as is whether well-typed terms are normalizing at all.
Looking to existing work,
BCDE proposes allowing looping level constraints of the form $\ottnt{k} < \ottnt{k}$ to admit subject reduction,
but this would also permit type-in-type in a looping context and violate normalization.
Even so, we are hopeful that it holds,
as no issues with cumulative first-class levels have yet arisen in Agda.

Decidability of type checking does not hold straightforwardly from normalization,
as a type checking algorithm must incorporate the non--syntax-directed \rref{Trans,Cumul}.
It may be done separately via algorithmic subtyping,
but as seen in \cref{sec:subtyping},
a subtyping relation must depend on typing
to show that one level expression is strictly smaller than another.
The challenge lies in showing totality of a mutual typing--subtyping algorithm,
but if looping level bounds $\ottnt{k} :  \kw{Level}\texttt{<} \gap  \ottnt{k} $ are ruled out by normalization,
there is no reason to believe it would not be total.

\bibliographystyle{plainurl}
\bibliography{main.bib}

\end{document}